\documentclass[aps,prd,reprint,superscriptaddress,longbibliography,showpacs,nofootinbib]{revtex4-2}

\usepackage{graphicx}
\usepackage{svg}
\usepackage{dcolumn}
\usepackage{bm}

\usepackage[figurename=Figure]{caption}
\usepackage{ragged2e}
\DeclareCaptionJustification{plain}{\justifying}
\usepackage[percent]{overpic}
\usepackage{float}    
\usepackage{verbatim} 
\usepackage{amsmath}  
\usepackage{bbold}
\usepackage{upgreek} 
\usepackage{amssymb}  
\usepackage{svg}
\usepackage{enumitem} 
\usepackage{latexsym,epsfig}
\usepackage{subcaption}
\usepackage{stackrel}
\usepackage{mathtools}
\usepackage{amsthm}
\newtheorem{proposition}{Proposition}

\newtheorem{corollary}{Corollary}
\newtheorem{theorem}{Theorem}

\usepackage{placeins} 

\usepackage{xr}
\usepackage[colorlinks=true,breaklinks=true,allcolors=blue]{hyperref}
\usepackage[capitalize]{cleveref}
\usepackage{lipsum}
\usepackage{dsfont}
\usepackage{coffeestains}
\usepackage{physics}

\usepackage[scr=boondox]{mathalfa}
 
\usepackage[scr=boondox]{mathalfa}

\def\Tr{\ensuremath{{\operatorname{Tr}}}}
\def\Re{\ensuremath{{\operatorname{Re}}}}

\newcommand\rj[1]{ {\color{magenta} #1} } 

\newcommand{\be}{\begin{equation}}
	\newcommand{\ee}{\end{equation}}
\newcommand{\bea}{\begin{equation}\begin{aligned}}
		\newcommand{\eea}{\end{aligned}\end{equation}}
\newcommand{\ben}{\begin{enumerate}}
	\newcommand{\een}{\end{enumerate}}

\DeclareDocumentCommand{\nint}{ O{} O{} m }{\ensuremath{ \int_{\mbox{\scriptsize $#1$}}^{\mbox{\scriptsize$#2$}}\!\!\! \mbox{\small $\,\mathrm{d}#3$\! }}}

\usepackage{tikz,bm} 
\usepackage{circuitikz}
\usetikzlibrary{decorations.markings}
\usetikzlibrary{decorations.pathreplacing,calligraphy}

\definecolor{mycolor}{rgb}{1,0.2,0.3}
\definecolor{brightgreen}{rgb}{0.4, 1.0, 0.0}
\definecolor{britishracinggreen}{rgb}{0.0, 0.26, 0.15}
\definecolor{cadmiumgreen}{rgb}{0.0, 0.42, 0.24}
\definecolor{ceruleanblue}{rgb}{0.16, 0.32, 0.75}
\definecolor{darkelectricblue}{rgb}{0.33, 0.41, 0.47}
\definecolor{darkpowderblue}{rgb}{0.0, 0.2, 0.6}
\definecolor{darktangerine}{rgb}{1.0, 0.66, 0.07}
\definecolor{emerald}{rgb}{0.31, 0.78, 0.47}
\definecolor{palatinatepurple}{rgb}{0.41, 0.16, 0.38}
\definecolor{pastelviolet}{rgb}{0.8, 0.6, 0.79}

\begin{document}

	\title{Many-Body Protection of Topological Edge Memory in Strong Interacting Quenches}
	\author{Yuxiao Hang}
	\email{yhang@usc.edu}
	\affiliation{%
		Department of Physics and Astronomy, University of Southern California, Los Angeles, CA 90089-0484, USA
	}
	\author{Stephan Haas}
	\email{shaas@usc.edu}
	\affiliation{%
		Department of Physics and Astronomy, University of Southern California, Los Angeles, CA 90089-0484, USA
	}
	\author{Rishabh Jha}
	\email{rishabh.jha@usc.edu}
	\affiliation{%
		Department of Physics and Astronomy, University of Southern California, Los Angeles, CA 90089-0484, USA
	}
	
	%
	

\begin{abstract}
Quantum quenches drive edge states far from equilibrium, yet whether the memory of a topological initial state survives in a non-integrable, interacting system has remained largely unexplored. We study this question in the bond-alternating XXZ chain---an interacting Su--Schrieffer--Heeger model hosting symmetry-protected topological edge modes with markedly enhanced boundary magnetization---and analyze quenches across all combinations of single-particle and many-body initial and final Hamiltonians. The results organize by a single distinction as we rigorously establish in this work: whether the post-quench Hamiltonian is free or genuinely interacting. For a free post-quench Hamiltonian, the dynamics is solved exactly by a correlation-matrix approach; the boundary-mode return amplitude decays as $t^{-3/2}$, and initial interactions enter only through a dressed one-body density matrix. For a genuinely interacting post-quench Hamiltonian, finite-time stability bounds prove that away from local resonances the first-dimer magnetization remains stable on time windows growing as arbitrarily large powers of the inverse inter-dimer coupling. Matrix product state simulations across all four protocols show that interactions in the final Hamiltonian markedly extend finite-time boundary memory---with local suppression near the isotropic $SU(2)$ point---revealing a many-body protection mechanism in a non-integrable system where scrambling would otherwise wash out initial-state memory fast.
\end{abstract}

	\maketitle
	
	
	\section{Introduction}
	\label{sec:intro}
	
	Quantum quenching---preparing a system in the ground state of an initial Hamiltonian $H_i$ and then suddenly switching to a new Hamiltonian $H_f$---has become a central paradigm for probing non-equilibrium dynamics in quantum many-body physics~\cite{PhysRevLett.96.136801,Calabrese_2007,Eisert2015,Polkovnikov2011}. Because the subsequent evolution is unitary and the total energy is conserved, an isolated many-body system is generically expected to thermalize, with local observables relaxing to values described by the eigenstate thermalization hypothesis~\cite{Deutsch1991,Srednicki1994,Rigol2008}. Exact results in conformal field theory and integrable models have characterized this process through entanglement entropy and correlation functions~\cite{Calabrese_2005,Pasquale_Calabrese_2004,Calabrese_2011,Calabrese_2016,PhysRevA.89.013609,10.21468/SciPostPhysLectNotes.20,PhysRevB.109.224308}: in integrable systems, quasi particle propagation produces an initial linear growth of entanglement entropy that eventually saturates to a volume-law plateau described by a generalized Gibbs ensemble~\cite{Calabrese_2005,10.21468/SciPostPhys.4.3.017,Vidmar_2016}, while in non-integrable or chaotic systems entanglement grows without bound~\cite{Mezei2017,PhysRevLett.111.127205,PhysRevLett.94.097203}, reflecting efficient scrambling of quantum information throughout the system, a picture that has been extensively corroborated by numerical studies of strongly interacting chains~\cite{Jha2025Jun,10.21468/SciPostPhys.4.3.017,10.21468/SciPostPhys.5.4.033}.
	
	A central question is whether and how the memory of the initial state can survive long-time unitary evolution in a non-integrable system. Many-body localization (MBL), driven by strong disorder, is the best-understood mechanism for preventing thermalization: even within the localized phase, entanglement grows logarithmically in time due to dephasing through emergent local integrals of motion~\cite{BASKO20061126,PhysRevB.75.155111,Nandkishore2015,PhysRevB.90.064201}. Beyond disorder-driven localization, quantum many body scars~\cite{Turner2018,Ho2019,Huang2025Aug} and Hilbert space fragmentation~\cite{Sala2020} have shown that non-thermal behavior and long-lived initial-state memory can emerge in disorder-free systems~\cite{PhysRevX.10.021051}, and recent work on periodically driven chains has revealed non-equilibrium entanglement transitions entirely invisible to local observables~\cite{Gadge2026Mar}. These findings have stimulated broad interest in identifying non-ergodic phenomena in clean interacting systems, where the interplay between topology and many-body interactions may offer a qualitatively distinct route to memory protection.
	
	Symmetry-protected topological phases host edge modes localized at the system boundaries that are robust against local perturbations respecting the protecting symmetry~\cite{RevModPhys.82.3045,ShunQingshen2012,RevModPhys.83.1057}. Early studies on the non-interacting Kitaev chain and Su--Schrieffer--Heeger (SSH) model established that quenching from a topologically non-trivial initial ground state to parameters whose ground state lies on the trivial side of the equilibrium phase diagram leads to collapse and revival of edge-state weight~\cite{PhysRevE.93.062117,GhoshMartin2023,Rossi2022Jan}, while quenches in two-dimensional topological systems---including the Bernevig-Hughes-Zhang (BHZ) model and the toric code---have revealed collapse and revival of edge currents and topological R\'{e}nyi entropy dynamics~\cite{Patel2013,PhysRevA.80.060302,PhysRevLett.110.170605}. Particularly relevant to the present work, Lee and Song established in a non-interacting setting that quenched topological boundary modes can persist unexpectedly long even after the bulk is driven to parameters whose equilibrium ground state is topologically trivial~\cite{Lee2021Jun}. However, these investigations were almost entirely confined to integrable or non-interacting settings, leaving the fate of topological edge modes in non-integrable, genuinely interacting quantum many-body systems essentially unexplored.
	
	The bond-alternating XXZ spin chain is a canonical one-dimensional model hosting a symmetry-protected topological phase with robust edge magnetization~\cite{PhysRevB.110.165145,vgnz-wcfq,Tzeng2016,Qiang_2013}. Via the Jordan--Wigner transformation it maps onto an interacting spinless-fermion model with bond-modulated hopping and nearest-neighbor density--density interactions; in the non-interacting limit $J_z=0$ it reduces exactly to the SSH free-fermion chain~\cite{PhysRevLett.42.1698,ShunQingshen2012,PhysRevB.101.235155}, whose structural motif is realized in condensed matter, photonic, and programmable quantum-simulation platforms~\cite{english2026topologicallyprotectedspatiallylocalized,sompet2022nature,deleseleuc2019science,Katz2025Oct,splitthoff2024prresearch}. For nonzero $J_z$ the phase diagram is considerably richer: classical antiferromagnetic and ferromagnetic phases emerge at large $|J_z|$~\cite{PhysRevB.110.165145,vgnz-wcfq}, the Jordan--Wigner image becomes genuinely interacting and non-integrable, and the edge magnetization in the topological phase is markedly enhanced relative to all bulk sites~\cite{vgnz-wcfq}, providing a sharp and experimentally accessible probe of topological order. Whether this edge magnetization survives a quantum quench in the presence of many-body interactions---a question that directly probes the competition between topological protection and interaction-driven thermalization---has so far remained open.

In this work we establish that post-quench boundary memory is not only survivable but can be \emph{enhanced} by many-body interactions in the final Hamiltonian, through exact analytical results, rigorous finite-time stability bounds, and large-scale time-evolving block-decimation (TEBD) simulations. The entire analysis organizes by a single distinction: whether the post-quench Hamiltonian is free or genuinely interacting. A self-contained summary of the principal results is given in Sec.~\ref{sec:results}, before the technical developments.

The remainder of this paper is organized as follows. Section~\ref{sec:model} introduces the model, its fermionic formulation, and the quench protocol, and provides a self-contained summary of the principal results. Section~\ref{sec:single} presents exact analytical results for the non-interacting limit $J_z=0$, based on the correlation-matrix approach and benchmarked against exact diagonalization, establishing the $t^{-3/2}$ relaxation scale that serves as the reference for the interacting problem. Section~\ref{sec:manybody} develops the interacting analysis, deriving the finite-time stability bounds, identifying the leakage mechanism, proving the asymptotic first-dimer-memory statement, and combining these analytical results with TEBD simulations across all four quench protocols. Section~\ref{sec:conclusion} summarizes the main results and discusses possible extensions. The Supplemental Material (SM) serves as a companion document containing detailed proofs and additional benchmarking material underlying the analytical and numerical results presented here.
	
	\section{Model and Quench Protocol}
	\label{sec:model}
	
	We study the bond-alternating XXZ spin chain with Hamiltonian
	\begin{equation}
		\begin{aligned}
			H =\;& \sum_{n\in A}\!\Bigl(J_1^{xy}\bigl(\sigma_n^x\sigma_{n+1}^x
			+\sigma_n^y\sigma_{n+1}^y\bigr)+J_1^z\sigma_n^z\sigma_{n+1}^z\Bigr) \\
			&+\sum_{n\in B}\!\Bigl(J_2^{xy}\bigl(\sigma_n^x\sigma_{n+1}^x
			+\sigma_n^y\sigma_{n+1}^y\bigr)+J_2^z\sigma_n^z\sigma_{n+1}^z\Bigr),
			\label{eq:Ham}
		\end{aligned}
	\end{equation}
	where $A$-bonds connect intracell pairs $(2m-1,2m)$ and $B$-bonds connect intercell pairs $(2m,2m+1)$, $m=1,\ldots,N/2$. We parametrize the couplings as
	\begin{equation}
		\begin{aligned}
			J_1^{xy}= t(1+\delta),&\quad J_2^{xy}=t(1-\delta),\\
			J_1^z   = J_z(1+\delta),&\quad J_2^z   =J_z(1-\delta),
			\label{eq:params}
		\end{aligned}
	\end{equation}
with $t=1$ setting the energy scale, $\delta\in(-1,1)$ the bond-dimerization, and $J_z$ the interaction strength. This parametrization keeps the intracell-to intercell ratio identical in both the $XY$ and $Z$ sectors, so $\delta$ alone drives the topological transition while $J_z$ independently tunes the interaction. At $J_z=1$ every bond carries full $SU(2)$ spin rotation symmetry. A schematic of the model and its equilibrium phase diagram are shown in Fig.~\ref{fig:model_and_phase}.
	
\begin{figure}[t]
	\centering
	\begin{subfigure}[b]{\columnwidth}
		\centering
		\includegraphics[width=0.9\linewidth]{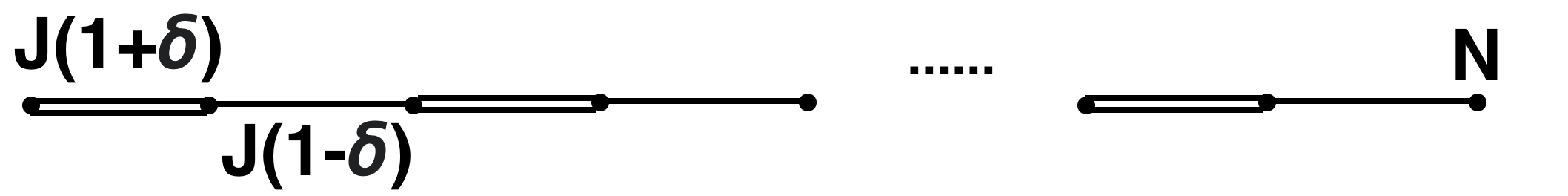}
		\caption{Schematic of the bond-alternating XXZ chain with $N$ sites and open boundary conditions. Double (single) bonds denote the  intracell (intercell) couplings $J_1^{xy,z}$ ($J_2^{xy,z}$) controlled by the dimerization $\delta$.}
		\label{fig:model}
	\end{subfigure}
	\par\vspace{6pt}
	\begin{subfigure}[b]{\columnwidth}
		\centering
		\includegraphics[width=\columnwidth]{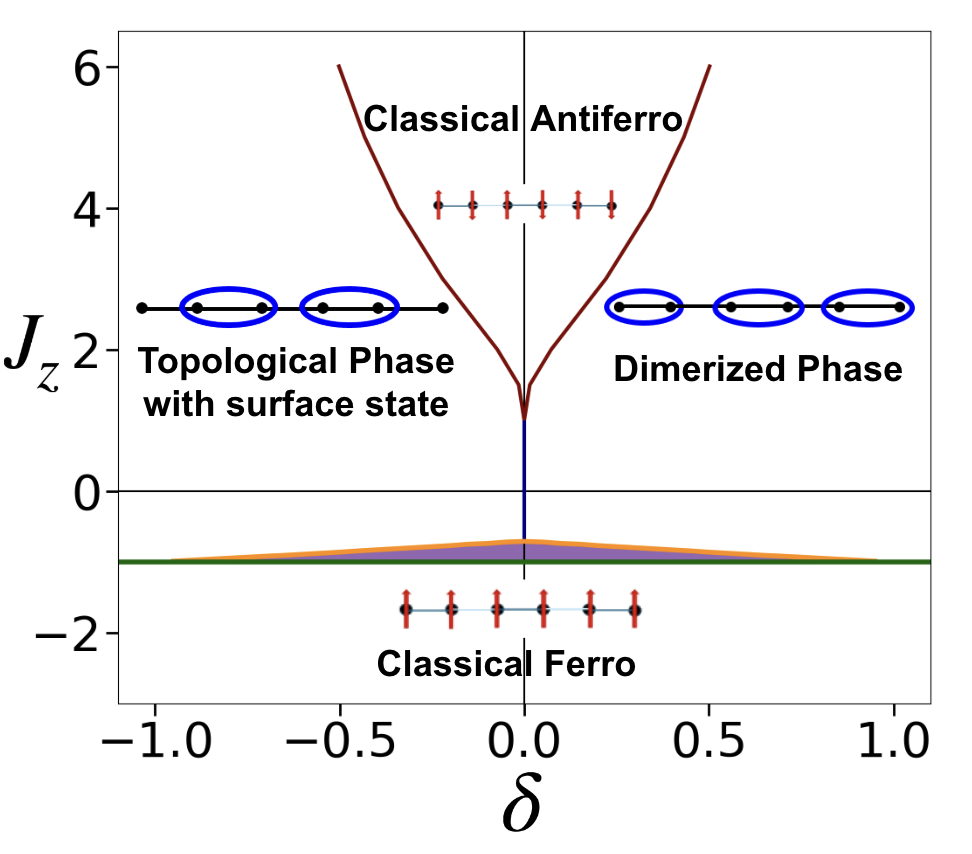}
		\caption{Equilibrium ground-state phase diagram of the bond-alternating XXZ chain~\cite{vgnz-wcfq} in the $(\delta,J_z)$ plane. For $\delta<0$ and moderate $J_z$ the chain is in the topologically non-trivial phase~\cite{PhysRevLett.42.1698,ShunQingshen2012}, which hosts a pair of zero-energy edge modes with markedly enhanced boundary magnetization; the gap closes at $\delta=0$; $\delta>0$ is topologically trivial. At $J_z=0$ the model reduces to the Su--Schrieffer--Heeger (SSH) chain. Increasing $J_z>0$ at small $|\delta|$ drives the system into a classical antiferromagnetic (AFM) phase; large $J_z<0$ drives it into a ferromagnetic (FM) phase~\cite{vgnz-wcfq,Tzeng2016,Qiang_2013}. All initial states studied here are ground states prepared deep in the topological phase ($\delta_i=-0.95$, varying $J_z^i$); the system is then quenched to a final Hamiltonian at $\delta_f=+0.95$, varying $J_z^f$. Since the post-quench evolution is unitary and far from equilibrium, the final state does not correspond to any ground state in this diagram.}
		\label{fig:phasediagram}
	\end{subfigure}
	\caption{Model and equilibrium phase diagram. (a) Chain geometry. (b) Ground-state phase diagram; parameters used in this work are indicated.}
	\label{fig:model_and_phase}
\end{figure}
	
	In the non-interacting limit $J_z=0$ the Jordan- Wigner transformation gives
	\begin{equation}
		J_\ell^{xy}\!\left(\sigma_j^x\sigma_{j+1}^x
		+\sigma_j^y\sigma_{j+1}^y\right)
		=2J_\ell^{xy}\!\left(c_j^\dagger c_{j+1}+c_{j+1}^\dagger c_j\right)
		\quad(\ell=1,2),
		\label{eq:jw_xy_factor}
	\end{equation}
	so the SSH hopping amplitudes are $J_1=2J_1^{xy}$ and $J_2=2J_2^{xy}$. For nonzero $J_z$ the Jordan--Wigner image acquires nearest-neighbor density--density interactions $\propto J_j^z n_j n_{j+1}$, making the system genuinely interacting and non-integrable~\cite{PhysRevB.110.165145,vgnz-wcfq}. The resulting equilibrium phase diagram is shown in Fig.~\ref{fig:phasediagram} and discussed there; within the topological phase, $\langle S_1^z\rangle$ (and equivalently $\langle S_N^z\rangle$) is markedly enhanced relative to all bulk sites~\cite{vgnz-wcfq}, making it the natural observable for tracking topological boundary memory far from equilibrium.

	\textit{Quench Protocol.---}
	We work with open boundary conditions and fix $S_{\rm tot}^z=0$ throughout, which corresponds to half-filling under the Jordan--Wigner transformation and ensures consistent comparisons across all values of $J_z^i$ and $J_z^f$. The dimerization is fixed at $|\delta|=0.95$ to place both Hamiltonians deep in their respective equilibrium phases and suppress finite-size corrections. The protocol prepares the ground state $ \Psi_0\rangle$ of $H_i=H(\delta_i,J_z^i)$ and evolves it under $H_f=H(\delta_f,J_z^f)$:
	\begin{equation}
		H_i|\Psi_0\rangle=E_0|\Psi_0\rangle,\qquad
		|\Psi(t)\rangle=e^{-iH_f t}|\Psi_0\rangle.
	\end{equation}
	We always prepare the initial state in the topological phase ($\delta_i=-0.95$) and evolve with a final Hamiltonian at $\delta_f=+0.95$, varying $J_z^i$ and $J_z^f$ independently. Here $\delta_i<0$ and $\delta_f>0$ refer to the equilibrium phases of $H_i$ and $H_f$; since the post-quench evolution is unitary and far from equilibrium, the time-evolved state carries no topological classification in that sense. Quenches in the opposite direction carry no initial edge magnetization and do not address topological memory; they are not considered here.
	
	We establish in this work that the four quench types fall into two qualitative classes determined by whether the post-quench Hamiltonian is free or genuinely interacting:
	\begin{itemize}
		\item \textbf{Free post-quench dynamics} (SP$\to$SP and MB$\to$SP): time evolution is governed by a quadratic Hamiltonian and is analytically controlled within the correlation-matrix framework, with initial interactions entering only through a dressed one-body density matrix.
		\item \textbf{Genuinely interacting post-quench dynamics} (SP$\to$MB and MB$\to$MB): the final Hamiltonian is many-body; analytic control is provided by the strongly dimerized limit, and TEBD simulations show that interactions in the final Hamiltonian can markedly enhance finite-time boundary memory relative to the free reference.
	\end{itemize}

	\subsection*{Summary of Results}
	\label{sec:results}
	
	Figure~\ref{fig:quench-schematic} shows the four quench protocols schematically; the two qualitative classes identified above and established in this work organize the entire analysis.
	
	\begin{figure}[t]
		\centering
		\begin{tikzpicture}[>=stealth,every node/.style={font=\small}]
			\node at (0,2.2)  {Initial};
			\node at (4,2.2)  {Final};
			\node[draw,rounded corners,minimum width=1.7cm,minimum height=0.9cm,
			fill=gray!10] (ISP) at (0, 1.2) {$J_z^i=0$};
			\node[draw,rounded corners,minimum width=1.7cm,minimum height=0.9cm,
			fill=gray!10] (IMB) at (0,-0.2) {$J_z^i\neq0$};
			\node[draw,rounded corners,minimum width=1.9cm,minimum height=0.9cm,
			fill=gray!10] (FSP) at (4, 1.2) {$J_z^f=0$};
			\node[draw,rounded corners,minimum width=1.9cm,minimum height=0.9cm,
			fill=gray!10] (FMB) at (4,-0.2) {$J_z^f\neq0$};
			\draw[->,thick] (ISP) -- node[midway,above]
			{\scriptsize SP$\to$SP} (FSP);
			\draw[->,thick] (ISP) -- node[pos=0.02,above,sloped,anchor=south west]
			{\scriptsize SP$\to$MB} (FMB);
			\draw[->,thick] (IMB) -- node[pos=0.02,below,sloped,anchor=north west]
			{\scriptsize MB$\to$SP} (FSP);
			\draw[->,thick] (IMB) -- node[midway,below]
			{\scriptsize MB$\to$MB} (FMB);
			\node[anchor=north] at (2,-1.0)
			{Topological initial phase $\delta_i<0$
				$\to$ final Hamiltonian at $\delta_f>0$};
		\end{tikzpicture}
		\caption{Schematic of the four quench protocols. Each arrow represents a strong quench from a topological initial ground state at $\delta_i=-0.95$ to dynamics generated by a final Hamiltonian at $\delta_f=0.95$, with SP or MB character determined by whether $J_z^{i,f}$ vanishes. The labels $\delta_i$ and $\delta_f$ refer to the equilibrium phase diagram of the Hamiltonians, not to a topological classification of the time-evolved state.}
		\label{fig:quench-schematic}
	\end{figure}
	
	\paragraph{Free post-quench class (SP$\to$SP and MB$\to$SP).}
	In the SP$\to$SP quench the dynamics is solved exactly by a correlation-matrix approach, benchmarked against exact diagonalization to machine precision. The boundary-mode return amplitude decays asymptotically as $t^{-3/2}$, with a dimerization-dependent envelope time and beat period computed analytically; this fixes the intrinsic relaxation scale of the free boundary dynamics and provides the quantitative reference for all other protocols.
	
	In the MB$\to$SP quench we prove that the edge dynamics is exactly governed by the free post-quench Hamiltonian together with a dressed one-body density matrix, derive a spectral representation and a uniform-in-time operator-norm bound on the deviation from the SP$\to$SP reference, and validate these statements by comparing full many-body and reduced one-body evolution, by the invariance of oscillation frequencies under changes of $J_z^i$, and by agreement with the diagonal ensemble at long times.
	
	\paragraph{Genuinely interacting post-quench class
		(SP$\to$MB and MB$\to$MB).}
	In the SP$\to$MB quench the final Hamiltonian decomposes into decoupled dimers plus a weak inter-dimer perturbation, and we derive finite-time stability bounds that perturbatively control the departure from the exactly dimerized limit. A continuity equation identifies the leakage mechanism: first-dimer magnetization is lost through the weak XY current across the boundary bond, while the ZZ interaction does not directly transport $z$ magnetization across that interface. Away from local resonances, a finite-order boundary normal-form argument shows that the first-dimer magnetization can be dressed into an almost conserved boundary charge whose lifetime grows as an arbitrarily large power of the inverse inter-dimer coupling $1/b$, giving precise meaning to arbitrarily long boundary memory without claiming the absence of eventual thermalization at fixed $b$. TEBD simulations show that final interactions enhance the time-averaged edge magnetization over the accessible window, with a non-monotonic $J_z^f$ dependence arising from the isotropic leakage resonance at $J_z^f=1$ and from channel-dependent perturbative effects at small $J_z^f$; crucially, even at the resonance the memory exceeds the SP reference. The MB$\to$MB quench confirms that these structural bounds and the memory enhancement persist when both preparation and post-quench evolution are many-body, with $J_z^i$ dressing the initial boundary correlations quantitatively while the dominant enhancement---controlled by $J_z^f$---remains intact.
	
	\section{Non-Interacting Limit: Exact Analytical Results}
	\label{sec:single}
	Before turning to the interacting problem, we treat the non-interacting ($J_z = 0$) limit, in which the bond-alternating XXZ chain maps under the Jordan-Wigner transformation to the free spinless-fermion SSH chain, as a controlled problem in its own right and derive a set of exact results that will also anchor the many-body discussion. As in the rest of this work, this section is formulated in the fixed total zero-magnetization sector, $S_{\rm tot}^z = 0$, which is equivalent after the Jordan--Wigner transformation to half-filling. Specifically, we formulate the quench dynamics in terms of the correlation matrix, obtain exact expressions for the edge magnetization and connected site autocorrelation function, and prove an exact integral representation and asymptotic $t^{-3/2}$ decay law for the boundary-mode return amplitude, together with its dimerization-dependent envelope and beat scales. These results furnish a self-contained description of the integrable boundary dynamics and, at the same time, a quantitatively controlled baseline for the interacting quenches studied in Sec.~\ref{sec:manybody}. In this free-fermion setting, the appropriate physical language is relaxation to the free diagonal ensemble rather than thermalization.

	\subsection{Correlation Matrix Approach}
	\label{sec:corrmat}
	In the non-interacting limit, the half-filled many-body ground state is a Slater determinant,
	\begin{equation}
		|\Psi_0\rangle = \prod_{\ell=1}^{N/2} d_\ell^\dagger |{\rm vac}\rangle,
		\label{eq:slater}
	\end{equation}
	where $d_\ell^\dagger$ creates a fermion in the $\ell$-th occupied single-particle eigenstate of the initial Hamiltonian $H_i$. We reserve $c_n^\dagger$ and $c_n$ for site-basis operators throughout, so that the correlation-matrix formulas below do not overload notation.
	All equal-time expectation values of products of fermionic operators can be computed exactly from the one-body correlation matrix~\cite{Ingo_Peschel_2003,Peschel_2009},
	\begin{equation}
		C_{mn}(t) = \langle \Psi(t) | c_m^\dagger c_n | \Psi(t) \rangle,
		\label{eq:corrmat}
	\end{equation}
	whose time evolution under the final Hamiltonian $H_f$ is given by
	\begin{equation}
		C(t) = U(t)\, C(0)\, U^\dagger(t),
		\label{eq:corrmat_time}
	\end{equation}
	where $U(t) = e^{-i H_f^{(1)}t}$ is the single-particle time-evolution matrix and $H_f^{(1)}$ (equivalently, also denoted by $h_f$) is the single-particle Hamiltonian matrix of $H_f$.
	Via the Jordan--Wigner transformation, the local spin expectation value at site $n$ is
	\begin{equation}
		\langle S_n^z(t) \rangle = C_{nn}(t) - \frac{1}{2},
		\label{eq:mag}
	\end{equation}
	where $S_n^z = \sigma_n^z/2$. Thus the edge magnetization is directly read off from the diagonal of $C(t)$.
	Higher-order correlation functions, such as the site spin autocorrelator, can be expressed similarly in terms of $C(t)$; a detailed derivation and a benchmark against exact diagonalization are provided in the Supplemental Material (SM)~\ref*{app:single_particle_benchmarks}.
	
	\subsection{Exact Decay of the Boundary-Mode Return Amplitude: SP $\to$ SP}
	\label{sec:theorem1}
	We now derive an exact closed-form expression for the return amplitude associated with the initial left boundary zero mode in the non-interacting ($J_z = 0$) case and extract its asymptotic decay law.
	The key quantity is the boundary return amplitude
	\begin{equation}
		A_L(t) \equiv \langle A_1 | e^{-i H_f^{(1)} t} | \phi_L \rangle,
		\label{eq:AL_def}
	\end{equation}
	where $H_f^{(1)}$ is the single-particle Hamiltonian matrix of the final Hamiltonian $H_f$ defined in Eq.~\eqref{eq:corrmat_time}, $|\phi_L\rangle$ is the initial left boundary zero mode, and $|A_1\rangle$ is the one-particle basis state localized on the leftmost $A$ site.
	Thus $A_L(t)$ is a transition amplitude for the initial left boundary mode to return to the left boundary site at time $t$, not itself an observable.
	If the initial one-body density matrix is decomposed into the rank-one contribution $|\phi_L\rangle\langle\phi_L|$ and the remaining occupied modes, then $|A_L(t)|^2$ is exactly the contribution of that boundary-mode component to the density on site $1$.
	The observables are the site density $n_1(t)$ and the edge magnetization $S_1^z(t) = n_1(t) - \frac{1}{2}$, which contain this contribution together with contributions from the other initially occupied modes.
	\begin{theorem}[Exact decay of the boundary-mode return amplitude, SP $\to$ SP]
		\label{thm:free_decay}
		Let the initial topological Hamiltonian have SSH one-body hopping amplitudes $J_1^i \equiv 2J_{1}^{xy,i}$ and $J_2^i \equiv 2J_{2}^{xy,i}$, with $J_1^i < J_2^i$, so that the left boundary zero mode is (see SM~\ref*{app:zero_mode})
		\begin{equation}
			|\phi_L\rangle = \mathcal{N} \sum_{m \geq 1} r^{m-1} |A_m\rangle,
			\qquad r = -\frac{J_1^i}{J_2^i}, \qquad |r| < 1,
			\label{eq:thm_zeromode}
		\end{equation}
		with $\mathcal{N} = \sqrt{1-r^2}$. Let the final Hamiltonian have SSH one-body hopping amplitudes $a = J_1^f = 2J_{1}^{xy,f}$ and $b = J_2^f = 2J_{2}^{xy,f}$, with $a>b>0$. Then the boundary return amplitude~(\ref{eq:AL_def}) has the exact representation
		\begin{equation}
			A_L(t) = \int_0^\pi dk\, F(k) \cos[\varepsilon(k)\, t],
			\label{eq:AL_exact}
		\end{equation}
		where $\varepsilon(k) = \sqrt{a^2 + b^2 + 2ab\cos k}$ is the single-particle dispersion of the final SSH chain. Since both $|A_1\rangle$ and $|\phi_L\rangle$ are real vectors and $H_f^{(1)}$ is a real symmetric matrix, one has $A_L(t) \in \mathbb{R}$ for all $t$, and the cosine representation~(\ref{eq:AL_exact}) gives the full amplitude, not merely its real part.
		The spectral weight function is
		\begin{equation}
			F(k) = \frac{2\mathcal{N}\, a(a + br)\sin^2 k}
			{\pi\, \varepsilon(k)^2 \left(1 - 2r\cos k + r^2\right)}.
			\label{eq:Fk}
		\end{equation}
		Moreover, as $t \to \infty$,
		\begin{equation}
			\begin{aligned}
				A_L(t) =&
				\frac{C_0 \cos\!\left[(a+b)t - \tfrac{3\pi}{4}\right]
					+ C_\pi \cos\!\left[(a-b)t + \tfrac{3\pi}{4}\right]}
				{t^{3/2}} \\
				&+ O(t^{-5/2}),
			\end{aligned}
			\label{eq:AL_asymp}
		\end{equation}
		with prefactors
		\begin{align}
			C_0 &= \frac{\mathcal{N}\, a(a+br)}{2\sqrt{\pi}(a+b)^2(1-r)^2}
			\left[\frac{2(a+b)}{ab}\right]^{3/2},
			\label{eq:C0} \\[4pt]
			C_\pi &= \frac{\mathcal{N}\, a(a+br)}{2\sqrt{\pi}(a-b)^2(1+r)^2}
			\left[\frac{2(a-b)}{ab}\right]^{3/2}.
			\label{eq:Cpi}
		\end{align}
		In particular, $|A_L(t)| = O(t^{-3/2})$ and $|A_L(t)|^2 = O(t^{-3})$.
	\end{theorem}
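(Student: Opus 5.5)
The natural route is to work on the semi-infinite final SSH chain and diagonalize $H_f^{(1)}$ in closed form. The chain has sites $A_1,B_1,A_2,\dots$, with hopping $a$ on intracell bonds $(A_m,B_m)$ and $b$ on intercell bonds $(B_m,A_{m+1})$. Because $a>b$, this chain is in its trivial phase and has no boundary bound state.

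\emph{Step 1: eigenbasis.} I would take standing-wave eigenvectors with support $\psi_{A_m}=\sin(mk)$ on the $A$ sublattice. The $B$ sublattice is fixed by the chiral structure, since $\varepsilon\,\psi_{B_m}=a\psi_{A_m}+b\psi_{A_{m+1}}$. The open-boundary condition at $A_1$ reads $\varepsilon\psi_{A_1}=a\psi_{B_1}$, i.e.\ a fictitious $\psi_{B_0}=0$. This is automatically satisfied by choosing the $A$ amplitudes as $\sin(mk)$ and taking the $B$ amplitudes to be the matching combination of $\sin(mk)$ and $\sin((m+1)k)$. The energies are $\pm\varepsilon(k)$ with $k\in(0,\pi)$.

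\emph{Step 2: spectral representation.} Since $H_f^{(1)}$ is chiral, the $\pm\varepsilon$ pair contributes $2\cos(\varepsilon t)$ times the $A$-sublattice projection. Because $|\phi_L\rangle$ is supported only on $A$ sites, this gives
\[
A_L(t)=\int_0^\pi dk\,\rho(k)\,\psi_{A_1}(k)\,\langle\psi_A(k)|\phi_L\rangle\cos[\varepsilon(k)t].
\]
Here $\rho(k)$ is the continuum normalization. I would compute it from the asymptotic plane-wave amplitude. The $A$-component norm per unit $k$ should produce $\sin^2 k/\varepsilon(k)^2$ together with the factor $a(a+br)$ once it is combined with the overlap.

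The overlap is a geometric series:
\[
\sum_m r^{m-1}\sin(mk)=\frac{\sin k}{1-2r\cos k+r^2}.
\]
Collecting the factors should reproduce the weight $F(k)$ stated in the theorem. I expect the bookkeeping here, in particular fixing the exact normalization of the scattering states so that the prefactor is $2\mathcal N a(a+br)/\pi$, to be the main obstacle. To pin it down I would check the identity $A_L(0)=\langle A_1|\phi_L\rangle=\mathcal N$ directly by contour integration of $\int_0^\pi F(k)\,dk$. Reality of $A_L(t)$ follows immediately, because all vectors involved are real.

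\emph{Step 3: asymptotics.} I would apply stationary phase to $\int_0^\pi F(k)e^{\pm i\varepsilon(k)t}dk$. The only stationary points are the endpoints $k=0$ and $k=\pi$, where $\varepsilon'(k)=0$. Near these points one has
\begin{align*}
\varepsilon(k)&\approx(a+b)-\frac{ab}{2(a+b)}k^2 &&\text{near } k=0,\\
\varepsilon(k)&\approx(a-b)+\frac{ab}{2(a-b)}(\pi-k)^2 &&\text{near } k=\pi.
\end{align*}
Since $F$ vanishes quadratically at both endpoints, $F\approx F_0k^2$, the leading contribution is a half-line Gaussian moment:
\[
\int_0^\infty k^2e^{\mp i\alpha k^2t}dk=\frac{\sqrt\pi}{4}(\alpha t)^{-3/2}e^{\mp 3i\pi/4}.
\]
This yields the phases $-3\pi/4$ at $k=0$ (maximum of $\varepsilon$) and $+3\pi/4$ at $k=\pi$ (minimum of $\varepsilon$). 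The constants $C_0$ and $C_\pi$ then follow from $F_0$ and $\alpha=ab/[2(a\pm b)]$; the factors $(1\mp r)^2$ come from evaluating the overlap denominator at the two endpoints.

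The $O(t^{-5/2})$ remainder follows from the standard endpoint expansion. Because $F$ is smooth and even about each endpoint, the next term is $k^4$, which contributes at order $t^{-5/2}$. There are no interior critical points, since $\varepsilon'(k)=-ab\sin k/\varepsilon\neq0$ on $(0,\pi)$, so the interior contributes only at arbitrarily high order by integration by parts. The bound $|A_L|^2=O(t^{-3})$ is then immediate.
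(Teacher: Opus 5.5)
Your Step 1 has an error that carries through to $F(k)$, $C_0$ and $C_\pi$. Take $\psi_{A_m}=\sin(mk)$ and $\psi_{B_m}=[a\sin(mk)+b\sin((m+1)k)]/\varepsilon$. Extending the bulk relation to $m=0$ gives $\psi_{B_0}=b\sin k/\varepsilon\neq 0$. Equivalently, $\varepsilon\psi_{A_1}-a\psi_{B_1}=b^2\sin k/\varepsilon$. So the open-boundary condition at $A_1$ is violated, not ``automatically satisfied.'' Your vectors are eigenvectors of a different chain, one with an extra site $B_0$ attached to $A_1$ through the weak bond $b$.

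The Dirichlet condition must sit on the $B$ sublattice, which is what the paper does. It takes $u_{B,m}=\sqrt{2/\pi}\,\sin(mk)$ and $u_{A,m}=\sqrt{2/\pi}\,[a\sin(mk)+b\sin((m-1)k)]/\varepsilon(k)$, and both are delta-normalized. The factor $a/\varepsilon$ in $F$ comes from $u_{A,1}=\sqrt{2/\pi}\,a\sin k/\varepsilon$. The factor $(a+br)/\varepsilon$ comes from the geometric sum $\sum_m r^{m-1}[a\sin(mk)+b\sin((m-1)k)]=(a+br)\sin k/(1-2r\cos k+r^2)$. The normalization is simply $2/\pi$; there is no nontrivial $\rho(k)$ that could supply $a(a+br)/\varepsilon^2$.

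With your basis, the same bookkeeping gives $F(k)=2\mathcal N\sin^2 k/[\pi(1-2r\cos k+r^2)]$. This leads to prefactors $C_0,C_\pi$ missing the factor $a(a+br)/(a\pm b)^2$. The $t^{-3/2}$ exponent and the phases $\mp 3\pi/4$ would survive, because this wrong weight also vanishes quadratically at both endpoints. The stated constants would not.

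The consistency check you proposed would not catch this. The identity $\int_0^\pi F\,dk=\mathcal N$ only expresses completeness of the sine transform on the $A$ sublattice, and your weight satisfies it too. A check that does discriminate is the second moment, $-A_L''(0)=\langle A_1|(H_f^{(1)})^2|\phi_L\rangle=\mathcal N a(a+br)$. The correct $F$ reproduces this, while your basis gives $\mathcal N(a^2+b^2+abr)$. The $\mathcal N b^2$ discrepancy is exactly the spurious boundary term.

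Once Step 1 is corrected, the rest of your argument coincides with the paper's proof: chiral pairing of $\pm\varepsilon$ into $\cos(\varepsilon t)$, the geometric sum, and half-line Gaussian moments at the two stationary endpoints. Your justification of the $O(t^{-5/2})$ remainder, using evenness of $F$ and $\varepsilon$ about each endpoint and the absence of interior critical points, is more explicit than what the paper provides.
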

	Thus, we get the leading large-$t$ asymptotic limit, denoted by $A_L^{\rm asymp}(t)$, obtained from Eq.~(\ref{eq:AL_asymp})
	\begin{equation}
		A_L^{\rm asymp}(t) \equiv \frac{C_0 \cos\!\left[(a+b)t - \tfrac{3\pi}{4}\right]
			+ C_\pi \cos\!\left[(a-b)t + \tfrac{3\pi}{4}\right]}{t^{3/2}}.
		\label{eq:AL_asymp_def}
	\end{equation}
	The proof of Theorem~\ref{thm:free_decay} is given in SM~\ref*{app:theorem1_proof}.
	Figure~\ref{fig:AL_benchmark} benchmarks the exact integral representation~\eqref{eq:AL_exact} against direct propagation of the initial left boundary mode on a finite open chain for the main-text quench with $N=200$, $\delta_i=-0.95$, and $\delta_f=+0.95$. The two calculations agree to numerical precision throughout the plotted interval $0 \le t \le 200$, consistent with the fact that, for the strongly localized initial boundary mode considered here, the residual difference between the semi-infinite formula and the finite chain is exponentially small in system size. Figure~\ref{fig:AL_asymptotic_benchmark} further compares the exact result with its leading large-time approximation $A_L^{\rm asymp}(t)$ in Eq.~\eqref{eq:AL_asymp_def}, while SM~\ref*{app:delta07}, Fig.~\ref*{fig:AL_benchmark_d07}, shows the analogous exact-versus-finite-chain benchmark for the smaller-dimerization quench $\delta_i=-0.70$, $\delta_f=+0.70$.

	The theorem also implies an asymptotic envelope estimate for the boundary-mode return amplitude.
	From Eq.~(\ref{eq:AL_asymp_def}), one has
	\begin{equation}
		|A_L^{\rm asymp}(t)| \leq \frac{|C_0| + |C_\pi|}{t^{3/2}}.
		\label{eq:AL_asymp_bound}
	\end{equation}
	For any fixed sufficiently small threshold $\epsilon > 0$, it is natural to define the associated asymptotic envelope time by the condition that the right-hand side of Eq.~(\ref{eq:AL_asymp_bound}) equals $\epsilon$, namely
	\begin{equation}
		\tau_\epsilon^{\rm env} \equiv \left(\frac{|C_0| + |C_\pi|}{\epsilon}\right)^{2/3}.
		\label{eq:tau_env_def}
	\end{equation}
	This quantity is not a sharp relaxation time and it is not defined for the full edge magnetization.
	Instead, it is an asymptotic scale extracted from the leading large-$t$ envelope of the boundary-mode return amplitude.
	To determine how it depends on the final dimerization, consider the strongly dimerized trivial regime $b \ll a$.
	In this regime, one has $a \pm b = a + O(b)$, so the $b$-dependent corrections to $a$ can be neglected at leading order.
	Under this replacement, Eqs.~(\ref{eq:C0}) and~(\ref{eq:Cpi}) tells that both $C_0$ and $C_\pi$ are individually proportional to $b^{-3/2} + O(b^{-1/2})$ at leading order and therefore,
	\begin{equation}
		|C_0| + |C_\pi| \propto b^{-3/2}.
		\label{eq:Csum_smallb}
	\end{equation}
	Substituting this scaling into Eq.~(\ref{eq:tau_env_def}) gives the following corollary.
	
	\begin{corollary}[Dimerization scaling of the envelope time]
		\label{cor:tau_scaling}
		In the strongly dimerized trivial regime $b \ll a$, where $a$ and $b$ are SSH one-body hopping amplitudes, the asymptotic envelope time defined in Eq.~\eqref{eq:tau_env_def} satisfies
		\begin{equation}
			\tau_\epsilon^{\rm env} \propto b^{-1} = \frac{1}{J_2^f} = \frac{1}{2(1-\delta_f)}.
			\label{eq:tau_scaling}
		\end{equation}
		In particular, reducing $\delta_f$ from $0.95$ to $0.70$ reduces $\tau_\epsilon^{\rm env}$ to $b_{0.95}/b_{0.70} = 0.10/0.60 \approx 1/6$ of its value.
	\end{corollary}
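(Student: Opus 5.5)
The plan is to obtain the corollary directly from the closed-form prefactors \eqref{eq:C0}--\eqref{eq:Cpi} of Theorem~\ref{thm:free_decay}. I would expand $|C_0|+|C_\pi|$ in the small parameter $\beta\equiv b/a$ at fixed $a$ and fixed $r$, and then substitute the result into the definition \eqref{eq:tau_env_def}. The initial state, and hence $r=-J_1^i/J_2^i$ and $\mathcal{N}=\sqrt{1-r^2}$, is held fixed throughout, as is the threshold $\epsilon$. Only the final hoppings $a,b$ vary.

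\emph{Step 1 (signs).} Both prefactors are positive, so the absolute values can be dropped. Indeed $\mathcal{N}>0$ and $a>b>0$. Also $a+br>a-b>0$ because $|r|<1$. Finally $(1\mp r)^2>0$.

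\emph{Step 2 (expansion).} I would factor the prefactors as
\begin{equation*}
C_{0,\pi}=\frac{\mathcal{N}}{2\sqrt{\pi}\,(1\mp r)^2}\Bigl(\frac{2}{b}\Bigr)^{3/2}\,
\frac{a(a+br)}{(a\pm b)^{1/2}\,a^{3/2}} .
\end{equation*}
Here $(a\pm b)^{3/2}/(a\pm b)^2=(a\pm b)^{-1/2}$, and $(ab)^{-3/2}=a^{-3/2}b^{-3/2}$. The last factor is $(1+\beta r)(1\pm\beta)^{-1/2}=1+O(\beta)$. It therefore contributes corrections that are relatively of order $b/a$, which is the $O(b^{-1/2})$ term mentioned before the statement. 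This gives
\begin{equation*}
|C_0|+|C_\pi|=\frac{\mathcal{N}}{2\sqrt{\pi}}\Bigl[\frac{1}{(1-r)^2}+\frac{1}{(1+r)^2}\Bigr]\Bigl(\frac{2}{b}\Bigr)^{3/2}\bigl(1+O(b/a)\bigr).
\end{equation*}
A noteworthy point is that the leading coefficient is independent of $a$. It depends only on the initial state through $r$. This matters for the numerical claim, because when $\delta_f$ changes, $a=2(1+\delta_f)$ changes as well.

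\emph{Step 3 (envelope time).} Raising the result of Step 2 to the power $2/3$ in \eqref{eq:tau_env_def} gives
\begin{equation*}
\tau_\epsilon^{\rm env}=\frac{2}{b}\Bigl(\frac{K(r)}{\epsilon}\Bigr)^{2/3}\bigl(1+O(b/a)\bigr),
\end{equation*}
where $K(r)=\frac{\mathcal{N}}{2\sqrt{\pi}}\bigl[(1-r)^{-2}+(1+r)^{-2}\bigr]$. This is the scaling $\tau_\epsilon^{\rm env}\propto b^{-1}$. Next I would use $b=J_2^f=2J_2^{xy,f}=2t(1-\delta_f)$ with $t=1$. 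Since the $a$-dependence drops out at leading order, the ratio of envelope times at two values of $\delta_f$, with $\delta_i$ and $\epsilon$ fixed, equals $b_{0.95}/b_{0.70}=(1-0.95)/(1-0.70)=0.05/0.30$. This reproduces the stated $0.10/0.60\approx 1/6$.

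\emph{Main obstacle.} The algebra is routine. The delicate point is the honest status of the numerical statement. At $\delta_f=0.70$ one has $b/a=0.6/3.4\approx0.18$, so the $O(b/a)$ corrections from Step 2 are not negligible. I would therefore state that the factor $1/6$ holds at leading order in $b/a$. To quantify the correction, I would evaluate the exact factor $(1+\beta r)(1\pm\beta)^{-1/2}$ at both dimerizations; this is a cheap check given that $r$ is small for $\delta_i=-0.95$.
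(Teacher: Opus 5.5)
Your proposal is correct and follows the paper's own argument: for $b\ll a$ the paper replaces $a\pm b$ by $a$ in $C_0,C_\pi$, obtains $|C_0|+|C_\pi|\propto b^{-3/2}$, and inserts this into the definition of $\tau_\epsilon^{\rm env}$. Your explicit factorization also shows that the leading coefficient $K(r)$ is independent of $a$, which the paper uses only implicitly and which is what justifies the ratio $b_{0.95}/b_{0.70}$ even though $a=2(1+\delta_f)$ changes too. Your worry about the corrections at $\delta_f=0.70$ is milder than you expect: the $O(\beta)$ terms of $C_0$ and $C_\pi$ cancel in the sum up to a relative $O(\beta r^3)$, leaving a relative correction of about $\tfrac{3}{8}\beta^2$ for the small $r$ of $\delta_i=-0.95$, so at fixed $r$ the factor $1/6$ is off by less than one percent.
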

	
	The asymptotic formula~\eqref{eq:AL_asymp_def} contains two cosines oscillating at frequencies $\omega_0 = a+b$ and $\omega_\pi = a-b$. Adding two sinusoids at close frequencies produces a \emph{beating} pattern in which the amplitude periodically swells and collapses. The period of this modulation is the beat period
	\begin{equation}
		T_{\rm beat} = \frac{2\pi}{\omega_0 - \omega_\pi} = \frac{2\pi}{2b} = \frac{\pi}{b}
		= \frac{\pi}{J_2^f}.
		\label{eq:Tbeat_def}
	\end{equation}
	At $\delta_f = 0.95$ one has $b = 0.10$ and $T_{\rm beat} = \pi/0.10 \approx 31.4$, so the two band-edge frequencies $\omega_0 = 4.00$ and $\omega_\pi = 3.80$ are nearly equal and interfere on a long timescale. At $\delta_f = 0.70$ one has $b = 0.60$ and $T_{\rm beat} = \pi/0.60 \approx 5.24$, so the interference is much faster. The beat period is reduced by the same factor of $1/6$ as the envelope time. These statements apply directly to $A_L(t)$ and to its asymptotic envelope.
	The faster decay seen numerically in the full edge magnetization at smaller dimerization is consistent with the same mechanism, but no separate theorem is claimed here for the full observable.
	\begin{figure}[t]
		\centering
		\includegraphics[width=\linewidth]{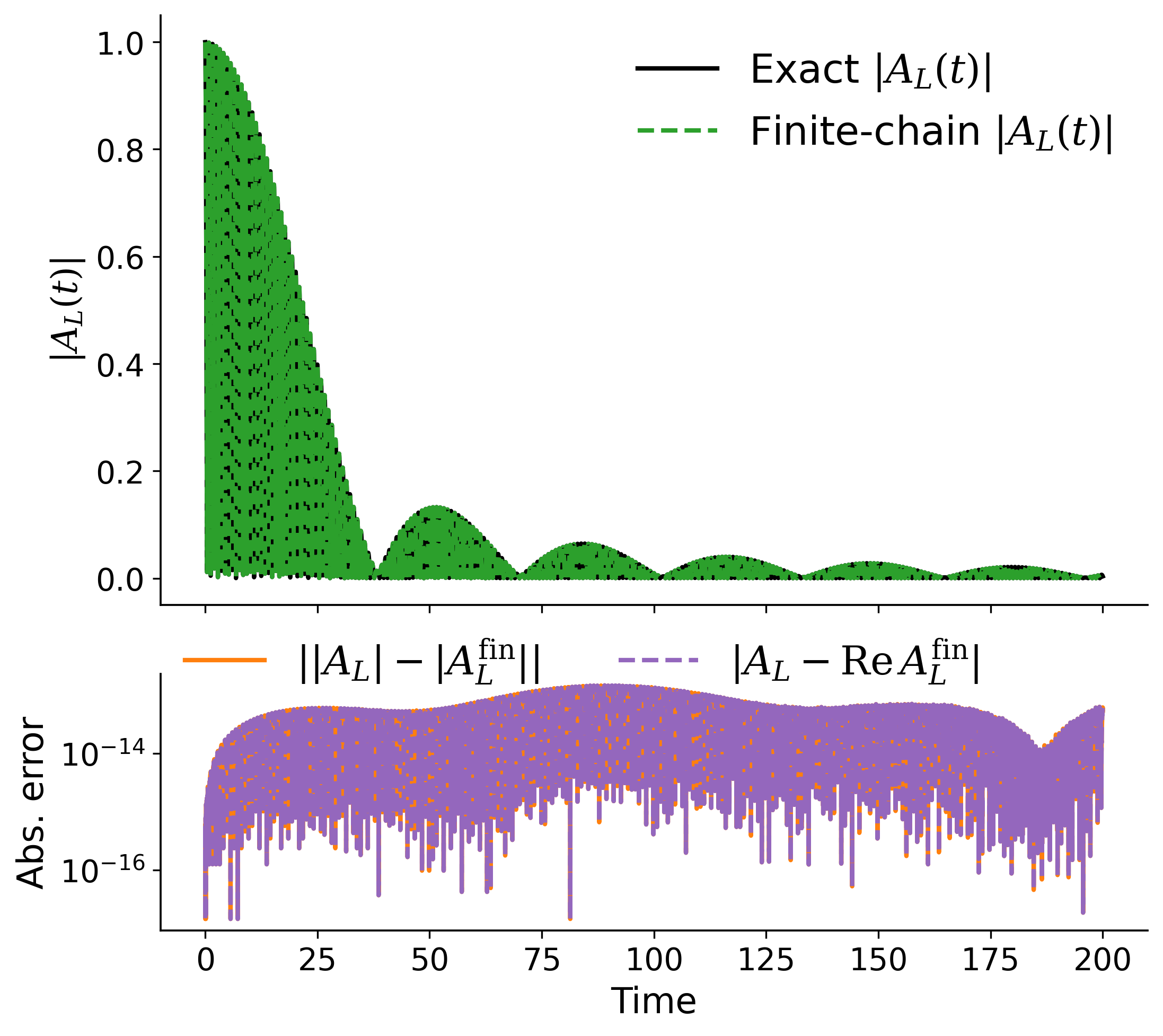}
		\caption{Benchmark of the exact boundary-return amplitude for the SP $\to$ SP quench with $L = 200$, $\delta_i = -0.95$, and $\delta_f = +0.95$. Since both $|\phi_L\rangle$ and $|A_1\rangle$ are real vectors and $H_f^{(1)}$ is a real symmetric matrix, $A_L(t)$ is real for all $t$ (see Theorem~\ref{thm:free_decay}), so $|A_L(t)| = |{\rm Re}\,A_L(t)|$ and the modulus comparison is unambiguous. The upper panel compares $|A_L(t)|$ obtained from the exact semi-infinite integral representation, Eq.~(\ref{eq:AL_exact})--(\ref{eq:Fk}), with direct propagation of the initial left boundary mode, Eq.~(\ref{eq:thm_zeromode}), on the finite open chain of length $L = 200$. The lower panel shows two error measures over the full interval $0 \leq t \leq 200$: the modulus difference $\bigl||A_L(t)| - |A_L^{\rm fin}(t)|\bigr|$ and the real-part difference $|A_L(t) - \mathrm{Re}\,A_L^{\rm fin}(t)|$, which coincide with $|A_L(t) - A_L^{\rm fin}(t)|$ since $A_L(t)$ is real. Both are at numerical precision throughout the plotted window.}
		\label{fig:AL_benchmark}
	\end{figure}
	
	\begin{figure}[htbp]
		\centering
		\includegraphics[width=\linewidth]{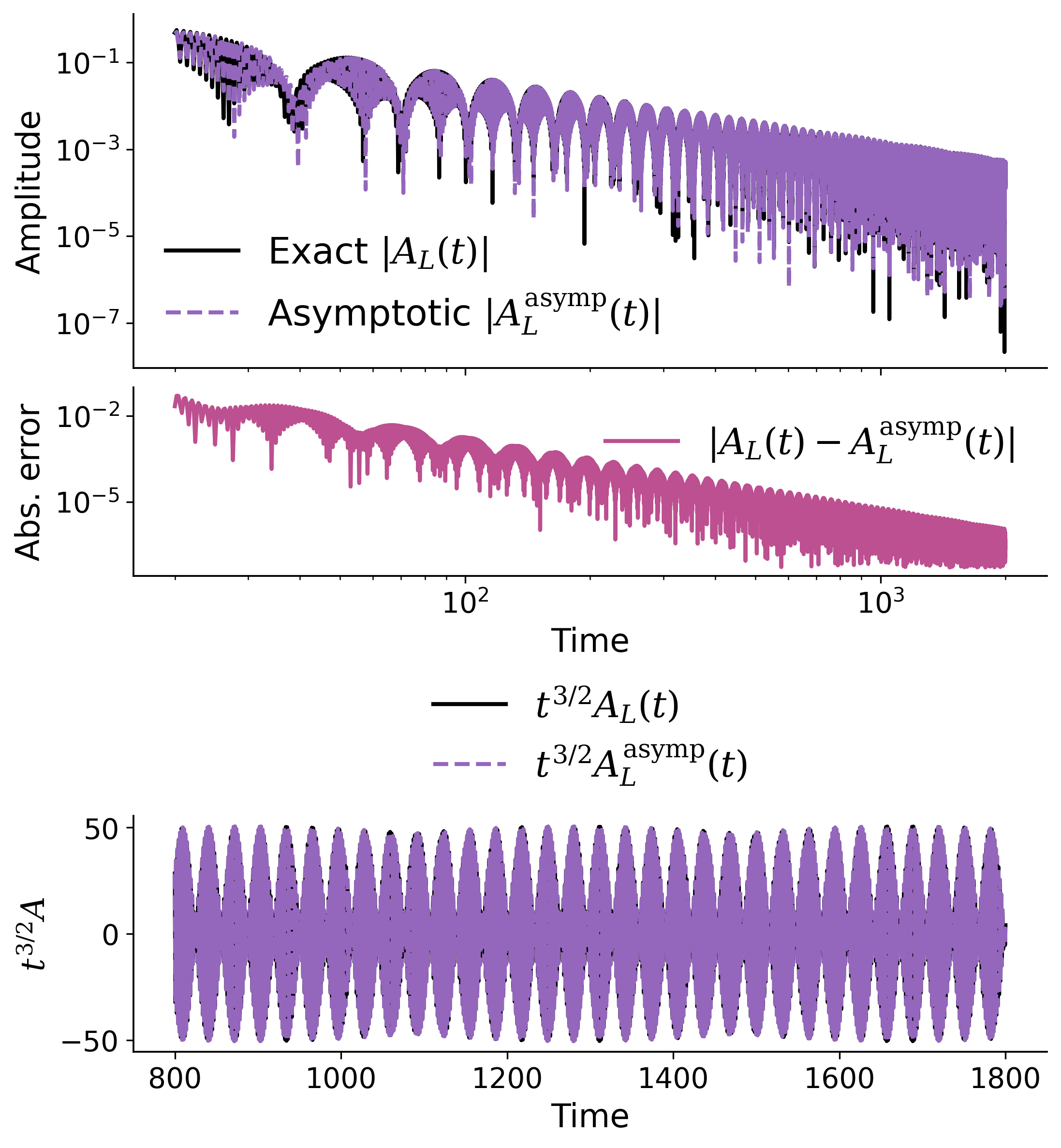}
		\caption{Benchmark of the large-time asymptotic formula for the SP $\to$ SP quench with $L = 200$, $\delta_i = -0.95$, and $\delta_f = +0.95$. The upper panel compares the exact amplitude $|A_L(t)|$ with the leading asymptotic approximation $|A_L^{\rm asymp}(t)|$ from Eq.~(\ref{eq:AL_asymp_def}) over the interval $20 \leq t \leq 2000$. The middle panel shows the absolute error $|A_L(t) - A_L^{\rm asymp}(t)|$. The lower panel shows the scaled quantities $t^{3/2}A_L(t)$ and $t^{3/2}A_L^{\rm asymp}(t)$ over the late-time window $800 \leq t \leq 1800$. The bounded oscillatory behavior in the lower panel is the numerical signature of the asymptotic law $A_L(t) = O(t^{-3/2})$.}
		\label{fig:AL_asymptotic_benchmark}
	\end{figure}
	
	\subsection{Quench Dynamics in the Non-Interacting Chain}
	\label{sec:sp_results}
	We now present results for the SP $\to$ SP quench ($J_z^i = J_z^f = 0$) at system size $L = 200$, quenching from $\delta_i = -0.95$ to $\delta_f = +0.95$. The evolution of the edge magnetization is shown in Fig.~\ref{fig:single_particle}.
	Both edge sites start with large magnitude, $|S_1^z(0)| \approx |S_N^z(0)| \approx 0.5$, reflecting the strong localization of the initial boundary modes. As explained in SM~\ref*{app:ssh_zero_mode}, this magnitude is fixed by the boundary zero-mode weight, whereas the sign depends on the occupation convention for the exponentially localized edge doublet; in a convention that pins the left zero mode to be occupied one has $S_1^z(0) \approx +1/2$, while in the opposite sector the sign flips.
	Under the final Hamiltonian, whose ground state lies on the trivial side of the equilibrium SSH phase diagram and therefore supports no protected edge mode, the edge signal undergoes oscillatory relaxation toward zero. By contrast, the center-site magnetization remains at numerical zero throughout the evolution, as shown in SM~\ref*{app:center_spin}, Fig.~\ref*{fig:center_magnetization}, consistent with the fact that the initial topological zero mode is exponentially localized at the boundary and carries negligible weight in the bulk.
	The theorem in Sec.~\ref{sec:theorem1} concerns the return amplitude of the initial left boundary mode and shows that the corresponding boundary-mode contribution decays asymptotically as $t^{-3/2}$. The full edge magnetization shown in Fig.~\ref{fig:single_particle} is a different quantity and also contains contributions from the other initially occupied modes, so no asymptotic power-law exponent is extracted here from the full observable. Nevertheless, the relaxation of both the edge magnetizations occurs on a timescale comparable to the first decay of $|A_L(t)|$ in Fig.~\ref{fig:AL_benchmark}, before the subsequent decayed oscillatory revival.
	Because the post-quench SSH Hamiltonian is free, it possesses an extensive set of conserved mode occupations $\hat{n}_\alpha = d^\dagger_\alpha d_\alpha$, precluding thermalization in the ETH sense; the correct long-time language is relaxation to the free diagonal ensemble, whose predictions are determined entirely by the single-particle mode occupations of the initial state and are derived explicitly in Corollary~\ref{cor:mbtosp_diagonal_ensemble}. This is not an infinite-temperature result: the free diagonal ensemble retains non-trivial, state-dependent memory of the initial occupation pattern, and coincides with $\langle S_1^z \rangle \to 0$ only because the final trivial SSH Hamiltonian supports no protected edge mode, not because all mode occupations are uniformly $\tfrac{1}{2}$. For comparison, SM~\ref*{app:delta07} shows the quench with $\delta_i = -0.70$ and $\delta_f = 0.70$ in Fig.~\ref*{fig:delta07_edge}.
	The same relaxation mechanism is observed there, but with visibly faster relaxation, consistent with the smaller asymptotic envelope scale \eqref{eq:tau_scaling} and shorter beat period obtained for the boundary-mode return amplitude \eqref{eq:Tbeat_def}.

	\begin{figure}[htbp]
		\centering
		\includegraphics[width=\linewidth]{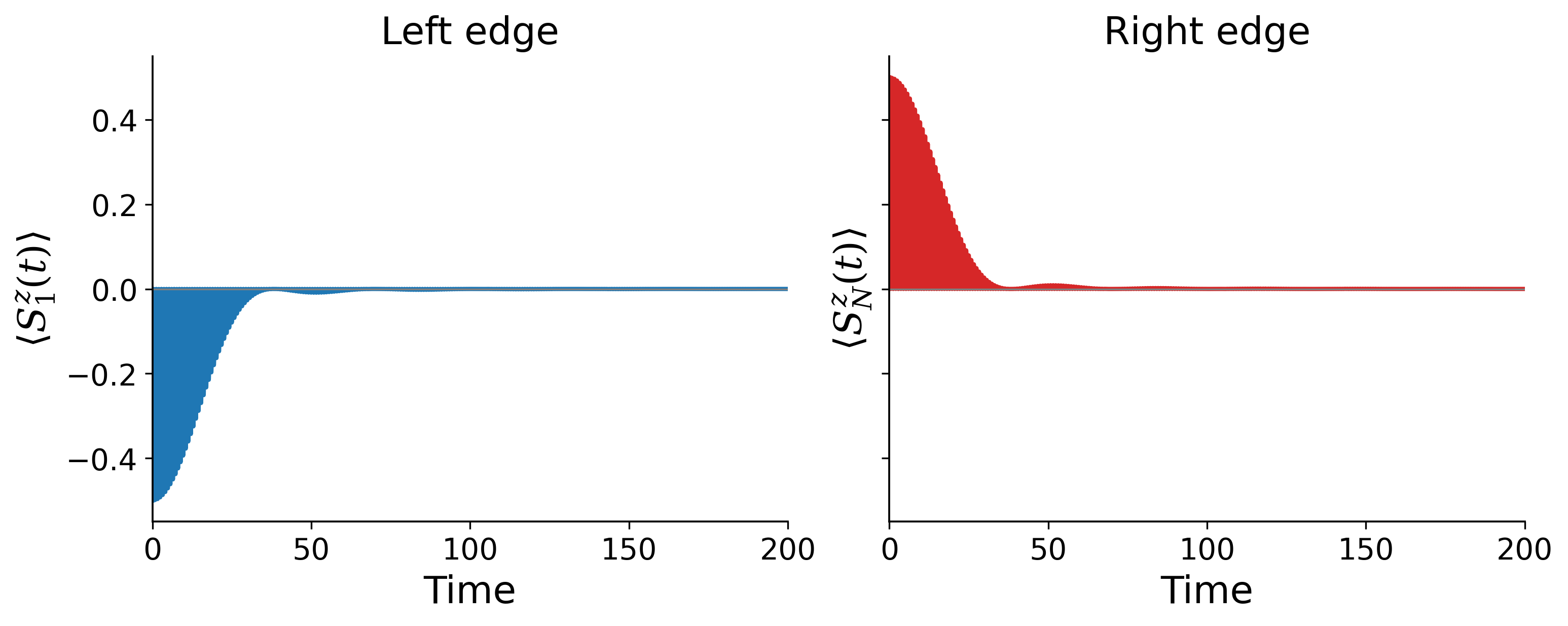}
		\caption{Quench dynamics of the edge magnetization in the non-interacting ($J_z = 0$) SSH chain for $L = 200$, quenching from $\delta_i = -0.95$ to $\delta_f = +0.95$ (recall hopping parameter $|t|=1$ throughout) using the correlation matrix method. The left and right panels show $S_1^z(t)$ and $S_N^z(t)$, respectively. Both edges start with magnitude close to $1/2$ and then exhibit oscillatory relaxation toward zero under evolution generated by the final Hamiltonian at $\delta_f=0.95$, whose ground state lies on the trivial side of the equilibrium phase diagram. The initial sign depends on the chosen edge-sector convention as further explained in SM~\ref*{app:ssh_zero_mode}, whereas the magnitude is fixed by the boundary zero-mode weight. The exact theorem in Sec.~\ref{sec:theorem1} applies to the return amplitude of the initial left boundary mode and to its induced contribution to the site-1 density, not to the full edge magnetization plotted here; nevertheless, the relaxation of both the edge signal occurs on a timescale comparable to the first decay of $|A_L(t)|$ seen in Fig.~\ref{fig:AL_benchmark}.}
		\label{fig:single_particle}
	\end{figure}

	\section{Interacting Quenches and Many-Body Protection of Edge Memory}
	\label{sec:manybody}

	We now turn to quenches in which at least one side of the protocol is interacting. Whereas Sec.~\ref{sec:single} established the free-fermion reference problem through exact correlation-matrix formulas and an exact theorem for the boundary-mode return amplitude, the interacting protocols separate naturally into two classes. We begin with the MB to SP quench, in which the initial state is interacting but the post-quench Hamiltonian is free, so that the dynamics remains analytically controlled through the interacting initial one-body density matrix. We then turn to the genuinely interacting post-quench protocols SP to MB and MB to MB, for which we combine rigorous analytical control in the strongly dimerized final regime with TEBD simulations of the edge magnetization in order to isolate the mechanisms of boundary-memory leakage and to demonstrate that interacting final Hamiltonians can enhance finite-time topological memory retention relative to the non-interacting reference, with a local suppression (still being higher than SP reference point) near resonant leakage regimes such as the isotropic point. Throughout this section, SP denotes $J_z=0$ and MB denotes $J_z\neq 0$ in Eq.~\eqref{eq:Ham} with couplings in Eq.~\eqref{eq:params}.

	\subsection{Many-Body to Single-Particle Quench}
	\label{sec:mbtosp}
	We first consider quenches from an interacting topological initial state to a non-interacting final Hamiltonian whose ground state lies on the trivial side of the equilibrium phase diagram. Thus $J_z^i\neq 0$ and $J_z^f=0$. As in the rest of this work, the quench is formulated in the fixed total zero-magnetization sector, $S_{\rm tot}^z=0$, which is equivalent after the Jordan--Wigner transformation to fixed half-filling. The interacting initial state and the SP reference state are therefore compared at the same conserved fermion number. After the Jordan--Wigner transformation, the post-quench Hamiltonian is quadratic, so the evolution of the local density is exactly determined by the one-body density matrix of the initial state. This statement does not require the initial state to be Gaussian. The interaction in the preparation Hamiltonian changes the initial one-body density matrix, but it does not change the single-particle spectrum or the dephasing structure of the final evolution. Let the final quadratic Hamiltonian be
	\begin{align}
		H_f=\sum_{m,n=1}^{N} c_m^\dagger h^{f}_{mn} c_n,
		\label{eqmbtospfreeH}
	\end{align}
	where $h_f \equiv H_f^{(1)}$ is the $N\times N$ single-particle SSH Hamiltonian matrix of the final chain at $\delta_f>0$, with $H_f^{(1)}$ as defined in Eq.~\eqref{eq:corrmat_time}. For the interacting initial ground state $|\Psi_i(J_z^i)\rangle$ in the fixed $S_{\rm tot}^z=0$ sector, define the site-basis one-body density matrix
	\begin{align}
		[C_i^{\rm MB}(J_z^i)]_{mn}
		=
		\langle\Psi_i(J_z^i)|c_m^\dagger c_n|\Psi_i(J_z^i)\rangle.
		\label{eq:mbtosp_C_MB}
	\end{align}
	Throughout this subsection, matrices denoted by $C$ are written in the site basis and carry site indices, such as $m,n$. Matrices denoted by $\mathcal{C}$ are written in the eigenmode basis of the final single-particle Hamiltonian and carry mode indices, such as $\mu,\nu$. These are two representations of the same one-body density operator.
	\begin{proposition}[Exact reduction of the MB to SP quench]
		\label{prop:mbtosp_exact_reduction}
		Define
		\begin{align}
			U_f(t)=e^{-ih_f t},
			\label{eq:mbtosp_U}
		\end{align}
		so that the Heisenberg evolution of the site operators is
		\begin{align}
			c_j(t)=e^{iH_f t}c_j e^{-iH_f t}
			=\sum_{m=1}^{N}[U_f(t)]_{jm}c_m.
			\label{eq:mbtosp_c_evol_main}
		\end{align}
		Then the MB to SP edge magnetization is given exactly by
		\begin{equation}
			\begin{aligned}
				S_{1,{\rm MB}\to{\rm SP}}^z&(t;J_z^i)
				= \\
				&\sum_{m,n=1}^{N}
				[U_f(t)]_{1m}^*
				[C_i^{\rm MB}(J_z^i)]_{mn}
				[U_f(t)]_{1n}
				-\frac{1}{2},
			\end{aligned}
			\label{eq:mbtosp_exact_sum}
		\end{equation}
		or equivalently by
		\begin{align}
			S_{1,{\rm MB}\to{\rm SP}}^z(t;J_z^i)
			=
			\left[
			U_f^*(t)C_i^{\rm MB}(J_z^i)U_f^T(t)
			\right]_{11}
			-\frac{1}{2}.
			\label{eq:mbtosp_exact_C}
		\end{align}
		Equations~\eqref{eq:mbtosp_exact_sum} and~\eqref{eq:mbtosp_exact_C} are the precise sense in which the MB to SP quench is governed by the same free post-quench problem as the SP to SP quench. The replacement is
		\begin{align}
			C_i^{\rm SP}\longrightarrow C_i^{\rm MB}(J_z^i),
			\label{eq:mbtosp_C_replacement}
		\end{align}
		where $C_i^{\rm SP}$ is the one-body density matrix of the non-interacting topological initial state in the same fixed $S_{\rm tot}^z=0$ sector.
	\end{proposition}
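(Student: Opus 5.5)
The plan is a direct Heisenberg-picture computation, using only that $H_f$ is quadratic and number conserving. No Gaussianity of $|\Psi_i(J_z^i)\rangle$ is needed.

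First I would establish Eq.~\eqref{eq:mbtosp_c_evol_main}. Set $c_j(t)=e^{iH_f t}c_j e^{-iH_f t}$ and differentiate:
\[
\frac{d}{dt}c_j(t)=i\,e^{iH_f t}[H_f,c_j]e^{-iH_f t}.
\]
The canonical anticommutation relations give $[c_m^\dagger c_n,c_j]=-\delta_{mj}c_n$, so $[H_f,c_j]=-\sum_n h^f_{jn}c_n$. The vector $\mathbf{c}(t)$ therefore obeys the linear system $\dot{\mathbf{c}}=-ih_f\mathbf{c}$ with $\mathbf{c}(0)=\mathbf{c}$. Its unique solution is $\mathbf{c}(t)=e^{-ih_f t}\mathbf{c}$, which is Eq.~\eqref{eq:mbtosp_c_evol_main} with $U_f$ as in Eq.~\eqref{eq:mbtosp_U}. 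Taking the adjoint gives $c_j^\dagger(t)=\sum_m [U_f(t)]^*_{jm}c_m^\dagger$. The Jordan--Wigner string is harmless at this step because $J_z^f=0$ makes $H_f$ exactly of the form~\eqref{eqmbtospfreeH} with open boundaries, so no boundary parity term appears.

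Next I would use Eq.~\eqref{eq:mag} as an operator identity, $S_1^z=c_1^\dagger c_1-\tfrac12$. This holds because the string for site $1$ is trivial, and it holds in any state, not only Gaussian ones. Then
\[
S^z_{1,\mathrm{MB}\to\mathrm{SP}}(t)=\langle\Psi_i|c_1^\dagger(t)c_1(t)|\Psi_i\rangle-\tfrac12.
\]
Inserting the two expansions and pulling the c-number coefficients out of the expectation value gives $\sum_{m,n}[U_f]^*_{1m}[U_f]_{1n}\langle c_m^\dagger c_n\rangle-\tfrac12$. By Eq.~\eqref{eq:mbtosp_C_MB} this is Eq.~\eqref{eq:mbtosp_exact_sum}. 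Writing $[U_f]_{1n}=[U_f^T]_{n1}$ and $[U_f]^*_{1m}=[U_f^*]_{1m}$ turns the double sum into the matrix element $[U_f^* C_i^{\rm MB} U_f^T]_{11}$, which is Eq.~\eqref{eq:mbtosp_exact_C}.

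Finally, I would note that the same computation with $|\Psi_i\rangle$ replaced by the Slater determinant~\eqref{eq:slater} yields the identical formula with $C_i^{\rm SP}$. This justifies the replacement~\eqref{eq:mbtosp_C_replacement}, and consistency with Eq.~\eqref{eq:corrmat_time} serves as a check on conventions. The only point requiring care is index and conjugation bookkeeping: $C_{mn}=\langle c_m^\dagger c_n\rangle$ versus the form $U C U^\dagger$ in Eq.~\eqref{eq:corrmat_time}. Since $h_f$ is real symmetric, $U_f^*=U_f^\dagger$ and $U_f^T=U_f$, so both conventions agree on the $(1,1)$ entry. I do not expect any genuine obstacle, because the argument is purely algebraic and exact at finite $N$.
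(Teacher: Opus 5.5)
Your proof is correct and follows essentially the same route as the paper: linear Heisenberg evolution $c_j(t)=\sum_m[U_f(t)]_{jm}c_m$ under the quadratic $H_f$, substitution into $n_1(t)=c_1^\dagger(t)c_1(t)$, and evaluation of $\langle c_m^\dagger c_n\rangle$ in the interacting state with no Wick step; you also derive the linear evolution from $[H_f,c_j]=-\sum_n h^f_{jn}c_n$, which the paper only asserts. One caveat on your side remark about conventions: a real symmetric $h_f$ by itself gives only $[U_f^*CU_f^T]_{11}=[U_f^\dagger C U_f]_{11}$, and equating this with $[U_f C U_f^\dagger]_{11}$ also needs $C=C^T$, i.e.\ a real one-body density matrix (true here because the initial ground states are real), though this does not affect the proposition itself.
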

	Equation~\eqref{eq:mbtosp_exact_sum} follows by inserting Eq.~\eqref{eq:mbtosp_c_evol_main} into $S_1^z(t)=c_1^\dagger(t)c_1(t)-\frac{1}{2}$ and evaluating the expectation value in $|\Psi_i(J_z^i)\rangle$. Equation~\eqref{eq:mbtosp_exact_C} is the corresponding matrix form.
	To display the frequency content explicitly, diagonalize the final single-particle Hamiltonian. Let $\{|\mu\rangle\}_{\mu=1}^{N}$ be an orthonormal eigenbasis of $h_f$,
	\begin{align}
		h_f|\mu\rangle=\varepsilon_\mu|\mu\rangle,
		\qquad
		\sum_{j=1}^{N}\varphi_\mu^*(j)\varphi_\nu(j)=\delta_{\mu\nu},
		\label{eq:mbtosp_final_eigenvectors}
	\end{align}
	where
	\begin{align}
		\varphi_\mu(j)=\langle j|\mu\rangle
		\label{eq:mbtosp_phi_def}
	\end{align}
	is the amplitude of eigenmode $\mu$ on site $j$. The corresponding fermionic normal-mode operators are defined by
	\begin{align}
		d_\mu^\dagger=\sum_{j=1}^{N}\varphi_\mu(j)c_j^\dagger,
		\qquad
		d_\mu=\sum_{j=1}^{N}\varphi_\mu^*(j)c_j,
		\label{eq:mbtosp_d_def}
	\end{align}
	and the inverse transformation is
	\begin{align}
		c_j^\dagger=\sum_{\mu=1}^{N}\varphi_\mu^*(j)d_\mu^\dagger,
		\qquad
		c_j=\sum_{\mu=1}^{N}\varphi_\mu(j)d_\mu.
		\label{eq:mbtosp_inverse_modes}
	\end{align}
	The same initial one-body density matrix can be represented in the final-mode basis. We denote this representation by $\mathcal{C}^{\rm MB}$:
	\begin{align}
		\mathcal{C}_{\mu\nu}^{\rm MB}(J_z^i)
		&=
		\langle\Psi_i(J_z^i)|d_\mu^\dagger d_\nu|\Psi_i(J_z^i)\rangle
		\nonumber\\
		&=
		\sum_{m,n=1}^{N}
		\varphi_\mu(m)
		[C_i^{\rm MB}(J_z^i)]_{mn}
		\varphi_\nu^*(n).
		\label{eq:mbtosp_C_final_basis}
	\end{align}
	Thus $C_i^{\rm MB}$ and $\mathcal{C}^{\rm MB}$ contain the same one-body information, but in different bases. The edge magnetization is then
	\begin{align}
		S_{1,{\rm MB}\to{\rm SP}}^z(t;J_z^i)
		=
		\sum_{\mu,\nu=1}^{N}
		\varphi_\mu^*(1)\varphi_\nu(1)
		e^{i(\varepsilon_\mu-\varepsilon_\nu)t}
		\mathcal{C}_{\mu\nu}^{\rm MB}(J_z^i)
		-\frac{1}{2}.
		\label{eq:mbtosp_spectral_formula}
	\end{align}
	All frequencies in Eq.~\eqref{eq:mbtosp_spectral_formula} are fixed by the final non-interacting SSH Hamiltonian. The initial interaction affects only the coefficients $\mathcal{C}_{\mu\nu}^{\rm MB}(J_z^i)$. Thus the MB to SP quench is exactly a free post-quench relaxation problem with an interacting initial one-body density matrix. Equations~\eqref{eq:mbtosp_C_final_basis} and~\eqref{eq:mbtosp_spectral_formula} follow by expanding the site operators in the final normal-mode basis using Eqs.~\eqref{eq:mbtosp_d_def} and~\eqref{eq:mbtosp_inverse_modes}.
	
	The comparison with the SP to SP reference is summarized by the following corollary. Let
	\begin{align}
		C_i^{\rm SP}
		=
		\sum_{\ell\in{\rm occ}}
		|\phi_\ell^i\rangle\langle\phi_\ell^i|
		\label{eq:mbtosp_C_SP}
	\end{align}
	be the site-basis one-body projector of the SP initial state in the same fixed $S_{\rm tot}^z=0$ sector, where $|\phi_\ell^i\rangle$ are the occupied eigenstates of the initial single-particle Hamiltonian selected by that sector. Define the site-basis difference
	\begin{align}
		\Delta C_i(J_z^i)
		=
		C_i^{\rm MB}(J_z^i)-C_i^{\rm SP}.
		\label{eq:mbtosp_delta_C}
	\end{align}
	The subtraction is performed in the site basis, so both matrices carry site indices.
	\begin{corollary}[Exact comparison with the SP to SP reference]
		\label{cor:mbtosp_difference}
		One has
		\begin{align}
			S_{1,{\rm MB}\to{\rm SP}}^z(t;J_z^i)
			-
			S_{1,{\rm SP}\to{\rm SP}}^z(t)
			=
			\left[
			U_f^*(t)\Delta C_i(J_z^i)U_f^T(t)
			\right]_{11},
			\label{eq:mbtosp_difference_identity}
		\end{align}
		and therefore
		\begin{align}
			\left|
			S_{1,{\rm MB}\to{\rm SP}}^z(t;J_z^i)
			-
			S_{1,{\rm SP}\to{\rm SP}}^z(t)
			\right|
			\le
			\|\Delta C_i(J_z^i)\|.
			\label{eq:mbtosp_difference_bound}
		\end{align}
		This bound is uniform in time. It gives a rigorous criterion for closeness to the SP to SP reference: the two initial one-body density matrices must be close in operator norm. No Wick decomposition or Gaussian assumption is involved.
	\end{corollary}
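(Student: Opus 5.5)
The identity follows from linearity of the exact formula in Proposition~\ref{prop:mbtosp_exact_reduction} in the initial one-body density matrix. The SP$\to$SP edge magnetization is the special case of Eq.~\eqref{eq:mbtosp_exact_C} with $C_i^{\rm MB}$ replaced by $C_i^{\rm SP}$. This holds because the derivation of Eq.~\eqref{eq:mbtosp_exact_sum} only inserts the Heisenberg evolution~\eqref{eq:mbtosp_c_evol_main} under the same quadratic $H_f$ and takes an expectation value, and it agrees with the correlation-matrix evolution~\eqref{eq:corrmat_time}. Both initial states live in the same $S^z_{\rm tot}=0$ sector, and the same $U_f(t)$ acts on each. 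Subtracting the two instances of Eq.~\eqref{eq:mbtosp_exact_C}, the constants $-\tfrac12$ cancel. Linearity of $X\mapsto [U_f^*XU_f^T]_{11}$ then gives Eq.~\eqref{eq:mbtosp_difference_identity} with $\Delta C_i$ as in Eq.~\eqref{eq:mbtosp_delta_C}.

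For the bound, we write the right-hand side as a matrix element. Let $v(t)$ be the vector with components $v_m(t)=[U_f(t)]_{1m}$. Then
\begin{equation*}
[U_f^*\Delta C_i U_f^T]_{11}=\sum_{m,n}\overline{v_m}\,(\Delta C_i)_{mn}\,v_n=\langle v(t)|\Delta C_i|v(t)\rangle .
\end{equation*}
The vector $v(t)$ is the complex conjugate of the first column of $U_f^\dagger$, equivalently the transposed first row of the unitary $U_f(t)$. Hence $\|v(t)\|^2=[U_fU_f^\dagger]_{11}=1$. Cauchy--Schwarz then gives $|\langle v|\Delta C_i|v\rangle|\le\|\Delta C_i\|\,\|v\|^2=\|\Delta C_i\|$, which is Eq.~\eqref{eq:mbtosp_difference_bound}. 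The bound is independent of $t$, so it is uniform in time.

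Hermiticity of $\Delta C_i$, which holds since both $C$'s are Hermitian, makes the left side real, consistent with the magnetizations being real, but the bound does not use it. No Wick or Gaussian structure enters anywhere, because only one-body expectation values appear.

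The only point needing care is the index bookkeeping: the conjugation and transposition conventions in Eq.~\eqref{eq:mbtosp_exact_C} must give exactly the quadratic form $\langle v|\cdot|v\rangle$ with a unit vector $v$. A mismatch there could produce a row or column norm of $U_f$ instead, but both equal one by unitarity, so the argument survives either way.
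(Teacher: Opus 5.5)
Your proof is correct and is essentially the paper's own argument: the paper also subtracts the two instances of the exact one-body formula $\langle n_1(t)\rangle=[U_f^*(t)C_iU_f^T(t)]_{11}$, which holds for any initial state, and then bounds $\eta_1^\dagger(t)\,\Delta C_i\,\eta_1(t)$ by $\|\Delta C_i\|$ using the unit vector $[\eta_1(t)]_m=[U_f(t)]_{1m}$, which is exactly your $v(t)$. Your remarks on Hermiticity and index bookkeeping are harmless but not needed.
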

	The proof is provided in SM~\ref*{app:mbtosp_free_reduction}.
	
	The long-time average is fixed by the same final single-particle spectrum.
	\begin{corollary}[Free diagonal ensemble structure]
		\label{cor:mbtosp_diagonal_ensemble}
		For a finite chain,
		\begin{align}
			\overline{S_{1,{\rm MB}\to{\rm SP}}^z}
			=
			\sum_{\mu,\nu:\varepsilon_\mu=\varepsilon_\nu}
			\varphi_\mu^*(1)\varphi_\nu(1)
			\mathcal{C}_{\mu\nu}^{\rm MB}(J_z^i)
			-\frac{1}{2},
			\label{eq:mbtosp_diag_ensemble}
		\end{align}
		and, if the final single-particle spectrum is nondegenerate, this reduces to
		\begin{align}
			\overline{S_{1,{\rm MB}\to{\rm SP}}^z}
			=
			\sum_{\mu=1}^{N}
			|\varphi_\mu(1)|^2
			\mathcal{C}_{\mu\mu}^{\rm MB}(J_z^i)
			-\frac{1}{2}.
			\label{eq:mbtosp_diag_non_degenerate}
		\end{align}
		Equations~\eqref{eq:mbtosp_spectral_formula}--\eqref{eq:mbtosp_diag_non_degenerate} show that the diagonal ensemble is a free diagonal ensemble of the final SSH Hamiltonian. The initial interaction enters only through the mode occupations $\mathcal{C}_{\mu\mu}^{\rm MB}$ and coherences $\mathcal{C}_{\mu\nu}^{\rm MB}$ in the final eigenbasis.
	\end{corollary}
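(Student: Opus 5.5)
We start from the spectral representation Eq.~\eqref{eq:mbtosp_spectral_formula}, which holds exactly for the finite chain. Because the chain is finite, $h_f$ has a finite spectrum $\{\varepsilon_\mu\}$, and every term in Eq.~\eqref{eq:mbtosp_spectral_formula} is a constant times $e^{i\omega t}$ with $\omega=\varepsilon_\mu-\varepsilon_\nu$. The long-time average is taken as
\begin{equation}
\overline{f}=\lim_{T\to\infty}\frac{1}{T}\int_0^T f(t)\,dt .
\end{equation}
Since the sum over $\mu,\nu$ is finite, this limit can be exchanged with the sum term by term, and the constant $-\tfrac12$ passes through unchanged.

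The key elementary step is the average of a single phase:
\begin{equation}
\frac{1}{T}\int_0^T e^{i\omega t}\,dt =
\begin{cases}
1, & \omega=0,\\[2pt]
\dfrac{e^{i\omega T}-1}{i\omega T}, & \omega\neq 0 .
\end{cases}
\end{equation}
For $\omega\neq0$ the modulus is at most $2/(|\omega| T)$, which tends to $0$. The nonzero frequencies are bounded away from zero by the finite minimum spectral gap, and there are only finitely many terms, so the limit is uniform. Hence only the pairs with $\varepsilon_\mu=\varepsilon_\nu$ survive, which gives Eq.~\eqref{eq:mbtosp_diag_ensemble} directly. The coefficients $\varphi_\mu^*(1)\varphi_\nu(1)\mathcal{C}^{\rm MB}_{\mu\nu}$ are time independent, so they factor out of the average. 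Absolute convergence is not an issue, since $|\mathcal{C}^{\rm MB}_{\mu\nu}|\le1$.

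In the nondegenerate case, $\varepsilon_\mu=\varepsilon_\nu$ forces $\mu=\nu$, so the double sum collapses to the diagonal and $\varphi_\mu^*(1)\varphi_\mu(1)=|\varphi_\mu(1)|^2$. This yields Eq.~\eqref{eq:mbtosp_diag_non_degenerate}.

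The concluding interpretive statement follows by inspection. The frequencies and the weights $\varphi_\mu(1)$ depend only on $h_f$. The only dependence on $J_z^i$ enters through $\mathcal{C}^{\rm MB}_{\mu\nu}$, defined in Eq.~\eqref{eq:mbtosp_C_final_basis}: its diagonal entries give the mode occupations and its off-diagonal entries within degenerate blocks give the surviving coherences.

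The only point needing care, which I expect to be the main obstacle, is stating explicitly which averaging is used and that the system is finite, so that a spectral gap exists and the exchange of limit and sum is trivial. A subsidiary caveat concerns the SSH chain with open boundaries, which may have exact or near degeneracies, for instance the chiral pairs $\pm\varepsilon$. These affect only which pairs are retained, and the general formula Eq.~\eqref{eq:mbtosp_diag_ensemble} already accounts for them.
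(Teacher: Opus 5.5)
Your proposal is correct and is essentially the paper's proof: you time-average the spectral representation term by term with the same phase-average identity, keep only the pairs with $\varepsilon_\mu=\varepsilon_\nu$, and collapse the sum to the diagonal when the spectrum is nondegenerate. One minor correction to your closing caveat: chiral partners $\pm\varepsilon$ with $\varepsilon\neq0$ have distinct energies, so they are not degeneracies and do not affect which pairs survive; only exact equalities $\varepsilon_\mu=\varepsilon_\nu$ matter.
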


The result follows by taking the infinite-time average of Eq.~\eqref{eq:mbtosp_spectral_formula} and retaining only the stationary terms.

	It is therefore not generally correct to describe the MB to SP quench as the SP to SP quench with a single scalar renormalization. In the site basis, the exact statement is the matrix replacement
	\begin{align}
		C_i^{\rm SP}\longrightarrow C_i^{\rm MB}(J_z^i).
		\label{eq:mbtosp_matrix_renorm}
	\end{align}
	In the final-mode basis, the corresponding SP matrix is
	\begin{align}
		\mathcal{C}_{\mu\nu}^{\rm SP}
		=
		\sum_{m,n=1}^{N}
		\varphi_\mu(m)
		[C_i^{\rm SP}]_{mn}
		\varphi_\nu^*(n),
		\label{eq:mbtosp_CSP_mode_def}
	\end{align}
	and the equivalent replacement is
	\begin{align}
		\mathcal{C}_{\mu\nu}^{\rm SP}
		\longrightarrow
		\mathcal{C}_{\mu\nu}^{\rm MB}(J_z^i).
		\label{eq:mbtosp_mode_matrix_renorm}
	\end{align}
	Within the fixed $S_{\rm tot}^z=0$ sector, these matrices are expected to vary smoothly with $J_z^i$ only away from level crossings and phase boundaries. For the parameter regime considered here, the initial Hamiltonian is taken deep in the topological phase, away from the AFM and FM regimes. The sensitivity to $J_z^i$ is therefore controlled by how strongly the one-body boundary correlations entering Eq.~\eqref{eq:mbtosp_difference_identity} change under the interacting preparation. The same structure explains why the decay mechanism is inherited from the SP to SP problem. In the thermodynamic limit, the oscillatory terms in Eq.~\eqref{eq:mbtosp_spectral_formula} are governed by stationary points of the final SSH dispersion. If the spectral weights generated by $\mathcal{C}_{\mu\nu}^{\rm MB}(J_z^i)$ are smooth near the relevant stationary points and do not cancel the leading contribution, the power-law exponents are the same as in the free relaxation problem, while the prefactors are changed by the interacting initial state. This is the precise sense in which the MB to SP quench shares the SP to SP decay structure with interaction-dependent weights.

	The exact reduction provides several direct numerical diagnostics. First, starting from the interacting initial state $|\Psi_i(J_z^i)\rangle$, one may compute $S_1^z(t)$ in two independent ways under the same non-interacting final Hamiltonian with $J_z^f=0$: either by evolving the initial one-body density matrix through Eq.~\eqref{eq:mbtosp_exact_C}, or by evolving the full many-body state under $H_f$ and evaluating $S_1^z(t)$ directly. Agreement between these two calculations verifies that the MB to SP quench is completely determined by one-body data of the interacting initial state, without any Gaussian or Wick-factorization assumption. Second, the comparison with the SP to SP reference can be tested using Corollary~\ref{cor:mbtosp_difference}: the full time-dependent difference $S_{1,{\rm MB}\to{\rm SP}}^z(t;J_z^i)-S_{1,{\rm SP}\to{\rm SP}}^z(t)$ must coincide with the propagation of the matrix difference $\Delta C_i(J_z^i)$ in Eq.~\eqref{eq:mbtosp_difference_identity}, and its magnitude must remain bounded by $\|\Delta C_i(J_z^i)\|$ uniformly in time. This check isolates the origin of the MB to SP deviation from the SP to SP curve and confirms that it comes only from the interacting preparation. Third, the spectral representation in Eq.~\eqref{eq:mbtosp_spectral_formula} gives a frequency-domain and diagonal-ensemble test. As $J_z^i$ is varied, the Fourier spectrum of $S_1^z(t)$ may change in peak weights because the coefficients $\mathcal{C}_{\mu\nu}^{\rm MB}(J_z^i)$ change, but the peak positions must remain fixed by the final SSH energy differences $\varepsilon_\mu-\varepsilon_\nu$. In addition, the long-time average of the same signal must agree with the free diagonal-ensemble prediction in Corollary~\ref{cor:mbtosp_diagonal_ensemble}, reducing to Eq.~\eqref{eq:mbtosp_diag_non_degenerate} when the final single-particle spectrum is nondegenerate. These three checks together verify the real-time reduction, the controlled comparison with the SP to SP reference, and the final-Hamiltonian origin of both the oscillation frequencies and the stationary component of the MB to SP dynamics. These benchmarks are presented in SM~\ref*{app:mbtosp_benchmark}.

	\subsection{Single-Particle to Many-Body Quench}
	\label{sec:sptomb}
	
	We next consider the quench from the non-interacting topological ground state to an interacting trivial final Hamiltonian. Thus $J_z^i=0$ and $J_z^f\neq0$. This protocol isolates the effect of interactions in the post-quench evolution, since the initial state is the same as in the SP to SP reference calculation of Sec.~\ref{sec:single}. With the SSH one-body hopping amplitudes $a = J_1^f \equiv 2t(1+\delta_f)$ and
	$b = J_2^f \equiv 2t(1-\delta_f)$ (see SM~\ref*{app:ssh_conventions}), the spin
	couplings of the final Hamiltonian read
	\begin{align}
		J_1^{xy,f} = \tfrac{a}{2},\qquad
		J_2^{xy,f} = \tfrac{b}{2}, \nonumber\\
		J_1^{z,f}  = \tfrac{a}{2}\,J_z^f,\qquad
		J_2^{z,f}  = \tfrac{b}{2}\,J_z^f,
		\label{eq:ab_def_mb}
	\end{align}
	with $\delta_f > 0$, so $a > b > 0$. This scaling is essential for the analytic expansion below: when $b\to0$, both the inter-dimer $XY$ coupling and the inter-dimer Ising coupling vanish. The final Hamiltonian then becomes an exact sum of isolated two-site dimers.

	\subsubsection{Finite-time control near the decoupled-dimer limit}
	\label{sec:main}
	
	With open boundary conditions, the post-quench Hamiltonian can be written as
	\begin{align}
		H_b=H_0+\frac{b}{2}\,W,
		\label{eq:Hb_decomp}
	\end{align}
	where the decoupled-dimer Hamiltonian is
	\begin{align}
		H_0=\sum_{m=1}^{N/2} h_{2m-1,2m},
		\label{eq:H0_def}
	\end{align}
	with
	\begin{align}
		h_{2m-1,2m}
		=&\,\frac{a}{2}\Bigl(
		\sigma_{2m-1}^x\sigma_{2m}^x
		+\sigma_{2m-1}^y\sigma_{2m}^y
		\nonumber\\
		&\hspace{1.3cm}
		+J_z^f\sigma_{2m-1}^z\sigma_{2m}^z
		\Bigr),
		\label{eq:local_dimer_h}
	\end{align}
	and the unit-strength inter-dimer perturbation is
	\begin{align}
		W=\sum_{m=1}^{N/2-1} w_{2m,2m+1},
		\label{eq:W_def}
	\end{align}
	with
	\begin{align}
		w_{2m,2m+1}
		=&\,\sigma_{2m}^x\sigma_{2m+1}^x
		+\sigma_{2m}^y\sigma_{2m+1}^y
		\nonumber\\
		&+J_z^f\sigma_{2m}^z\sigma_{2m+1}^z .
		\label{eq:w_def}
	\end{align}
	Thus $H_0$ contains the strong intra-dimer bonds, while $\frac{b}{2}W$ contains the weak inter-dimer bonds. This decomposition is exact for the Hamiltonian in Eq.~\eqref{eq:params}.
	
	\begin{proposition}[Finite-time stability near the decoupled-dimer limit]
		\label{prop:finite_time_stability_sptomb}
		Let
		\begin{align}
			S_{1,b}^z(t)=e^{iH_bt}S_1^ze^{-iH_bt},\qquad
			S_{1,0}^z(t)=e^{iH_0t}S_1^ze^{-iH_0t},
			\label{eq:S1_evol_defs}
		\end{align}
		where $S_1^z=\sigma_1^z/2$. Then, for all $t\ge0$,
		\begin{align}
			\bigl\|S_{1,b}^z(t)-S_{1,0}^z(t)\bigr\|
			\le \frac{|b|}{2}\,(2+|J_z^f|)\,t .
			\label{eq:bound_main}
		\end{align}
		Consequently, for every normalized initial state $|\psi\rangle$,
		\begin{align}
			\left|
			\langle S_{1,b}^z(t)\rangle_\psi
			-\langle S_{1,0}^z(t)\rangle_\psi
			\right|
			\le \frac{|b|}{2}\,(2+|J_z^f|)\,t .
			\label{eq:expect_main}
		\end{align}
	\end{proposition}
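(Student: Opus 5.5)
The plan is a standard Duhamel (interaction-picture) argument, exploiting the fact that $S_{1,0}^z(t)$ stays supported on the first dimer.

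First, write the difference of the two Heisenberg evolutions as an integral of a derivative. Set $G(s)=e^{iH_b s}S_{1,0}^z(t-s)e^{-iH_b s}$ for $0\le s\le t$, so that $G(0)=S_{1,0}^z(t)$ and $G(t)=S_{1,b}^z(t)$. Differentiating, and using $H_b-H_0=\tfrac{b}{2}W$ from Eq.~\eqref{eq:Hb_decomp}, gives
\begin{align}
	G'(s)=e^{iH_b s}\,\Bigl[iH_b,S_{1,0}^z(t-s)\Bigr]e^{-iH_b s}-e^{iH_b s}\,\Bigl[iH_0,S_{1,0}^z(t-s)\Bigr]e^{-iH_b s}.
\end{align}
This simplifies to $G'(s)=\tfrac{ib}{2}\,e^{iH_b s}[W,S_{1,0}^z(t-s)]e^{-iH_b s}$. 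Integrating and using unitary invariance of the operator norm, we get
\begin{align}
	\bigl\|S_{1,b}^z(t)-S_{1,0}^z(t)\bigr\|\le \frac{|b|}{2}\int_0^t \bigl\|[W,S_{1,0}^z(u)]\bigr\|\,du .
\end{align}
All manipulations are legitimate because the Hilbert space is finite-dimensional.

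The key step is a locality observation. $H_0$ is a sum of mutually commuting dimer terms, so $e^{iH_0u}S_1^ze^{-iH_0u}=e^{ih_{12}u}S_1^ze^{-ih_{12}u}$. This operator is supported on sites $1,2$ and has norm $\|S_1^z\|=\tfrac12$. Among the bonds $w_{2m,2m+1}$ in Eq.~\eqref{eq:W_def}, only $w_{2,3}$ overlaps sites $\{1,2\}$; all the others commute with $S_{1,0}^z(u)$. Hence $[W,S_{1,0}^z(u)]=[w_{2,3},S_{1,0}^z(u)]$. Consequently
\begin{align}
	\bigl\|[W,S_{1,0}^z(u)]\bigr\|\le 2\,\|w_{2,3}\|\,\|S_{1,0}^z(u)\|=\|w_{2,3}\|.
\end{align}
By the triangle inequality with $\|\sigma^\alpha\sigma^\alpha\|=1$, we have $\|w_{2,3}\|\le 2+|J_z^f|$. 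Inserting this time-independent bound into the integral yields Eq.~\eqref{eq:bound_main}.

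The expectation-value statement, Eq.~\eqref{eq:expect_main}, then follows immediately from $|\langle\psi|X|\psi\rangle|\le\|X\|$ for normalized $|\psi\rangle$.

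The only step requiring care is the locality reduction. One must check that the perturbation sum really begins at bond $(2,3)$ with open boundaries, so that the first dimer has exactly one weak neighbor bond. One must also check that $e^{iH_0u}$ does not spread $S_1^z$ beyond the first dimer; this holds because the dimer Hamiltonians $h_{2m-1,2m}$ act on disjoint sites and commute. Without this reduction a naive bound would scale with $\|W\|\sim N$. The constant is not optimized: for instance, $\|\sigma^x\sigma^x+\sigma^y\sigma^y\|=2$ exactly, but the triangle-inequality bound already gives the stated constant.
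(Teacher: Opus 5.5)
Your proposal is correct and matches the paper's proof essentially line by line. Both use the same Duhamel interpolation $e^{iH_bs}S_{1,0}^z(t-s)e^{-iH_bs}$, the same locality reduction $[W,S_{1,0}^z(u)]=[w_{2,3},S_{1,0}^z(u)]$ (valid because $S_{1,0}^z(u)=e^{ih_{12}u}S_1^ze^{-ih_{12}u}$ is supported on sites $1,2$), and the same estimate $\|[w_{2,3},S_{1,0}^z(u)]\|\le 2\|w_{2,3}\|\,\|S_1^z\|\le 2+|J_z^f|$. (Your consistent use of the main-text normalization $H_b=H_0+\tfrac{b}{2}W$ is actually cleaner than the appendix, which writes $H_b=H_0+bW$ but then carries the factor $\tfrac{b}{2}$.)
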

	
	The proof is given in SM~\ref*{app:proof}. The estimate is a finite-time stability bound. It shows that, for fixed $J_z^f$, the exact interacting edge dynamics remains close to the decoupled-dimer dynamics on times $t\ll[|b|(2+|J_z^f|)]^{-1}/2$. It does not imply a monotonic improvement with increasing $J_z^f$, since the upper bound itself grows with $|J_z^f|$. Its role is instead to identify $b = J_2^f = 2(1-\delta_f)$, equivalently $J_2^{xy,f} = b/2$, as the small parameter controlling the departure from the exactly dimerized limit.

	\begin{proposition}[Boundary-dimer continuity equation]
		\label{prop:boundary_dimer_continuity}
		Define
		\begin{align}
			M_{12}^z=S_1^z+S_2^z
			=\frac{1}{2}(\sigma_1^z+\sigma_2^z).
			\label{eq:M12_def_main}
		\end{align}
		For the full Hamiltonian $H_b$, one has
		\begin{align}
			\frac{d}{dt}M_{12}^z(t)
			=\frac{b}{2}\,j_{2,3}^z(t),
			\label{eq:M12_current_main}
		\end{align}
		where
		\begin{align}
			j_{2,3}^z=
			\sigma_2^y\sigma_3^x-\sigma_2^x\sigma_3^y .
			\label{eq:j23_def_main}
		\end{align}
		Therefore,
		\begin{align}
			\left|
			\frac{d}{dt}\langle M_{12}^z(t)\rangle
			\right|
			\le |b|,
			\label{eq:current_bound_main}
		\end{align}
		and
		\begin{align}
			\left|
			\langle M_{12}^z(t)\rangle-\langle M_{12}^z(0)\rangle
			\right|
			\le |b|\,t .
			\label{eq:M12_integrated_bound_main}
		\end{align}
	\end{proposition}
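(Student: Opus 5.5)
The plan is to work in the Heisenberg picture, with $O(t)=e^{iH_bt}Oe^{-iH_bt}$ as in Eq.~\eqref{eq:S1_evol_defs}, so that $\frac{d}{dt}M_{12}^z(t)=e^{iH_bt}\,i[H_b,M_{12}^z]\,e^{-iH_bt}$. It therefore suffices to compute the single commutator $i[H_b,M_{12}^z]$ using the decomposition~\eqref{eq:Hb_decomp}--\eqref{eq:w_def}. The bounds~\eqref{eq:current_bound_main}--\eqref{eq:M12_integrated_bound_main} should then follow from an operator-norm estimate and integration in time.

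\emph{Step 1: discard the terms that commute with $M_{12}^z$.} The boundary dimer term $h_{1,2}$ conserves $\sigma_1^z+\sigma_2^z$. Its $ZZ$ part is diagonal. Its $XY$ part equals $2(\sigma_1^+\sigma_2^-+\sigma_1^-\sigma_2^+)$, which preserves the total $z$-spin of the pair. All other dimers $h_{2m-1,2m}$ with $m\ge2$, and all inter-dimer bonds $w_{2m,2m+1}$ with $m\ge2$, act on sites disjoint from $\{1,2\}$, so they commute with $M_{12}^z$. In the remaining bond $w_{2,3}$, the Ising piece $J_z^f\sigma_2^z\sigma_3^z$ commutes with $\sigma_2^z$. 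This is the precise statement that the $ZZ$ interaction does not transport $z$ magnetization across the interface. The only surviving contribution is therefore
\[
i\bigl[\tfrac{b}{2}(\sigma_2^x\sigma_3^x+\sigma_2^y\sigma_3^y),\tfrac12\sigma_2^z\bigr].
\]

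\emph{Step 2: evaluate this commutator with the Pauli algebra.} Using $[\sigma^x,\sigma^z]=-2i\sigma^y$ and $[\sigma^y,\sigma^z]=2i\sigma^x$ on site $2$, with the site-$3$ factors as spectators, one gets
\[
[\sigma_2^x\sigma_3^x+\sigma_2^y\sigma_3^y,\sigma_2^z]=-2i\sigma_2^y\sigma_3^x+2i\sigma_2^x\sigma_3^y .
\]
Multiplying by $i\,\tfrac{b}{2}\cdot\tfrac12$ gives $\tfrac{b}{2}(\sigma_2^y\sigma_3^x-\sigma_2^x\sigma_3^y)=\tfrac b2 j_{2,3}^z$. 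Conjugating by $e^{iH_bt}$ yields Eq.~\eqref{eq:M12_current_main}. The main place an error could occur is this sign bookkeeping, so I would double-check it against the convention $\dot O=i[H,O]$.

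\emph{Step 3: bounds.} Each of $\sigma_2^y\sigma_3^x$ and $\sigma_2^x\sigma_3^y$ is a unitary, self-adjoint Pauli string of norm $1$, so $\|j_{2,3}^z\|\le2$. Unitary conjugation preserves the norm, so $\|j_{2,3}^z(t)\|\le 2$ as well. For any normalized state we then have $|\frac{d}{dt}\langle M_{12}^z(t)\rangle|=\frac{|b|}{2}|\langle j_{2,3}^z(t)\rangle|\le|b|$, which is Eq.~\eqref{eq:current_bound_main}. Writing $\langle M_{12}^z(t)\rangle-\langle M_{12}^z(0)\rangle=\int_0^t\frac{d}{ds}\langle M_{12}^z(s)\rangle\,ds$ and applying the triangle inequality gives Eq.~\eqref{eq:M12_integrated_bound_main}. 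No step here is a real obstacle; the argument is a locality computation.
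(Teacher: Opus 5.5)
Your proposal is correct and follows the paper's proof essentially step for step. The paper also discards the terms of $H_b$ that commute with $M_{12}^z$ (the intra-dimer bond, the bonds away from sites $1,2$, and the Ising part of $w_{2,3}$), evaluates the single surviving commutator with the weak $XY$ bond via the Pauli algebra to obtain $\frac{b}{2}j_{2,3}^z$ with the same signs you found, and concludes with $\|j_{2,3}^z\|\le 2$ and integration in time.
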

	
	This identity is derived in SM~\ref*{app:leakageproof}. It shows that the total magnetization of the first dimer changes only through the weak $XY$ bond connecting sites $2$ and $3$. The $ZZ$ interactions do not directly carry $z$ magnetization across this boundary.
	
	We next record a stronger asymptotic statement in the strongly dimerized final limit. The observable controlled here is the first-dimer magnetization $Q_0=M_{12}^z=S_1^z+S_2^z$, not the single-site magnetization $S_1^z$. This is the natural rigorous object because $Q_0$ is exactly conserved when the final Hamiltonian is a sum of isolated dimers, whereas $S_1^z$ may still oscillate inside the first dimer. The result is a singular statement as $b=J_2^f=2(1-\delta_f)\to0$: it proves memory on time windows that grow as powers of $1/b$, but it does not claim absence of eventual thermalization at fixed nonzero $b$ or monotonic improvement with increasing $J_z^f$.
	\begin{proposition}[Asymptotically long first-dimer memory]
		\label{prop:asymptotic_boundary_memory}
		Let
		\begin{align}
			H_b=H_0+\varepsilon W,\qquad \varepsilon=\frac{b}{2},
			\label{eq:normal_form_Hb_main}
		\end{align}
		with $H_0$ and $W$ defined in Eqs.~\eqref{eq:H0_def}--\eqref{eq:w_def}, and let
		\begin{align}
			Q_0=M_{12}^z=S_1^z+S_2^z .
			\label{eq:Q0_normal_form_main}
		\end{align}
		Fix an integer $p\ge1$. The meaning of $p$ is that we construct a corrected boundary charge whose commutator with $H_b$ has no terms of order $\varepsilon,\varepsilon^2,\ldots,\varepsilon^p$. Assume that the finite-cluster equations needed to construct this corrected charge are solvable, or equivalently that no resonant energy differences of $H_0$ are encountered up to order $p$ (see SM~\ref*{app:boundary_normal_form}). Then there exist local operators $Q_1,\ldots,Q_p$, with $Q_n$ supported on the first $n+1$ dimers, and a constant $C_p>0$ independent of the total chain length such that, for $|\varepsilon|\le 1$ (i.e., $|b|\le 2$),
		\begin{align}
			Q^{(p)}=Q_0+\varepsilon Q_1+\cdots+\varepsilon^p Q_p
			\label{eq:dressed_charge_main}
		\end{align}
		satisfies
		\begin{align}
			\bigl\|Q^{(p)}-Q_0\bigr\|
			\le C_p|\varepsilon|,
			\label{eq:dressed_charge_close_main}
		\end{align}
		and
		\begin{align}
			\bigl\|[H_b,Q^{(p)}]\bigr\|
			\le C_p|\varepsilon|^{p+1}.
			\label{eq:dressed_charge_comm_main}
		\end{align}
		Consequently, for every normalized initial state $|\psi\rangle$,
		\begin{align}
			\left|
			\langle Q_0(t)\rangle_\psi-\langle Q_0(0)\rangle_\psi
			\right|
			\le
			2C_p|\varepsilon|+C_p|\varepsilon|^{p+1}|t|.
			\label{eq:asymptotic_memory_bound_main}
		\end{align}
		In particular, for every $\alpha<p+1$,
		\begin{align}
			\lim_{b\to0}
			\sup_{0\le t\le b^{-\alpha}}
			\left|
			\langle Q_0(t)\rangle_\psi-\langle Q_0(0)\rangle_\psi
			\right|
			=0 .
			\label{eq:asymptotic_memory_limit_main}
		\end{align}
	\end{proposition}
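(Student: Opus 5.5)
The plan is an order-by-order (Schrieffer--Wolff / normal-form) construction of the dressed charge, followed by a Duhamel integration. Expanding $[H_b,Q^{(p)}]$ in powers of $\varepsilon$, the coefficient of $\varepsilon^n$ is $[H_0,Q_n]+[W,Q_{n-1}]$ for $1\le n\le p$. The leftover piece is the single top-order term $\varepsilon^{p+1}[W,Q_p]$. The zeroth-order condition $[H_0,Q_0]=0$ holds exactly, since $Q_0=S_1^z+S_2^z$ commutes with $h_{12}$: the XXZ dimer conserves its total $S^z$. We therefore solve recursively the homological equations
\begin{align}
[H_0,Q_n]=-[W,Q_{n-1}],\qquad n=1,\ldots,p .
\end{align}
If these hold, then $[H_b,Q^{(p)}]=\varepsilon^{p+1}[W,Q_p]$ exactly.

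Next, support and locality. If $Q_{n-1}$ is supported on dimers $1,\ldots,n$, then $[W,Q_{n-1}]$ picks up only the bond $w_{2n,2n+1}$ plus the interior bonds, so it is supported on dimers $1,\ldots,n+1$. The equation is then posed on the finite cluster $\mathcal{H}_{n+1}$ of the first $n+1$ dimers, because the $H_0$ terms outside commute with everything there. In the eigenbasis of $H_0^{(n+1)}=\sum_{m\le n+1}h_{2m-1,2m}$ we set $(Q_n)_{\alpha\beta}=-(X_n)_{\alpha\beta}/(E_\alpha-E_\beta)$, where $X_n=[W,Q_{n-1}]$.

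The solvability hypothesis is exactly what licenses this step. We need $(X_n)_{\alpha\beta}=0$ whenever $E_\alpha=E_\beta$; this is the stated ``no resonance'' assumption. (One could also absorb a diagonal part into the charge, but the hypothesis lets us avoid that.) The energies are finitely many sums of single-dimer levels, which are $a/2(J_z^f\pm 2)$ and $-aJ_z^f/2$ with multiplicity 2 in the $S^z=\pm1$ sector. So on the nonresonant pairs the gaps are bounded below by some $\Delta_n>0$ that depends only on $a$, $J_z^f$ and $n$, never on $N$. This gives
\begin{align}
\|Q_n\|\le \frac{\dim(\mathcal{H}_{n+1})\,\|X_n\|}{\Delta_n},\qquad \|X_n\|\le 2\,n\,(2+|J_z^f|)\,\|Q_{n-1}\|,
\end{align}
with a crude Frobenius-to-operator-norm bound. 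By induction every $\|Q_n\|$ is an $N$-independent constant.

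The constants then follow. For $|\varepsilon|\le1$ we have $\|Q^{(p)}-Q_0\|\le|\varepsilon|\sum_n\|Q_n\|$ and $\|[H_b,Q^{(p)}]\|\le|\varepsilon|^{p+1}\|W\text{-bonds touching }Q_p\|\,2\|Q_p\|$. Here only the $p+1$ bonds adjacent to or inside the support contribute, which again gives an $N$-independent bound. We take $C_p$ to be the maximum of these two constants.

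For the dynamical statement, write
\begin{align}
Q^{(p)}(t)-Q^{(p)}(0)=i\int_0^t e^{iH_bs}[H_b,Q^{(p)}]e^{-iH_bs}\,ds ,
\end{align}
so that $\|Q^{(p)}(t)-Q^{(p)}\|\le C_p|\varepsilon|^{p+1}|t|$. Then
\begin{align}
Q_0(t)-Q_0=\bigl(Q_0-Q^{(p)}\bigr)(t)+\bigl(Q^{(p)}(t)-Q^{(p)}\bigr)+\bigl(Q^{(p)}-Q_0\bigr),
\end{align}
and the triangle inequality with unitary invariance of the norm gives $2C_p|\varepsilon|+C_p|\varepsilon|^{p+1}|t|$. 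Setting $t\le b^{-\alpha}$ with $\varepsilon=b/2$ makes the right-hand side $O(b)+O(b^{p+1-\alpha})$, which tends to $0$ for $\alpha<p+1$.

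The main obstacle is the homological step: ensuring the right-hand sides are off-diagonal in $H_0$ energy at each order. Degeneracies of the decoupled-dimer spectrum are generic; for example, the $S^z=\pm1$ dimer states are degenerate with one another. This is precisely why the proposition assumes solvability. My first attempt would check it explicitly at $n=1$ from the boundary-bond structure. The remaining work is bookkeeping to keep all gap and dimension constants independent of $N$.
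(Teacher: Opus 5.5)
Your proposal is correct and is essentially the paper's own proof: the same homological equations $[H_0,Q_n]=-[W,Q_{n-1}]$ solved on the growing $(n+1)$-dimer cluster under the solvability hypothesis, the same exact residual $[H_b,Q^{(p)}]=\varepsilon^{p+1}[W,Q_p]$, and the same Duhamel-plus-three-term-triangle-inequality step, and taking $C_p$ as the maximum of the dressing and residual constants tidies the paper's separate $C_p$ and $B_p$ (your single-dimer levels have the sign of $J_z^f$ flipped, which is harmless since only finiteness of the cluster spectrum enters). Be warned, though, that the $n=1$ check you plan comes out negative for every $J_z^f$, contrary to the paper's remark that solvability fails only at isolated values: since $[W,Q_0]=2(\sigma_2^-\sigma_3^+-\sigma_2^+\sigma_3^-)$, the states $|t_+\rangle_{12}|s\rangle_{34}$ and $|s\rangle_{12}|t_+\rangle_{34}$ (with $|s\rangle$ the singlet) are exactly degenerate under $H_0$ yet $\langle s\,t_+|[W,Q_0]|t_+\,s\rangle=-1$, so the hypothesis shared by your argument and the paper's is never met---physically, a $t_+$ excitation hops resonantly through the singlet background at rate $\sim b$, as in the free case---and the proposition is true only vacuously.
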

	For the SP$\to$MB quench, the chosen topological edge sector gives $|\langle Q_0(0)\rangle|=q_*>0$, up to corrections controlled by the initial boundary-mode localization. Proposition~\ref{prop:asymptotic_boundary_memory} then implies
	\begin{align}
		|\langle Q_0(t)\rangle|
		\ge
		q_*-2C_p|\varepsilon|-C_p|\varepsilon|^{p+1}|t| .
		\label{eq:asymptotic_memory_lower_bound_main}
	\end{align}
	Thus, for any fixed $J_z^f$ away from the finite-order resonance set, the first-dimer memory remains nonzero throughout any time window $0\le t\le b^{-\alpha}$ with $\alpha<p+1$, once the final trivial Hamiltonian is sufficiently strongly dimerized. Since $p$ may be fixed arbitrarily large before taking $b\to0$, this gives a controlled sense in which the boundary-dimer memory can be made arbitrarily long. The same state-independent estimate applies to MB$\to$MB quenches, with the interacting preparation changing only the initial value $q_*$. The isotropic point $J_z^f=1$ is excluded from this non-resonant statement, consistently with the leakage channel identified in Sec.~\ref{sec:isotropic_resonance}.
	
	\subsubsection{Interaction-dependent transition amplitudes}
	\label{sec:transition_amplitude_main}
	
	The preceding bounds control the small-$b$ regime but do not by themselves determine whether interactions enhance or suppress the edge signal at fixed $b$. To identify where $J_z^f$ enters, it is useful to diagonalize the isolated final dimer. For
	\begin{align}
		h_{12}
		=&\,\frac{a}{2}(\sigma_1^x\sigma_2^x+\sigma_1^y\sigma_2^y)
		+\frac{a}{2}J_z^f\sigma_1^z\sigma_2^z ,
		\label{eq:h12_main}
	\end{align}
	the polarized states satisfy
	\begin{align}
		E_{\uparrow\uparrow}=E_{\downarrow\downarrow}=\frac{a}{2}J_z^f,
		\label{eq:polarized_energies_main}
	\end{align}
	while the remaining two eigenstates are
	\begin{align}
		|+\rangle
		= \frac{|\uparrow\downarrow\rangle+|\downarrow\uparrow\rangle}{\sqrt{2}},
		\qquad
		E_+=-\frac{a}{2}J_z^f+a,
		\label{eq:plus_state_main}\\
		|-\rangle
		= \frac{|\uparrow\downarrow\rangle-|\downarrow\uparrow\rangle}{\sqrt{2}},
		\qquad
		E_-=-\frac{a}{2}J_z^f-a.
		\label{eq:minus_state_main}
	\end{align}
	Thus the interaction already shifts the local many-body level spacings at $b=0$.
	
	Restoring the weak inter-dimer coupling $\frac{b}{2}W$, first-order time-dependent perturbation theory gives, for distinct eigenstates $H_0|\alpha\rangle=E_\alpha|\alpha\rangle$ and $H_0|\beta\rangle=E_\beta|\beta\rangle$,
	\begin{align}
		P_{\beta\leftarrow\alpha}(t)
		=
		b^2|W_{\beta\alpha}|^2
		\frac{\sin^2[(E_\beta-E_\alpha)t/2]}{(E_\beta-E_\alpha)^2},
		\label{eq:transition_prob_main}
	\end{align}
	with $W_{\beta\alpha}=\langle\beta|W|\alpha\rangle$. A simple proof is provided in SM~\ref*{app:perturbationproof}. Since the isolated-dimer eigenvectors are independent of $J_z^f$, the product eigenbasis of $H_0$ is also independent of $J_z^f$. Writing
	\begin{align}
		W=W^{xy}+J_z^f W^{zz},
		\qquad
		H_0=a\bigl(H_0^{xy}+J_z^f H_0^{zz}\bigr),
		\nonumber
	\end{align}
	one may therefore parameterize each channel by
	\begin{align}
		W_{\beta\alpha}=&X_{\beta\alpha}+J_z^f Z_{\beta\alpha},\\
		\Delta_{\beta\alpha}\equiv& E_\beta-E_\alpha
		=a\bigl(\omega_{\beta\alpha}^{xy}+J_z^f\omega_{\beta\alpha}^{z}\bigr),
		\label{eq:Jz_explicit_matrix_elements_main}
	\end{align}
	where $X_{\beta\alpha}$, $Z_{\beta\alpha}$, $\omega_{\beta\alpha}^{xy}$, and $\omega_{\beta\alpha}^{z}$ are independent of $J_z^f$. For nonresonant pairs,
	\begin{align}
		P_{\beta\leftarrow\alpha}(t)
		=
		b^2 |X_{\beta\alpha}+J_z^f Z_{\beta\alpha}|^2
		\frac{\sin^2(\Delta_{\beta\alpha} t/2)}{\Delta_{\beta\alpha}^2}.
		\label{eq:transition_prob_Jz_explicit_main}
	\end{align}
	This expression makes the perturbative mechanism explicit: $J_z^f$ modifies leakage through both the transition matrix element and the energy detuning.
	
	Several immediate consequences follow. First, for nonresonant channels,
	\begin{align}
		P_{\beta\leftarrow\alpha}(t)
		\le
		\frac{b^2 |X_{\beta\alpha}+J_z^f Z_{\beta\alpha}|^2}{\Delta_{\beta\alpha}^2},
		\label{eq:transition_prob_channel_bound_main}
	\end{align}
	so leakage is perturbatively suppressed when
	$|b|\,|X_{\beta\alpha}+J_z^f Z_{\beta\alpha}|\ll 2|\Delta_{\beta\alpha}|$.
	Second, for short times,
	\begin{align}
		P_{\beta\leftarrow\alpha}(t)
		=
		\frac{b^2}{4}|X_{\beta\alpha}+J_z^f Z_{\beta\alpha}|^2 t^2 + O(t^4).
		\label{eq:transition_prob_short_time_main}
	\end{align}
	Third, for nonresonant channels with $\Delta_{\beta\alpha}\neq0$,
	\begin{align}
		\lim_{T\to\infty}\frac{1}{T}\int_0^T dt\,P_{\beta\leftarrow\alpha}(t)
		=
		\frac{b^2 |X_{\beta\alpha}+J_z^f Z_{\beta\alpha}|^2}{2\Delta_{\beta\alpha}^2}.
		\label{eq:transition_prob_timeavg_main}
	\end{align}
	Finally, if $Z_{\beta\alpha}\neq0$ and $\omega_{\beta\alpha}^{z}\neq0$, then
	\begin{align}
		\lim_{|J_z^f|\to\infty}
		\left[
		\lim_{T\to\infty}\frac{1}{T}\int_0^T dt\,P_{\beta\leftarrow\alpha}(t)
		\right]
		=
		\frac{b^2 |Z_{\beta\alpha}|^2}{2a^2(\omega_{\beta\alpha}^{z})^2}.
		\label{eq:transition_prob_largeJz_main}
	\end{align}
	These channel-resolved relations provide explicit $O(b^2)$ checks on the perturbative picture, but they do not imply a universal monotonic dependence on $J_z^f$, since both numerator and denominator are channel dependent.

	\subsubsection{Isotropic leakage resonance at $J_z^f=1$}
	\label{sec:isotropic_resonance}
	The same dimer spectrum explains why the finite-window memory in Fig.~\ref{fig:sptomb_late_time_average} can show a suppression near $J_z^f=1$. At this value the local dimer Hamiltonian in Eq.~\eqref{eq:h12_main} is isotropic and has an $SU(2)$-symmetric triplet. In the notation of Eqs.~\eqref{eq:polarized_energies_main}--\eqref{eq:minus_state_main}, the triplet states are $|t_+\rangle=|\uparrow\uparrow\rangle$, $|t_0\rangle=|+\rangle$, and $|t_-\rangle=|\downarrow\downarrow\rangle$, and their energies satisfy
	\begin{align}
		E_{t_+}=E_{t_0}=E_{t_-}
		\qquad
		\text{at}
		\qquad
		J_z^f=1 .
		\label{eq:triplet_degeneracy_main}
	\end{align}
	For two neighboring dimers, the weak inter-dimer $XY$ term connects the leakage channel
	\begin{align}
		|t_+\rangle_{12}|t_-\rangle_{34}
		\longleftrightarrow
		|t_0\rangle_{12}|t_0\rangle_{34},
		\label{eq:isotropic_channel_main}
	\end{align}
	with nonzero matrix element and detuning
	\begin{align}
		\Delta_{\rm iso}
		=
		2E_{t_0}-E_{t_+}-E_{t_-}
		=
		2a(1-J_z^f) \xrightarrow[]{J_z^f \to 1} 0.
		\label{eq:isotropic_detuning_main}
	\end{align}
	The explicit spin-basis derivation of Eqs.~\eqref{eq:isotropic_channel_main} and~\eqref{eq:isotropic_detuning_main} is given in SM~\ref*{app:isotropic_channel_derivation}. Consequently this leakage channel is resonant at the isotropic point. Taking the resonant limit of Eq.~\eqref{eq:transition_prob_main} gives $P_{\beta\leftarrow\alpha}^{(1)}(t)=(b^2/4)|W_{\beta\alpha}|^2t^2$, rather than the bounded oscillatory form obtained for a fixed nonzero detuning. The dip near $J_z^f=1$ should therefore be interpreted as a symmetry-enabled leakage resonance, not as an additional protection point. By contrast, in the small-$J_z^f$ perturbative regime, Eqs.~\eqref{eq:Jz_explicit_matrix_elements_main}--\eqref{eq:transition_prob_short_time_main} show that the correction has no universal sign because both the matrix elements and detunings are channel dependent. This is why very small $J_z^f$ can either increase or decrease the finite-window memory depending on the state, size, and averaging window, whereas away from the perturbative small-$J_z^f$ regime, the TEBD data show enhanced memory over broad interacting ranges with a local suppression at the isotropic $SU(2)$ point at $J_z^f=1$.
	
	\begin{figure*}[t]
		\centering
		\includegraphics[width=0.95\textwidth]{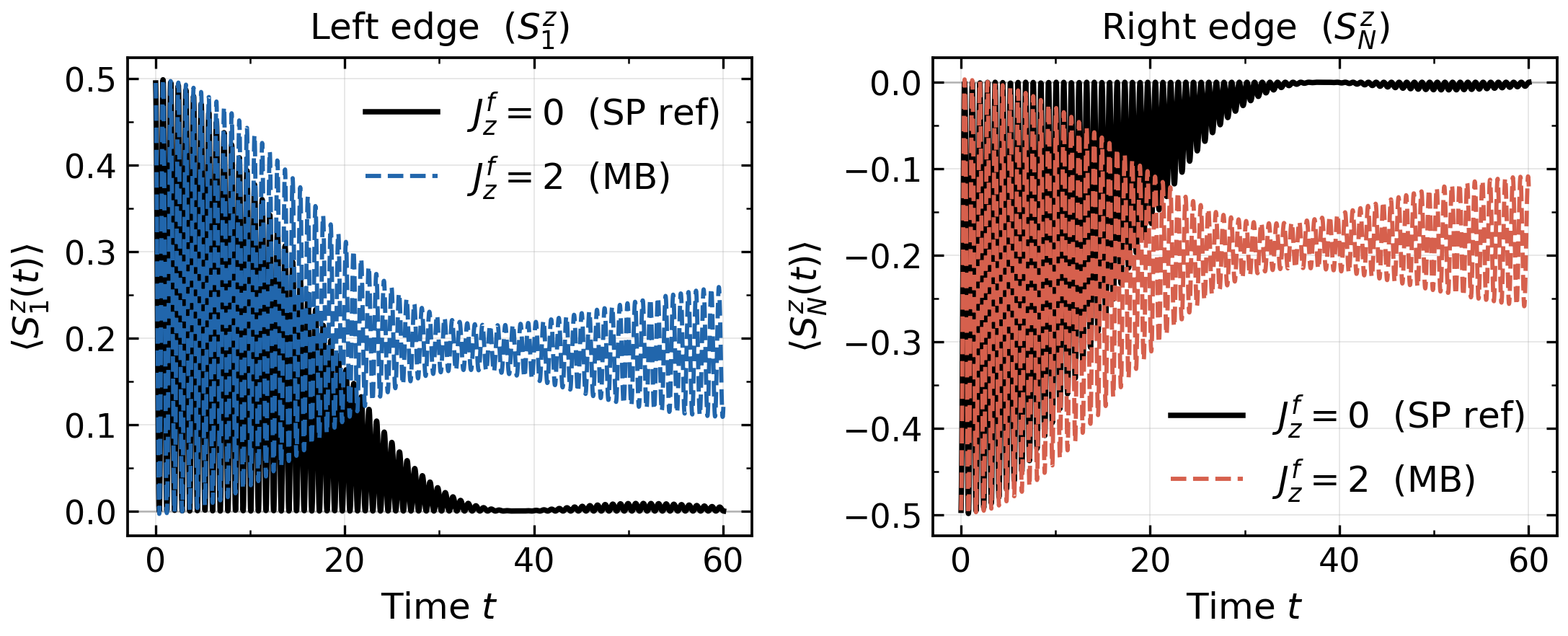}
		\caption{SP to MB quench at fixed dimerization, $\delta_i=-0.95$ and $\delta_f=0.95$, for $N=36$ sites with open boundary conditions in the fixed $S_{\rm tot}^z=0$ sector. The initial state is the non-interacting topological ground state, and the time evolution is generated by the interacting final Hamiltonian at $J_z^f=2$. The left and right panels show $S_1^z(t)$ and $S_N^z(t)$, respectively. The curve with $J_z^f=0$ is the SP reference at the same dimerization; the interacting curve shows enhanced finite-time retention of the edge magnetization relative to the SP reference.}
		\label{fig:sptomb_edge}
	\end{figure*}
	
	\begin{figure}[t]
		\centering
		\includegraphics[width=0.95\columnwidth]{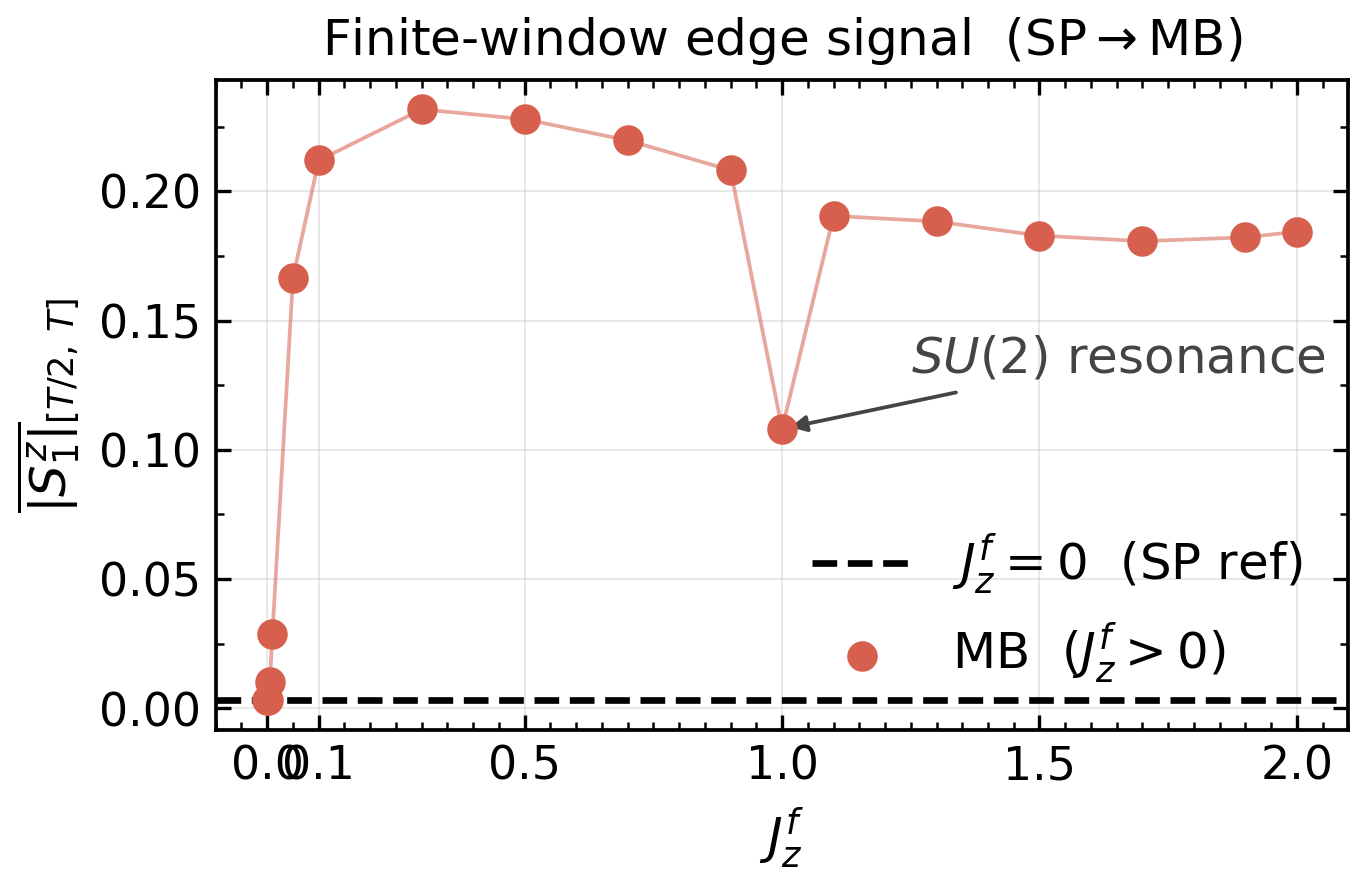}
		\caption{Finite-window edge signal for the SP to MB quench at $\delta_i=-0.95$, $\delta_f=0.95$, $N=36$ sites, with open boundary conditions in the fixed $S_{\rm tot}^z=0$ sector. The plotted quantity is $\overline{|S_1^z|}_{[T_1,T_2]}$, defined in Eq.~\eqref{eq:late_time_average_def}, as a function of the final interaction strength $J_z^f$. The $J_z^f=0$ point is the SP reference at the same $\delta_i$ and $\delta_f$. Interacting final Hamiltonians generically enhance the finite-window edge signal above the SP reference, while the local suppression near $J_z^f=1$ reflects the resonant leakage	channel discussed in Sec.~\ref{sec:isotropic_resonance}.}
		\label{fig:sptomb_late_time_average}
	\end{figure}
	
	\subsubsection{TEBD memory dynamics}
	\label{sec:sptomb_tebd}
	
	We now compare the analytic results above with TEBD simulations of the SP$\to$MB quench. The primary observable is the edge magnetization
	\begin{align}
		S_1^z(t)=\langle\Psi(t)|S_1^z|\Psi(t)\rangle,
		\label{eq:S1_obs_sptomb}
	\end{align}
	where $|\Psi(0)\rangle$ is the non-interacting topological ground state and
	$|\Psi(t)\rangle=e^{-iH_bt}|\Psi(0)\rangle$.
	The non-interacting evolution at the same $\delta_i$ and $\delta_f$ defines the SP reference curve,
	\begin{align}
		S_{1,\mathrm{SP}}^z(t) = S_1^z(t;\,J_z^f=0),
		\label{eq:SP_reference_def}
	\end{align}
	and interacting curves correspond to $J_z^f\neq0$. To reduce the comparison to a single number over a finite window we use
	\begin{align}
		\overline{|S_1^z|}_{[T_1,T_2]}
		=
		\frac{1}{T_2-T_1}
		\int_{T_1}^{T_2} dt\,|S_1^z(t)|,
		\label{eq:late_time_average_def}
	\end{align}
	with $[T_1,T_2]=[T_{\rm max}/2,T_{\rm max}]$ throughout.
	
	Figure~\ref{fig:sptomb_edge} shows $S_1^z(t)$ at $\delta_i=-0.95$, $\delta_f=0.95$ for $J_z^f=2$ alongside the SP reference. Away from the isotropic leakage resonance, the interacting final Hamiltonian yields visibly stronger retention of the edge signal throughout the simulated window.
	
	Figure~\ref{fig:sptomb_late_time_average} summarizes the $J_z^f$ dependence via Eq.~\eqref{eq:late_time_average_def}. Interacting final Hamiltonians generically raise the finite-window average above the SP reference, but the dependence on $J_z^f$ is non-monotonic. The local suppression near $J_z^f=1$ is the isotropic leakage resonance of Sec.~\ref{sec:isotropic_resonance}; the memory at that point nevertheless remains above the SP reference value. Away from the resonance, Eq.~\eqref{eq:bound_main} shows that increasing $J_z^f$ grows the norm of the inter-dimer perturbation while Eq.~\eqref{eq:transition_prob_main} shows that the same interaction changes both matrix elements and detunings in a channel-dependent way, so no universal sign is expected at small $J_z^f$. Numerically exact checks of the analytic dimer-limit bounds are in SM~\ref*{app:dimer_benchmarks}.
	
	The SP$\to$MB data therefore establish a clear organizing principle: the memory enhancement is a property of the interacting final Hamiltonian, present broadly in $J_z^f$ with a single resonant exception, and is not a fine-tuned consequence of the free initial preparation. As shown in Sec.~\ref{sec:mbtomb}, this conclusion carries over without qualitative change when the initial state is also interacting.

	\subsection{Many-Body to Many-Body Quench}
	\label{sec:mbtomb}
	
	Finally, we consider quenches in which both the initial state and the final Hamiltonian are interacting. This is the most general case studied in this work and the least accessible to closed-form analysis. Accordingly, we organize the subsection around the strongly dimerized final regime, where the exact decomposition in Eq.~\eqref{eq:Hb_decomp} remains available and Propositions~\ref{prop:finite_time_stability_sptomb} and \ref{prop:boundary_dimer_continuity} can be used without modification. The goal is to identify which aspects of the SP to MB dynamics survive when the initial state is itself interacting, and which aspects depend quantitatively on the many-body correlations of the preparation.
	
	Let
	\begin{align}
		\rho_i(J_z^i)
		=
		|\Psi_i(J_z^i)\rangle\langle\Psi_i(J_z^i)|
	\end{align}
	denote the topological initial ground state of the Hamiltonian $H(\delta_i,J_z^i)$ in the fixed $S_{\rm tot}^z=0$ sector, and let
	\begin{align}
		\rho_{12}^{(i)}(J_z^i)
		=
		\Tr_{3,\dots,N}\rho_i(J_z^i)
		\label{eq:mbtomb_rho12_def}
	\end{align}
	be its reduced density matrix on the first dimer. Here and below, $\|\cdot\|_1$ denotes the trace norm,
	\begin{align}
		\|A\|_1=\Tr\sqrt{A^\dagger A}.
		\label{eq:trace_norm_def_mbtomb}
	\end{align}
	Because Propositions~\ref{prop:finite_time_stability_sptomb} and \ref{prop:boundary_dimer_continuity} were proved for arbitrary normalized initial states, they apply here verbatim once the initial state is replaced by $\rho_i(J_z^i)$.
	
	\subsubsection{Exact dimer limit and finite-time comparison}
	\label{sec:mbtomb_dimerlimit}
	
	At $b=0$, the final Hamiltonian is the decoupled-dimer Hamiltonian $H_0=\sum_{m=1}^{N/2}h_{2m-1,2m}$. Since the evolution factorizes over dimers, the edge dynamics is determined exactly by the initial reduced state on the first dimer.
	
	\begin{proposition}[Exact first-dimer reduction at $b=0$]
		\label{prop:mbtomb_b0_reduction}
		For the MB to MB quench with final Hamiltonian $H_0$, one has
		\begin{align}
			\langle S_1^z(t)\rangle_{b=0}
			=
			\Tr_{12}\!\left[
			\rho_{12}^{(i)}(J_z^i)\,
			e^{ih_{12}t}S_1^z e^{-ih_{12}t}
			\right].
			\label{eq:mbtomb_b0_reduction}
		\end{align}
		In the eigenbasis
		$\{|\uparrow\uparrow\rangle, |\downarrow\downarrow\rangle, |+\rangle, |-\rangle\}$
		of $h_{12}$ defined in Eqs.~\eqref{eq:polarized_energies_main}--\eqref{eq:minus_state_main}, this becomes
		\begin{align}
			\langle S_1^z(t)\rangle_{b=0}
			=
			\frac{1}{2}\bigl(
			p_{\uparrow\uparrow}-p_{\downarrow\downarrow}
			\bigr)
			+
			\Re\!\left[
			e^{-i2at}\rho_{+-}^{(i)}
			\right],
			\label{eq:mbtomb_b0_explicit}
		\end{align}
		where
		\begin{align}
			p_{\alpha}
			=
			\langle \alpha|
			\rho_{12}^{(i)}(J_z^i)
			|\alpha\rangle,
			\qquad
			\rho_{+-}^{(i)}
			=
			\langle +|
			\rho_{12}^{(i)}(J_z^i)
			|-\rangle.
			\label{eq:mbtomb_b0_popcoh}
		\end{align}
	\end{proposition}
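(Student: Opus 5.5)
Since $H_0=\sum_m h_{2m-1,2m}$ is a sum of mutually commuting dimer terms, the plan is to reduce the Heisenberg evolution of $S_1^z$ to the first dimer and then expand it in the eigenbasis of $h_{12}$.

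\emph{Step 1: localize the evolution.} All terms $h_{2m-1,2m}$ with $m\ge 2$ commute with $h_{12}$ and with $S_1^z$, because they act on disjoint sites. Hence
\begin{align}
e^{iH_0t}S_1^ze^{-iH_0t}=e^{ih_{12}t}S_1^ze^{-ih_{12}t}\otimes\mathbb{1}_{3,\dots,N}.
\end{align}
Taking the expectation value in $\rho_i(J_z^i)$ and using $\Tr[\rho\,(O_{12}\otimes\mathbb{1})]=\Tr_{12}[\rho_{12}O_{12}]$ together with Eq.~\eqref{eq:mbtomb_rho12_def} gives Eq.~\eqref{eq:mbtomb_b0_reduction} directly. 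No property of the initial state is used, so the result holds for any $J_z^i$.

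\emph{Step 2: matrix of $\sigma_1^z$ in the dimer eigenbasis.} Work in the basis of Eqs.~\eqref{eq:polarized_energies_main}--\eqref{eq:minus_state_main}. Since $\sigma_1^z$ commutes with $\sigma_1^z+\sigma_2^z$, it has no matrix elements between the polarized states and $|\pm\rangle$. On the polarized states it is diagonal, with eigenvalues $\pm1$. On the zero-magnetization block, $\sigma_1^z|\pm\rangle=|\mp\rangle$, so
\begin{align}
S_1^z=\tfrac12\bigl(|\uparrow\uparrow\rangle\langle\uparrow\uparrow|-|\downarrow\downarrow\rangle\langle\downarrow\downarrow|+|+\rangle\langle-|+|-\rangle\langle+|\bigr).
\end{align}
The Heisenberg evolution multiplies each outer product $|\alpha\rangle\langle\beta|$ by $e^{i(E_\alpha-E_\beta)t}$. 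The diagonal terms are therefore stationary, while the off-diagonal ones acquire the phases $e^{\pm i(E_+-E_-)t}=e^{\pm 2iat}$. Notably, $J_z^f$ cancels from this difference.

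\emph{Step 3: take the trace.} Using $\Tr[\rho\,|+\rangle\langle-|]=\langle-|\rho|+\rangle=\overline{\rho^{(i)}_{+-}}$ (Hermiticity of $\rho_{12}^{(i)}$), the expectation value becomes
\begin{align}
\tfrac12(p_{\uparrow\uparrow}-p_{\downarrow\downarrow})+\tfrac12\bigl(e^{2iat}\overline{\rho^{(i)}_{+-}}+e^{-2iat}\rho^{(i)}_{+-}\bigr),
\end{align}
which equals Eq.~\eqref{eq:mbtomb_b0_explicit}.

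Coherences of $\rho_{12}^{(i)}$ between different magnetization sectors drop out because $S_1^z$ conserves the dimer magnetization. The only step requiring care is the bookkeeping of phases and signs: whether $\rho_{+-}$ or $\rho_{-+}$ appears, and the sign of $E_+-E_-$. Either way, the real part makes the final expression convention-independent up to complex conjugation. I expect no genuine obstacle.
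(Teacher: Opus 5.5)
Your proof is correct and follows the same route as the paper. The factorization of $e^{iH_0t}$ over commuting dimers localizes $S_1^z(t)$ on sites $1,2$ and gives the partial-trace formula. Expanding $S_1^z$ in the $h_{12}$ eigenbasis, with $\sigma_1^z|\pm\rangle=|\mp\rangle$ and $E_+-E_-=2a$, then reproduces the explicit formula with exactly the paper's phase convention, $\Re[e^{-2iat}\rho^{(i)}_{+-}]$. Your write-up is more explicit than the paper's, which only asserts the dimer factorization and leaves the eigenbasis bookkeeping implicit.
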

	
	Thus, in the exactly dimerized final limit, the full dependence on the interacting initial state enters only through the first-dimer reduced density matrix. The oscillation frequency is fixed by the final intra-dimer scale, since $E_+-E_-=2a$ is independent of both $J_z^i$ and $J_z^f$.
	
	A second exact statement concerns the first-dimer magnetization.
	
	\begin{corollary}[Exact conservation of $M_{12}^z$ at $b=0$]
		\label{cor:mbtomb_M12_b0}
		For the decoupled-dimer final Hamiltonian $H_0$, one has
		\begin{align}
			\langle M_{12}^z(t)\rangle_{b=0}
			=
			\langle M_{12}^z(0)\rangle
			=
			\Tr_{12}\!\left[
			\rho_{12}^{(i)}(J_z^i) M_{12}^z
			\right].
			\label{eq:mbtomb_M12_b0}
		\end{align}
	\end{corollary}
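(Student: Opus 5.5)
The plan is to show that $M_{12}^z$ commutes with $H_0$, and then to use the factorization of the decoupled-dimer evolution in the same way as in Proposition~\ref{prop:mbtomb_b0_reduction}.

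\textbf{Commutation with $H_0$.} Since $H_0=\sum_m h_{2m-1,2m}$, every term $h_{2m-1,2m}$ with $m\ge2$ acts on sites disjoint from $\{1,2\}$. Each such term therefore commutes with $M_{12}^z=\tfrac12(\sigma_1^z+\sigma_2^z)$. It remains to check that $[h_{12},\sigma_1^z+\sigma_2^z]=0$.

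\textbf{The first dimer.} Write $\sigma_1^x\sigma_2^x+\sigma_1^y\sigma_2^y=2(\sigma_1^+\sigma_2^-+\sigma_1^-\sigma_2^+)$. This flip-flop term preserves the total $z$-spin of the pair, and $\sigma_1^z\sigma_2^z$ trivially commutes with $\sigma_1^z+\sigma_2^z$. Equivalently, every eigenstate of $h_{12}$ listed in Eqs.~\eqref{eq:polarized_energies_main}--\eqref{eq:minus_state_main} is an eigenstate of $M_{12}^z$:
\begin{align*}
|\uparrow\uparrow\rangle &\mapsto +1, & |\downarrow\downarrow\rangle &\mapsto -1, & |\pm\rangle &\mapsto 0.
\end{align*}
So $M_{12}^z$ is diagonal in the eigenbasis of $h_{12}$, and hence $[H_0,M_{12}^z]=0$. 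It follows that the Heisenberg operator satisfies $M_{12}^z(t)=e^{iH_0t}M_{12}^ze^{-iH_0t}=M_{12}^z$. Taking the expectation value in $\rho_i(J_z^i)$ gives the first equality in Eq.~\eqref{eq:mbtomb_M12_b0}.

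\textbf{Reduction to the first dimer.} Because $M_{12}^z$ acts as the identity on sites $3,\dots,N$, we have $\Tr[\rho_i M_{12}^z]=\Tr_{12}[\rho_{12}^{(i)}M_{12}^z]$ by the definition of the partial trace in Eq.~\eqref{eq:mbtomb_rho12_def}. This is the second equality.

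\textbf{Consistency checks.} As an alternative route, one can apply Proposition~\ref{prop:mbtomb_b0_reduction} to both $S_1^z$ and $S_2^z$ and add the results. The coherence terms $\Re[e^{-2iat}\rho_{+-}^{(i)}]$ cancel, because $S_2^z$ carries the opposite sign in the $\pm$ sector. This leaves $\tfrac12(p_{\uparrow\uparrow}-p_{\downarrow\downarrow})\cdot2$, which is time-independent. The result is also consistent with Proposition~\ref{prop:boundary_dimer_continuity}: at $b=0$ the current term vanishes.

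There is no real obstacle. The only point needing care is bookkeeping the sign of $S_2^z$ on $|\pm\rangle$ in the cross-check, and the direct commutator argument avoids that entirely.
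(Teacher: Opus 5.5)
Your proof is correct and matches the paper's argument, which likewise deduces the corollary immediately from $[h_{12},M_{12}^z]=0$ (the other dimers commute trivially), so that $M_{12}^z(t)=M_{12}^z$ under $H_0$. The second equality is the partial-trace identity for an operator supported on sites $1,2$; your cross-check, in which the $|+\rangle$--$|-\rangle$ coherence terms from $S_1^z$ and $S_2^z$ cancel, is also correct but not needed.
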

	
	This follows immediately from $[h_{12},M_{12}^z]=0$. The first dimer is therefore an exact closed subsystem at $b=0$ for both observables $S_1^z$ and $M_{12}^z$.
	
	For $0<b\ll1$, the previously established operator bounds give the corresponding finite-time comparison with the exact dimer limit,
	\begin{align}
		\left|
		\langle S_1^z(t)\rangle_b
		-
		\langle S_1^z(t)\rangle_{b=0}
		\right|
		\le
		\frac{|b|}{2}\,(2+|J_z^f|)\,t,
		\label{eq:mbtomb_S1_b0_comparison}
	\end{align}
	and
	\begin{align}
		\left|
		\langle M_{12}^z(t)\rangle
		-
		\langle M_{12}^z(0)\rangle
		\right|
		\le
		|b|\,t.
		\label{eq:mbtomb_M12_leakage_bound}
	\end{align}
	Equation~\eqref{eq:mbtomb_S1_b0_comparison} controls the deviation of the physical edge observable from the exact two-site limit, while Eq.~\eqref{eq:mbtomb_M12_leakage_bound} isolates the leakage of total $z$ magnetization from the first dimer through the weak bond.
	
	The same structure also yields a comparison bound that separates the effect of the initial preparation, which appears as a time-independent offset, from the final-Hamiltonian contribution, which grows linearly in time.
	
	\begin{proposition}[Initial-state comparison near the dimer limit]
		\label{prop:mbtomb_initial_state_comparison}
		Let $\rho_i$ and $\sigma_i$ be two initial states for two different quenches, and let both evolve under the same final Hamiltonian $H_b$. Denote their first-dimer reduced density matrices by
		\begin{align}
			\rho_{12}^{(i)}=\Tr_{3,\dots,N}\rho_i,
			\qquad
			\sigma_{12}^{(i)}=\Tr_{3,\dots,N}\sigma_i.
			\nonumber
		\end{align}
		Then, for all $t\ge0$,
		\begin{equation}
			\begin{aligned}
				\left|
				\langle S_1^z(t)\rangle_{\rho_i,b}
				-\right.&\left.
				\langle S_1^z(t)\rangle_{\sigma_i,b}
				\right| \\
				&\le 
				\frac{1}{2}
				\left\|
				\rho_{12}^{(i)}-\sigma_{12}^{(i)}
				\right\|_1
				+
				|b|\,(2+|J_z^f|)\,t.
			\end{aligned}
			\label{eq:mbtomb_initial_state_bound}
		\end{equation}
	\end{proposition}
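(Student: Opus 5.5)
The plan is a triangle inequality through the exactly dimerized dynamics. Write $\langle S_1^z(t)\rangle_{\rho,b}=\Tr[\rho\,S_{1,b}^z(t)]$ with $S_{1,b}^z(t)$ as in Eq.~\eqref{eq:S1_evol_defs}. We insert the decoupled-dimer evolution $S_{1,0}^z(t)$ for both initial states and split
\begin{align}
\langle S_1^z(t)\rangle_{\rho_i,b}-\langle S_1^z(t)\rangle_{\sigma_i,b}
={}&\Tr\!\bigl[\rho_i\bigl(S_{1,b}^z(t)-S_{1,0}^z(t)\bigr)\bigr]
-\Tr\!\bigl[\sigma_i\bigl(S_{1,b}^z(t)-S_{1,0}^z(t)\bigr)\bigr]\nonumber\\
&+\Tr\!\bigl[(\rho_i-\sigma_i)\,S_{1,0}^z(t)\bigr].
\nonumber
\end{align}

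The first two terms are handled by Proposition~\ref{prop:finite_time_stability_sptomb}. For any density matrix $\rho$ and bounded operator $X$ we have $|\Tr[\rho X]|\le\|\rho\|_1\|X\|=\|X\|$. Pure-state normalization is therefore not needed, so Eq.~\eqref{eq:bound_main} applies to mixed $\rho_i,\sigma_i$ as well. Each of the two terms is then bounded by $\frac{|b|}{2}(2+|J_z^f|)t$, and together they give the linear-in-time contribution $|b|(2+|J_z^f|)t$. This is exactly why the coefficient is twice that of Eq.~\eqref{eq:bound_main}.

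For the third term we use the structure behind Proposition~\ref{prop:mbtomb_b0_reduction}. Since $H_0=\sum_m h_{2m-1,2m}$ is a sum of commuting terms on disjoint dimers and $S_1^z$ is supported on sites $1,2$, all factors $e^{\pm i h_{2m-1,2m}t}$ with $m\ge2$ cancel. Hence $S_{1,0}^z(t)=\bigl(e^{ih_{12}t}S_1^ze^{-ih_{12}t}\bigr)\otimes\mathbb{1}$ is supported on the first dimer. The trace then reduces to $\Tr_{12}\bigl[(\rho_{12}^{(i)}-\sigma_{12}^{(i)})\,O(t)\bigr]$ with $O(t)=e^{ih_{12}t}S_1^ze^{-ih_{12}t}$. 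By Hölder's inequality this is at most $\|\rho_{12}^{(i)}-\sigma_{12}^{(i)}\|_1\,\|O(t)\|$, and unitary invariance gives $\|O(t)\|=\|S_1^z\|=\tfrac12$. This yields the time-independent offset $\tfrac12\|\rho_{12}^{(i)}-\sigma_{12}^{(i)}\|_1$.

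Combining the three terms gives Eq.~\eqref{eq:mbtomb_initial_state_bound}. The only step needing care is the middle one: confirming that the operator-norm bound of Proposition~\ref{prop:finite_time_stability_sptomb} transfers to arbitrary density matrices and, in particular, to the MB initial states via the trace-norm/operator-norm duality. No real obstacle is expected. A sharper bound could instead use $\|S_1^z-\Tr(S_1^z)\mathbb{1}/4\|$, but the statement only requires the crude constant $\tfrac12$.
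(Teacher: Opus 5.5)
Your proof is correct and is essentially the paper's argument: the same three-term split through the decoupled-dimer evolution $S_{1,0}^z(t)$, the operator-norm bound of Proposition~\ref{prop:finite_time_stability_sptomb} applied to each (possibly mixed) initial state via $|\Tr(\eta A)|\le\|A\|$ for the two outer terms, and H\"older's inequality on the first-dimer reduced states with $\|S_1^z\|=\tfrac12$ for the middle term. Your closing remark about a sharper constant gains nothing: since $\Tr_{12}S_1^z=0$, the shifted operator coincides with $S_1^z$, and the constant $\tfrac12$ is in fact optimal for that step.
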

	
	The proof is given in SM~\ref*{app:initialstatecomparison}. The bound separates the effect of the initial preparation, which appears as the time-independent trace-norm term, from the final-Hamiltonian contribution, which grows linearly in $t$ through inter-dimer leakage.
	
	All of these analytical bounds, including the initial-state comparison bound, are verified by exact diagonalization in SM~\ref*{app:dimer_benchmarks}.
	
	\begin{figure*}[t]
		\centering
		\includegraphics[width=0.95\textwidth]{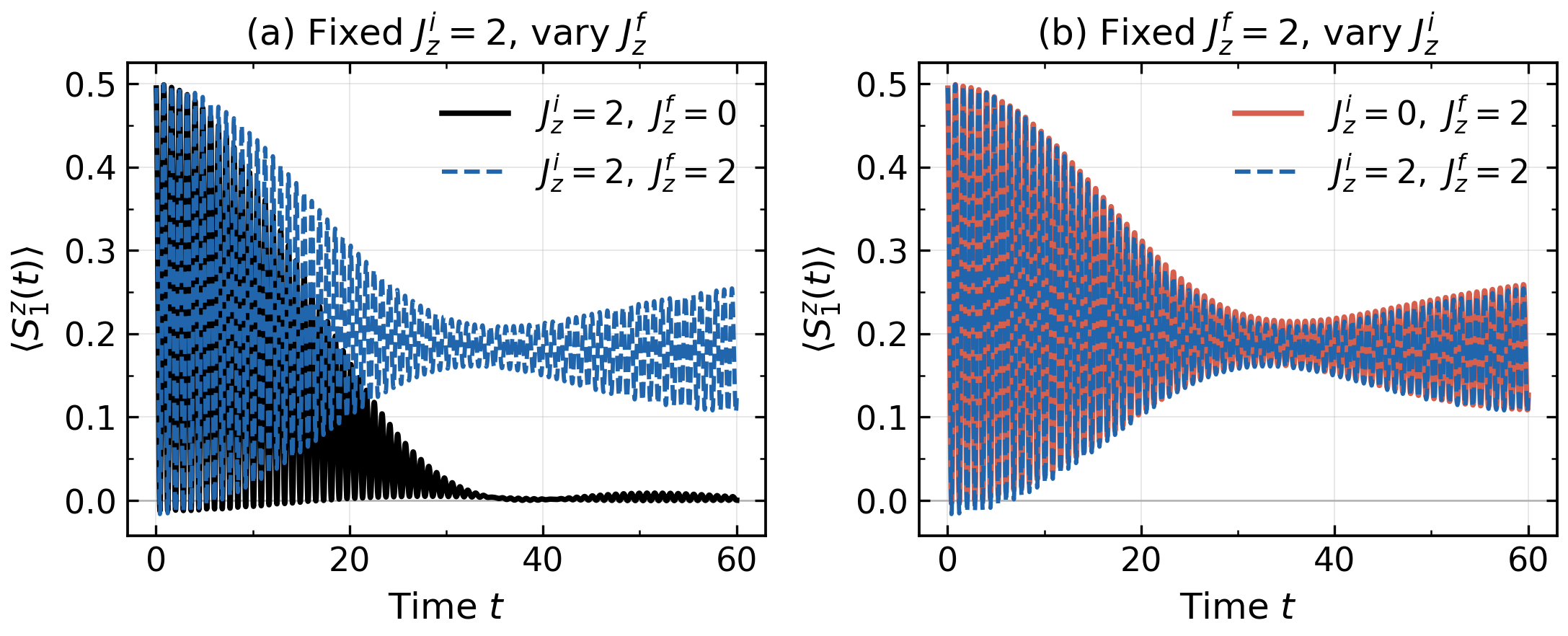}
		\caption{MB$\to$MB left-edge memory dynamics at fixed $\delta_i=-0.95$ and $\delta_f=0.95$, for $N=36$ with open boundary conditions in the fixed $S_{\rm tot}^z=0$ sector. Panel~(a) compares $\{J_z^i,J_z^f\}=\{2,0\}$ and $\{2,2\}$ at fixed interacting preparation, isolating the role of post-quench interactions; the interacting-final curve at $J_z^f=2$ shows enhanced finite-time retention relative to the non-interacting-final reference, in direct analogy with the SP$\to$MB protocol. Panel~(b) fixes $J_z^f=2$ and compares $J_z^i=0$ (SP preparation, SP$\to$MB) and $J_z^i=2$ (MB preparation, MB$\to$MB); the two  curves are qualitatively identical---both exhibiting the same interaction-enhanced memory---with only a quantitative shift from the different initial-state correlations. The SP$\to$MB and MB$\to$MB protocols therefore belong to the same universality class of post-quench behavior, distinguished only by the initial one-body density matrix.}
		\label{fig:mbtomb_memory}
	\end{figure*}

	\begin{figure}[htbp]
		\centering
		\includegraphics[width=0.95\columnwidth]{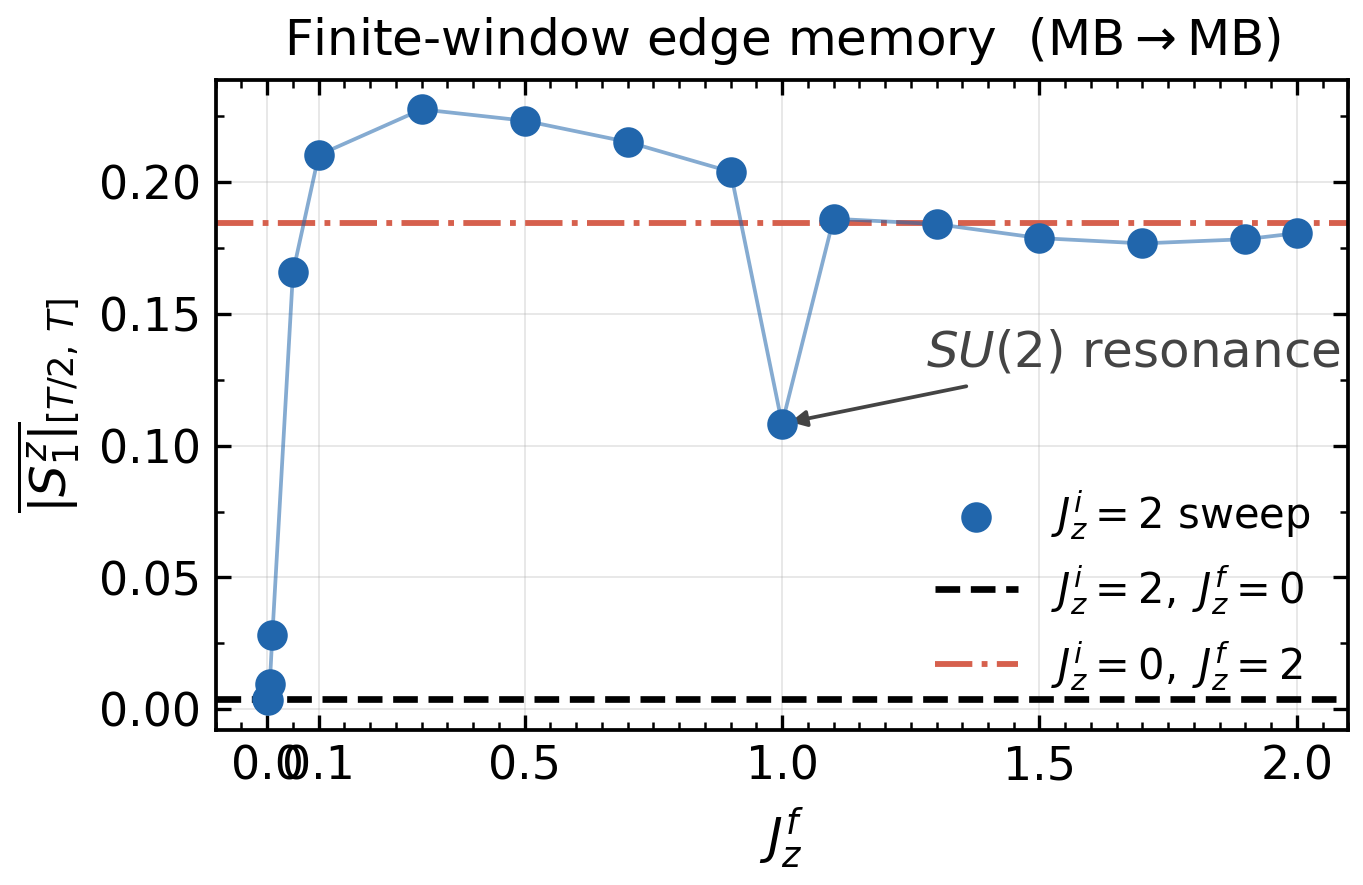}
		\caption{Finite-window edge-memory diagnostic $\overline{|S_1^z|}_{[T_1,T_2]}$ (Eq.~\eqref{eq:late_time_average_def}) for MB$\to$MB quenches with $\delta_i=-0.95$, $\delta_f=0.95$, $N=36$, as a function of $J_z^f$ at fixed $J_z^i=2$ (blue markers and line). The non-interacting-final reference $(J_z^i,J_z^f)=(2,0)$ (black dashed) and the SP$\to$MB reference $(J_z^i,J_z^f)=(0,2)$ (red dash-dot) are shown for comparison. The MB$\to$MB sweep reproduces the same qualitative structure as the SP$\to$MB protocol: interacting final Hamiltonians generically raise the finite-window edge signal well above the SP reference, a local suppression appears near the isotropic leakage resonance $J_z^f=1$ but the memory remains above the non-interacting-final reference even there, and the SP$\to$MB and MB$\to$MB values at $J_z^f=2$ agree within the quantitative shift due to the different initial-state correlations. The two protocols are thus qualitatively equivalent, confirming that the memory enhancement is generated by the post-quench interactions and is insensitive to whether the initial preparation is interacting or not.}
		\label{fig:mbtomb_memory_window}
	\end{figure}

	\subsubsection{TEBD memory dynamics}
	\label{sec:mbtomb_numerics}
	
	The analytic results above identify the strongly dimerized final regime as a controlled limit in which the first-dimer magnetization is stable on arbitrarily long time windows away from local resonances. The remaining question is whether the enhancement of boundary memory by an interacting final Hamiltonian---found in the SP$\to$MB protocol with a local dip at the isotropic point---persists when the initial state is itself interacting. We fix $\delta_i=-0.95$, $\delta_f=0.95$, and disentangle two effects: (i)~at fixed $J_z^i=2$, we compare $J_z^f=0$ and $J_z^f=2.0$ to isolate the role of the final Hamiltonian; (ii)~at fixed $J_z^f=2.0$, we compare $J_z^i=0$ and $J_z^i=2$ to test sensitivity to the interacting preparation. The value $J_z^f=2.0$ lies away from the isotropic leakage resonance. The primary observable is
	\begin{align}
		S_1^z(t) = \langle\Psi(t)|S_1^z|\Psi(t)\rangle,
		\label{eq:mbtomb_S1_obs}
	\end{align}
	where $|\Psi(0)\rangle$ is the topological ground state in the fixed $S_{\rm tot}^z=0$ sector and $|\Psi(t)\rangle=e^{-iH_bt}|\Psi(0)\rangle$. The first-dimer magnetization $M_{12}^z(t)$ is examined in SM~\ref*{app:dimer_benchmarks} as a numerical check of the continuity equation and dimer-limit bounds.
	
	Figure~\ref{fig:mbtomb_memory} shows the results. Panel~(a) holds $J_z^i=2$ fixed and varies $J_z^f$: the non-interacting-final curve ($J_z^f=0$) loses boundary memory on the same timescale as the MB$\to$SP protocol, whereas the interacting-final curve ($J_z^f=2.0$) retains a substantially larger edge signal throughout the simulated window. Panel~(b) holds $J_z^f=2.0$ fixed and varies the preparation ($J_z^i=0,2$): the two traces are qualitatively identical---the same memory-enhanced behavior found in the SP$\to$MB protocol is reproduced in the MB$\to$MB protocol---differing only in the detailed oscillation amplitudes set by the initial-state correlations. The SP$\to$MB and MB$\to$MB protocols therefore belong to the same class of post-quench behavior, organized by the interacting final Hamiltonian rather than by the initial preparation.
	
	To reduce each protocol to a single number we use the finite-window diagnostic $\overline{|S_1^z|}_{[T_1,T_2]}$ with the same window $[T_1,T_2]=[T_{\rm max}/2,T_{\rm max}]$ used in Eq.~\eqref{eq:late_time_average_def}. Figure~\ref{fig:mbtomb_memory_window} plots this quantity for $(J_z^i,J_z^f)=(2,0),(0,2),(2,2)$ alongside the $J_z^i=2$ sweep over $J_z^f$. All interacting-final protocols remain well above the non-interacting-final reference, the local suppression at the isotropic point $J_z^f=1$ is present but still above that reference, and the variation with $J_z^i$ is comparatively modest. The memory enhancement is therefore a robust property of the interacting final Hamiltonian: $J_z^f$ controls whether memory is preserved or suppressed, while $J_z^i$ dresses the initial boundary correlations quantitatively without changing the qualitative outcome.

	\section{Conclusions}
	\label{sec:conclusion}

	We have investigated the fate of topological edge memory under strong quenches in the bond-alternating XXZ chain, which realizes an interacting SSH-type model with symmetry-protected boundary modes. Organizing the problem according to the single-particle (SP) versus many-body (MB) character of the initial and final Hamiltonians provides a natural and exhaustive classification of the quench space, and each of the four resulting protocols admits a distinct and controlled analytical treatment. In the SP reference problem, the quench dynamics is solved exactly via a correlation-matrix approach, and the boundary-mode return amplitude is proved to decay asymptotically as $t^{-3/2}$, with analytically determined, dimerization-dependent envelope and beat scales that fix the intrinsic relaxation pattern of the free boundary dynamics. In the MB$\to$SP quench, the evolution is governed exactly by the same free post-quench framework with a dressed one-body density matrix, and uniform-in-time operator-norm bounds control the deviation from the SP reference. For the genuinely interacting post-quench protocols, decomposing the final Hamiltonian into decoupled dimers plus a weak inter-dimer perturbation yields finite-time stability bounds and an exact continuity equation for the leakage channel, and away from local resonances it also yields a rigorous asymptotic statement that the first-dimer magnetization remains stable on time windows that grow as arbitrarily large powers of the inverse weak inter-dimer coupling.

	TEBD simulations across all four protocols demonstrate that post-quench interactions can substantially enhance finite-time edge-memory retention relative to the non-interacting reference, with a non-monotonic dependence on the interaction strength. This non-monotonicity has two distinct origins: the small-$J_z^f$ perturbative correction is channel dependent and therefore need not have a fixed sign, while the isotropic point $J_z^f=1$ produces a symmetry-enabled leakage resonance that suppresses the finite-window memory. This constitutes a qualitatively new mechanism by which many-body interactions can protect boundary memory in clean, disorder-free one-dimensional systems---complementary to many-body localization, quantum many-body scars, and dynamical symmetry-protected phases. The key insight is that the strongly dimerized structure of the final Hamiltonian, combined with the topological structure of the initial state, produces a boundary degree of freedom whose relaxation is parametrically delayed relative to bulk modes, even though the post-quench system is non-integrable and hosts no equilibrium topological phase.

	The mechanisms identified here are within reach of several experimental platforms. The symmetry-protected topological phase of interacting bosons was observed in a Rydberg tweezer array~\cite{deleseleuc2019science}, and the Haldane phase with directly resolved edge states was realized in Fermi--Hubbard ladders using single-site-resolved measurements~\cite{sompet2022nature}. Most directly relevant, Katz \textit{et~al.}~\cite{Katz2025Oct} implemented a spin-based SSH model in a trapped-ion chain of up to 22 interacting spins using Floquet engineering of bond dimerization, observing edge-state dynamics and studying spin propagation across different interaction regimes---a setting that maps almost directly onto the SP$\to$MB and MB$\to$MB quench protocols studied here. Gate-tunable SSH chains on superconducting resonator lattices~\cite{splitthoff2024prresearch} offer independent in-situ control of intra- and intercell couplings and provide a complementary route to emulating the strongly dimerized post-quench limit. In all of these settings, the quench protocols of this work translate into experimentally natural operations---preparing a symmetry-protected topological ground state, suddenly switching the dimerization and interaction parameters, and tracking the decay of a boundary spin or edge population---and the analytical results here delineate precisely the parameter regimes where final-state interactions are expected to enhance rather than degrade boundary memory, as well as the isotropic point where resonant leakage is expected that causes local suppression of memory, while remaining above the SP reference point. 
	
	Several theoretical directions follow naturally from this work. Extending the present framework to open-system dynamics would address whether the many-body protection mechanism survives realistic decoherence, and whether engineered dissipation or quantum Zeno effects can further stabilize boundary memory. Combining the quench protocols studied here with Floquet driving, in the spirit of recent experiments realizing dynamical symmetry-protected topological phases, would allow one to ask how emergent dynamical symmetries and the interaction-assisted protection identified here cooperate or compete. It would further be valuable to classify the universality of the $t^{-3/2}$ boundary-mode decay and its dimerization scaling across broader families of one-dimensional topological chains, including higher-spin and longer-range variants, and to develop effective boundary descriptions that capture the competition between dimer stabilization and XY-mediated leakage in strongly correlated post-quench dynamics. We expect that the combination of rigorous analytical control, TEBD benchmarking, and direct experimental contact established in this work will stimulate further investigation into interacting topological edge memories as building blocks for robust non-equilibrium quantum information protocols.
	
	\begin{acknowledgments}
		The authors thank the Center for Advanced Research Computing (CARC) at the University of Southern California (https://carc.usc.edu) for providing the computational resources used in this work. RJ and SH acknowledge partial and full support, respectively, from the U.S. Department of Energy, Office of Advanced Scientific Computing Research (DOE-ASCR), under Award No. DE-SC0026337.	\end{acknowledgments}

	\bibliography{refs_tq}

\clearpage
\appendix

\makeatletter
\@removefromreset{equation}{section}
\makeatother

\setcounter{page}{1}

\setcounter{section}{0}
\setcounter{subsection}{0}
\setcounter{equation}{0}
\setcounter{figure}{0}
\setcounter{table}{0}

\renewcommand{\thesection}{S\arabic{section}}
\renewcommand{\thesubsection}{S\arabic{section}.\arabic{subsection}}
\renewcommand{\theequation}{S\arabic{equation}}
\renewcommand{\thefigure}{S\arabic{figure}}
\renewcommand{\thetable}{S\arabic{table}}

\clearpage
\thispagestyle{plain}
\setcounter{page}{1}

\onecolumngrid 

\begin{center}
	{\Large Supplemental Material --- Many-Body Protection of Topological Edge Memory in Strong Interacting Quenches}\\[1.2em]
	
	{\normalsize
		Yuxiao Hang,$^{a,}$\footnote{\href{mailto:yhang@usc.edu}{yhang@usc.edu}} Stephan Haas,$^{a,}$\footnote{\href{mailto:shaas@usc.edu}{shaas@usc.edu}} and Rishabh Jha$^{a,}$\footnote{\href{mailto:rishabh.jha@usc.edu}{rishabh.jha@usc.edu}} \\[0.5em]
		$^{a}$Department of Physics and Astronomy, University of Southern California, Los Angeles, CA 90089-0484, USA\\[0.5em]
	}
	
\end{center}

\twocolumngrid 

\vspace{1em}

In this Supplemental Material, all numbered items carry an S prefix, whereas references without an S belong to the main text.


\twocolumngrid

\section{Single-Particle SSH Chain: Conventions and Boundary Zero Mode}
\label{app:ssh_zero_mode}

	This appendix derives from first principles the single-particle structure of the SSH chain and the explicit form of the topological boundary zero mode. The results here are used in Appendix~\ref{app:theorem1_proof} and Sec.~\ref*{sec:single}.
	\subsection{Single-Particle Hamiltonian and Conventions}
	\label{app:ssh_conventions}
	In the non-interacting limit $J_z = 0$, the bond-alternating $XX$ spin chain maps via a Jordan--Wigner transformation onto a free spinless-fermion chain. The single-particle Hamiltonian in the site basis is
	\begin{equation}
		h = \sum_{j=1}^{N-1} t_j \left( c_j^\dagger c_{j+1} + c_{j+1}^\dagger c_j \right),
		\label{eq:sp_ham}
	\end{equation}
	where $t_j$ alternates between two values according to the dimerization. It is convenient to label sites by their unit-cell index $m$ and sublattice index. For a finite open chain of even length $N$, one has $m = 1, 2, \ldots, N/2$; for the semi-infinite boundary-mode derivation used below, one takes $m = 1, 2, \ldots$. We define
	\begin{equation}
		A_m \equiv \text{odd site } 2m - 1, \qquad B_m \equiv \text{even site } 2m,
		\label{eq:sublattice}
	\end{equation}
	so that each unit cell $m$ contains one $A$-site and one $B$-site. The single-particle state $|A_m\rangle \equiv c_{A,m}^\dagger |0\rangle$ means that exactly one spinless fermion occupies the $A$-sublattice site of unit cell $m$; it is a one-particle basis vector, not a many-body state. In this notation, and in the semi-infinite geometry relevant for the zero-mode derivation, the single-particle Hamiltonian takes the compact form
	\begin{equation}
		h = \sum_{m \geq 1} \left[ J_1 |A_m\rangle\langle B_m| + J_2 |A_{m+1}\rangle\langle B_m| + \mathrm{h.c.} \right],
		\label{eq:sp_ham_unitcell}
	\end{equation}
	where $J_1$ is the intracell SSH one-body hopping amplitude and $J_2$ is the intercell SSH one-body hopping amplitude. With the Pauli-matrix convention of Eq.~(\ref*{eq:Ham}) and the dimerization parametrization introduced in Sec.~\ref*{sec:model},
	\begin{equation}
		J_1 = 2J_1^{xy}=2t(1 + \delta), \qquad J_2 = 2J_2^{xy}=2t(1 - \delta).
		\label{eq:J_param}
	\end{equation}
	The condition for the topological phase is $|J_1| < |J_2|$, that is,
	\begin{equation}
		\delta < 0 \quad \Longleftrightarrow \quad J_1 < J_2.
		\label{eq:topo_condition}
	\end{equation}
	This can be understood intuitively: the leftmost site is $A_1$, and its only direct coupling to the rest of the chain is the intracell bond $J_1$ to $B_1$. When $|J_1| < |J_2|$, this boundary link is the weak bond, while the bulk sites pair predominantly through the stronger intercell bonds $J_2$. The boundary site is therefore only weakly attached to the dimerized bulk, which is the mechanism that localizes a zero-energy mode at the left edge.
	
	\subsection{Derivation of the Boundary Zero Mode}
	\label{app:zero_mode}
	Consider a general single-particle state
	\begin{equation}
		|\psi\rangle = \sum_{m \geq 1} \left[ \alpha_m |A_m\rangle + \beta_m |B_m\rangle \right].
		\label{eq:general_state}
	\end{equation}
	Applying $h|\psi\rangle = E|\psi\rangle$ and projecting onto $\langle A_m|$ and $\langle B_m|$ gives the coupled recursion relations
	\begin{align}
		E \alpha_m &= J_1 \beta_m + J_2 \beta_{m-1}, \label{eq:recA} \\
		E \beta_m &= J_1 \alpha_m + J_2 \alpha_{m+1}. \label{eq:recB}
	\end{align}
	The open boundary condition at the left end of the chain requires that there is no site to the left of $m = 1$, which imposes
	\begin{equation}
		\beta_0 = 0.
		\label{eq:OBC}
	\end{equation}
	We now seek a \emph{zero-energy} edge mode, so we set $E = 0$. With $E = 0$, the recursion~(\ref{eq:recA}) becomes
	\begin{equation}
		0 = J_1 \beta_m + J_2 \beta_{m-1}.
		\label{eq:recA_zero}
	\end{equation}
	We make the ansatz that the zero mode lives entirely on the $A$-sublattice, that is,
	\begin{equation}
		\beta_m = 0 \quad \text{for all } m \geq 1.
		\label{eq:A_sublattice}
	\end{equation}
	With this ansatz, Eq.~(\ref{eq:recA_zero}) is satisfied automatically since both $\beta_m$ and $\beta_{m-1}$ vanish. The open boundary condition~(\ref{eq:OBC}) is also satisfied trivially. The remaining equation~(\ref{eq:recB}) with $E = 0$ becomes
	\begin{equation}
		0 = J_1 \alpha_m + J_2 \alpha_{m+1},
		\label{eq:recB_zero}
	\end{equation}
	which gives the recursion
	\begin{equation}
		\alpha_{m+1} = -\frac{J_1}{J_2} \alpha_m.
		\label{eq:recursion}
	\end{equation}
	Define
	\begin{equation}
		r \equiv -\frac{J_1}{J_2}.
		\label{eq:r_def}
	\end{equation}
	Then, iterating Eq.~(\ref{eq:recursion}) from $m = 1$,
	\begin{equation}
		\alpha_m = \alpha_1 \cdot r^{m-1}.
		\label{eq:alpha_m}
	\end{equation}
	The unnormalized zero mode is obtained by substituting the closed-form solution $\alpha_m=\alpha_1$. $r^{m-1}$ and the ansatz $\beta_m=0$ back into the general state~\eqref{eq:general_state}. Setting the free constant $\alpha_1=1$, we define $\left|\phi_L\right\rangle$ as the resulting left edge mode, so that the general state reduces to:
	\begin{equation}
		|\phi_L\rangle_{\rm unnorm} = \sum_{m \geq 1} r^{m-1} |A_m\rangle.
		\label{eq:zero_mode_unnorm}
	\end{equation}
	This state is normalizable if and only if the sum $\sum_{m \geq 1} |r|^{2(m-1)}$ converges, which requires $|r| < 1$, that is,
	\begin{equation}
		|r| < 1 \quad \Longleftrightarrow \quad |J_1| < |J_2|,
		\label{eq:normalizability}
	\end{equation}
	which is exactly the topological phase condition~(\ref{eq:topo_condition}). Outside the topological phase, $|r| \geq 1$ and the amplitude grows (or stays constant) with $m$, so no normalizable edge mode exists. This is the mathematical origin of the bulk-boundary correspondence for the SSH chain: a normalizable zero mode exists if and only if the chain is in the topological phase.
	The normalization is fixed by
	\begin{equation}
		1 = \langle \phi_L | \phi_L \rangle = |\mathcal{N}|^2 \sum_{m \geq 1} |r|^{2(m-1)} = \frac{|\mathcal{N}|^2}{1 - |r|^2},
		\label{eq:norm_sum}
	\end{equation}
	where we used the geometric series formula $\sum_{m=0}^{\infty} x^m = 1/(1-x)$ for $|x| < 1$ with $x = |r|^2$. Therefore
	\begin{equation}
		\mathcal{N} = \sqrt{1 - |r|^2}.
		\label{eq:norm}
	\end{equation}
	The normalized left boundary zero mode is
	\begin{equation}
		|\phi_L\rangle = \mathcal{N} \sum_{m \geq 1} r^{m-1} |A_m\rangle = \sqrt{1-|r|^2} \sum_{m \geq 1} r^{m-1} |A_m\rangle.
		\label{eq:zero_mode}
	\end{equation}
	\subsection{Physical Interpretation and Numerics}
	\label{app:zero_mode_numerics}
	For the parameters used in this work, $\delta_i = -0.95$, so
	\begin{equation}
		J_1^i = 2t(1 + \delta_i) = 0.10, \qquad J_2^i = 2t(1 - \delta_i) = 3.90,
		\label{eq:params_num}
	\end{equation}
	giving
	\begin{equation}
		r = -\frac{0.10}{3.90} \approx -0.02564.
		\label{eq:r_num}
	\end{equation}
	The normalization factor is $\mathcal{N}^2 = 1 - r^2 \approx 0.99934$, so the zero mode is almost entirely localized on site $1$ ($m=1$, $A$-sublattice). The amplitude on unit cell $m$ decays as $|r|^{m-1} \approx (0.026)^{m-1}$, which falls to below $1\%$ of its initial value by $m = 2$. This extreme localization is a consequence of the strong dimerization $|\delta_i| = 0.95$.
	The initial edge magnetization is related to the zero-mode occupation. Via the Jordan--Wigner transformation, $\sigma_j^z = 2n_j - 1$, so the edge magnetization at site $1$ is
	\begin{equation}
		S_1^z(0) = \langle n_1 \rangle - \frac{1}{2}.
		\label{eq:S1z_def}
	\end{equation}
	The magnitude of $S_1^z(0)$ is fixed by the zero-mode weight, but its sign depends on which member of the edge doublet is occupied. In a convention that pins the left zero mode to be occupied, the density on site $1$ receives a contribution $\frac{1}{2}|\phi_L(A_1)|^2 = \frac{1}{2}\mathcal{N}^2$ above the half-filled background, giving
	\begin{equation}
		S_1^z(0) \approx +\frac{1}{2}(1 - r^2) \approx +0.4997.
		\label{eq:S1z_zero_mode}
	\end{equation}
	If instead the opposite member of the edge doublet is occupied, then the same argument gives
	\begin{equation}
		S_1^z(0) \approx -\frac{1}{2}(1 - r^2) \approx -0.4997.
		\label{eq:S1z_zero_mode_negative}
	\end{equation}
	In a finite chain without an explicit sector-selection convention, the exponentially split near-zero modes can mix the two edges, so the sign is not universal. What is robust in the strongly dimerized limit is the magnitude $|S_1^z(0)| \approx \frac{1}{2}$.

	\section{Proof of Theorem~\ref*{thm:free_decay}: Exact Decay of the Boundary-Mode Return Amplitude}
	\label{app:theorem1_proof}
	We prove the exact integral representation~(\ref*{eq:AL_exact})--(\ref*{eq:Fk}) and the asymptotic decay~(\ref*{eq:AL_asymp}) stated in Theorem~\ref*{thm:free_decay}.
	Throughout, $a = J_1^f > b = J_2^f > 0$ are the final trivial SSH one-body hopping amplitudes, with $J_\ell^f=2J_\ell^{xy,f}$ in the Pauli-spin convention of Eq.~(\ref*{eq:Ham}); $r = -J_1^i/J_2^i$ with $|r| < 1$, and $\mathcal{N} = \sqrt{1-r^2}$.
	\subsection*{Step 1: Eigenstates of the Final Trivial SSH Chain}
	For a semi-infinite open chain with sites $m = 1, 2, \ldots$, the open boundary condition requires $u_{B,0}(k) = 0$.
	The unique normalizable solutions for the $B$-sublattice amplitudes are the sine waves
	\begin{equation}
		u_{B,m}(k) = \sqrt{\frac{2}{\pi}} \sin(mk), \qquad k \in (0,\pi).
		\label{eq:app_uB}
	\end{equation}
	The $A$-sublattice amplitudes then follow from the Schr\"{o}dinger equation
	$\varepsilon(k)\, u_{A,m}(k) = a\, u_{B,m}(k) + b\, u_{B,m-1}(k)$:
	\begin{equation}
		u_{A,m}(k) = \sqrt{\frac{2}{\pi}}
		\frac{a\sin(mk) + b\sin[(m-1)k]}{\varepsilon(k)}.
		\label{eq:app_uA}
	\end{equation}
	To confirm the dispersion, substitute~(\ref{eq:app_uA}) into the second Schr\"{o}dinger equation $\varepsilon(k)\, u_{B,m}(k) = a\, u_{A,m}(k) + b\, u_{A,m+1}(k)$ and use the identity $\sin[(m\pm 1)k] = \sin(mk)\cos k \pm \cos(mk)\sin k$.
	After simplification, the $\sin(mk)$ and $\cos(mk)$ terms collect separately to give
	\begin{equation}
		\varepsilon(k)^2 = a^2 + b^2 + 2ab\cos k,
		\label{eq:app_disp}
	\end{equation}
	as required.
	At $m = 1$, since $\sin(0) = 0$, Eq.~(\ref{eq:app_uA}) reduces to
	\begin{equation}
		u_{A,1}(k) = \sqrt{\frac{2}{\pi}}\, \frac{a\sin k}{\varepsilon(k)}.
		\label{eq:app_uA1}
	\end{equation}
	
	\subsection*{Step 2: The $A$-to-$A$ Propagator}
	The initial zero mode $|\phi_L\rangle = \mathcal{N}\sum_{m\geq1}r^{m-1}|A_m\rangle$ lives entirely on the $A$-sublattice. Because $h_f$ is bipartite and chiral, every continuum mode with energy $+\varepsilon(k)$ has a partner with energy $-\varepsilon(k)$ and the same probability weights on each sublattice. Writing the spectral resolution with both partners and restricting to the $A$-to-$A$ matrix element therefore gives
	\begin{equation}
		\begin{aligned}
			\langle A_1 | e^{-ih_f t} | A_m \rangle
			=& \int_0^\pi dk\, u_{A,1}(k)\, u_{A,m}(k)\, \frac{e^{-i\varepsilon(k)t} + e^{+i\varepsilon(k)t}}{2} \\
			=& \int_0^\pi dk\, u_{A,1}(k)\, u_{A,m}(k)\, \cos[\varepsilon(k)\, t].
		\end{aligned}
		\label{eq:app_prop}
	\end{equation}
	This is the precise reason the cosine appears in the $A$-to-$A$ propagator.
	Substituting~(\ref{eq:app_prop}) into $A_L(t) = \sum_{m\geq1}\mathcal{N}r^{m-1} \langle A_1|e^{-i H_f^{(1)} t}|A_m\rangle$ and exchanging the sum and integral (justified by the absolute convergence of the geometric series for $|r|<1$):
	\begin{equation}
		A_L(t) = \int_0^\pi dk\, u_{A,1}(k)
		\left[\mathcal{N}\sum_{m\geq1}r^{m-1}u_{A,m}(k)\right]\cos[\varepsilon(k)\,t].
		\label{eq:app_AL_step}
	\end{equation}
	\subsection*{Step 3: Evaluating the Geometric Sum}
	Substituting~(\ref{eq:app_uA}) into the bracketed sum in~(\ref{eq:app_AL_step}):
	\begin{equation}
		\begin{aligned}
			\mathcal{N}\sum_{m\geq1}r^{m-1}u_{A,m}(k)
			=& \mathcal{N}\sqrt{\frac{2}{\pi}}\,\frac{1}{\varepsilon(k)}
			\sum_{m\geq1}r^{m-1} \\
			& \times \bigl[a\sin(mk)+b\sin[(m-1)k]\bigr].
		\end{aligned}
		\label{eq:app_geo_setup}
	\end{equation}
	The first sum is evaluated by writing $\sin(mk) = \mathrm{Im}[e^{imk}]$:
	\begin{align}
		\sum_{m\geq1}r^{m-1}\sin(mk)
		&= \mathrm{Im}\!\left[\sum_{m\geq1}r^{m-1}e^{imk}\right]
		= \mathrm{Im}\!\left[\frac{e^{ik}}{1-re^{ik}}\right]
		\nonumber\\[4pt]
		&= \frac{\sin k}{1-2r\cos k+r^2}.
		\label{eq:app_geo1}
	\end{align}
	For the second sum, let $\ell = m-1$ so that the index starts at $\ell = 0$.
	The $\ell = 0$ term vanishes since $\sin(0) = 0$, leaving
	\begin{equation}
		\sum_{m\geq1}r^{m-1}\sin[(m-1)k]
		= \sum_{\ell\geq1}r^\ell\sin(\ell k)
		= r\cdot\frac{\sin k}{1-2r\cos k+r^2}.
		\label{eq:app_geo2}
	\end{equation}
	Combining~(\ref{eq:app_geo1}) and~(\ref{eq:app_geo2}) in~(\ref{eq:app_geo_setup}):
	\begin{equation}
		\mathcal{N}\sum_{m\geq1}r^{m-1}u_{A,m}(k)
		= \mathcal{N}\sqrt{\frac{2}{\pi}}\,
		\frac{(a+br)\sin k}{\varepsilon(k)(1-2r\cos k+r^2)}.
		\label{eq:app_geo_result}
	\end{equation}
	\subsection*{Step 4: Assembling $F(k)$}
	Multiplying~(\ref{eq:app_geo_result}) by $u_{A,1}(k)$ from~(\ref{eq:app_uA1}):
	\begin{align}
		F(k) &= u_{A,1}(k)\cdot
		\mathcal{N}\sum_{m\geq1}r^{m-1}u_{A,m}(k)
		\nonumber\\[4pt]
		&= \sqrt{\frac{2}{\pi}}\,\frac{a\sin k}{\varepsilon(k)}
		\cdot
		\mathcal{N}\sqrt{\frac{2}{\pi}}\,
		\frac{(a+br)\sin k}{\varepsilon(k)(1-2r\cos k+r^2)}
		\nonumber\\[4pt]
		&= \frac{2\mathcal{N}\,a(a+br)\sin^2 k}
		{\pi\,\varepsilon(k)^2(1-2r\cos k+r^2)}.
		\label{eq:app_Fk}
	\end{align}
	This completes the proof of the exact integral representation~(\ref*{eq:AL_exact}) and~(\ref*{eq:Fk}).
	
	\subsection*{Step 5: Stationary-Phase Asymptotics and the $t^{-3/2}$ Decay}
	For large $t$, the integral $\int_0^\pi dk\,F(k)\cos[\varepsilon(k)\,t]$ is dominated by the \emph{stationary points} of the phase $\varepsilon(k)\,t$, where $d\varepsilon/dk = 0$. From~(\ref{eq:app_disp}),
	\begin{equation}
		\frac{d\varepsilon}{dk} = -\frac{ab\sin k}{\varepsilon(k)},
		\label{eq:app_deps}
	\end{equation}
	so the stationary points are $k = 0$ and $k = \pi$.
	\paragraph{Near $k=0$:}
	Taylor expanding to leading order in $k$:
	\begin{equation}
		\varepsilon(k) \approx (a+b) - \frac{ab}{2(a+b)}k^2,
		\qquad
		F(k) \approx f_0\, k^2,
		\label{eq:app_near0}
	\end{equation}
	where
	\begin{equation}
		f_0 = \frac{2\mathcal{N}\,a(a+br)}{\pi(a+b)^2(1-r)^2}.
		\label{eq:app_f0}
	\end{equation}
	\paragraph{Near $k=\pi$:}
	Setting $q = \pi - k$ and expanding to leading order in $q$:
	\begin{equation}
		\varepsilon(k) \approx (a-b) + \frac{ab}{2(a-b)}q^2,
		\qquad
		F(k) \approx f_\pi\, q^2,
		\label{eq:app_nearpi}
	\end{equation}
	where
	\begin{equation}
		f_\pi = \frac{2\mathcal{N}\,a(a+br)}{\pi(a-b)^2(1+r)^2}.
		\label{eq:app_fpi}
	\end{equation}
	The crucial observation is that $F(k)$ vanishes \emph{quadratically} at both stationary points, namely $F(k) \sim k^2$ near $k = 0$ and $F(k) \sim q^2$ near $k = \pi$. If $F$ were nonzero at an endpoint stationary point, the standard contribution would scale as $t^{-1/2}$. The additional quadratic factor contributes one extra power of $t^{-1}$, so each endpoint contributes at order $t^{-3/2}$ rather than $t^{-1/2}$.
	The relevant prototype is the endpoint integral (see, e.g.,~\cite{Bender_Orszag}):
	\begin{equation}
		\begin{aligned}
			\int_0^\infty dk\, k^2 &
			\cos\!\left[\left(\Omega - \lambda k^2\right)t\right] \\
			=& \frac{\sqrt{\pi}}{4}\,\lambda^{-3/2}\,t^{-3/2}
			\cos\!\left(\Omega t - \frac{3\pi}{4}\right) 
			+ O(t^{-5/2}),
		\end{aligned}
		\label{eq:app_prototype}
	\end{equation}
	valid for $\lambda, t > 0$. Applying~(\ref{eq:app_prototype}) at $k = 0$ with $\Omega = a+b$ and $\lambda = ab/[2(a+b)]$, and the analogous formula at $k = \pi$ (where the sign of $\lambda$ reverses, shifting the phase by $\pi/2$), gives the two-term asymptotic~(\ref*{eq:AL_asymp}) with prefactors~(\ref*{eq:C0})--(\ref*{eq:Cpi}). This completes the proof. \qed

	\section{Supplementary Non-Interacting Results}
	\label{app:sp_supplementary}
	This appendix collects the numerical validation benchmarks for the non-interacting results used in Sec.~\ref*{sec:single} and provides two supplementary checks: a reduced-dimerization quench and the center-site magnetization as a bulk reference. The latter is expected to remain at numerical zero for the strongly localized initial edge mode used here, since the topological zero-mode weight is exponentially concentrated near the boundary rather than in the bulk.
	
	\subsection{Numerical Benchmarks}
	\label{app:single_particle_benchmarks}
	The direct comparison between the exact boundary-return integral and finite-chain propagation for the main-text quench ($\delta_i = -0.95$, $\delta_f = +0.95$) is shown in Fig.~\ref*{fig:AL_benchmark} of Sec.~\ref*{sec:theorem1}, because it directly validates Theorem~\ref*{thm:free_decay}. The analogous direct comparison for the smaller-dimerization quench is shown in Fig.~\ref*{fig:AL_benchmark_d07} below. Fig.~\ref*{fig:AL_asymptotic_benchmark} in Sec.~\ref*{sec:theorem1} compares the exact boundary-return amplitude with the leading large-time asymptotic approximation defined in Eq.~\ref*{eq:AL_asymp_def}.
	
	As a further check of the correlation-matrix approach, we compute the site spin autocorrelation function $\langle S_j^z(t) S_j^z(0) \rangle$ and compare it with full exact diagonalization.
	Using $S_j^z = c_j^\dagger c_j - \frac{1}{2}$, we write
	\begin{equation}
		\begin{aligned}
			\langle S_j^z(t) & S_j^z(0) \rangle \\
			&= \langle c_j^\dagger(t)c_j(t)c_j^\dagger(0)c_j(0)\rangle
			- \frac12 C_{jj}(t) - \frac12 C_{jj}(0) + \frac14,
		\end{aligned}
		\label{eq:autocorr_raw_app}
	\end{equation}
	where $C_{jj}(t)=\langle c_j^\dagger(t)c_j(t)\rangle$ is the instantaneous density.
	The four-operator term is evaluated using Wick's theorem for the free-fermion ground state,
	\begin{equation}
		\begin{aligned}
			\langle c_j^\dagger(t)c_j(t)c_j^\dagger(0) & c_j(0)\rangle \\
			&= C_{jj}(t)C_{jj}(0) + \langle c_j^\dagger(t)c_j(0)\rangle \langle c_j(t)c_j^\dagger(0)\rangle .
		\end{aligned}
		\label{eq:wick_app}
	\end{equation}
	Introducing the mixed-time correlator
	\begin{equation}
		G_{jj} \equiv \langle c_j^\dagger(t)c_j(0)\rangle = [U^\dagger(t) C(0)]_{jj},
		\label{eq:Gjj_app}
	\end{equation}
	and noting the anticommutation relation $\{c_j(t),c_j^\dagger(0)\}=U_{jj}(t)$, we obtain
	\begin{equation}
		\langle c_j(t)c_j^\dagger(0)\rangle = U_{jj}(t) - G_{jj}^*.
		\label{eq:anticomm_app}
	\end{equation}
	Substituting these expressions into Eq.~\eqref{eq:autocorr_raw_app} yields the compact result
	\begin{align}
		\langle S_j^z(t) S_j^z(0) \rangle
		&= C_{jj}(t)C_{jj}(0) + G_{jj}\bigl(U_{jj}-G_{jj}^*\bigr) \nonumber \\
		&\quad - \frac12 C_{jj}(t) - \frac12 C_{jj}(0) + \frac14 .
		\label{eq:autocorr_final}
	\end{align}
	All quantities on the right-hand side are computed from the single-particle time-evolution matrix $U(t)$ and the initial correlation matrix $C(0)$, making the evaluation $\mathcal{O}(N^3)$ at each time step.
	Figure~\ref{fig:autocorr_benchmark} compares Eq.~\eqref{eq:autocorr_final} with the exact-diagonalization result for a chain of $N=10$ with $\delta_i=-0.95$, $\delta_f=+0.95$; the two calculations agree to machine precision throughout the entire evolution.
	\begin{figure}[htbp]
		\centering
		\includegraphics[width=\linewidth]{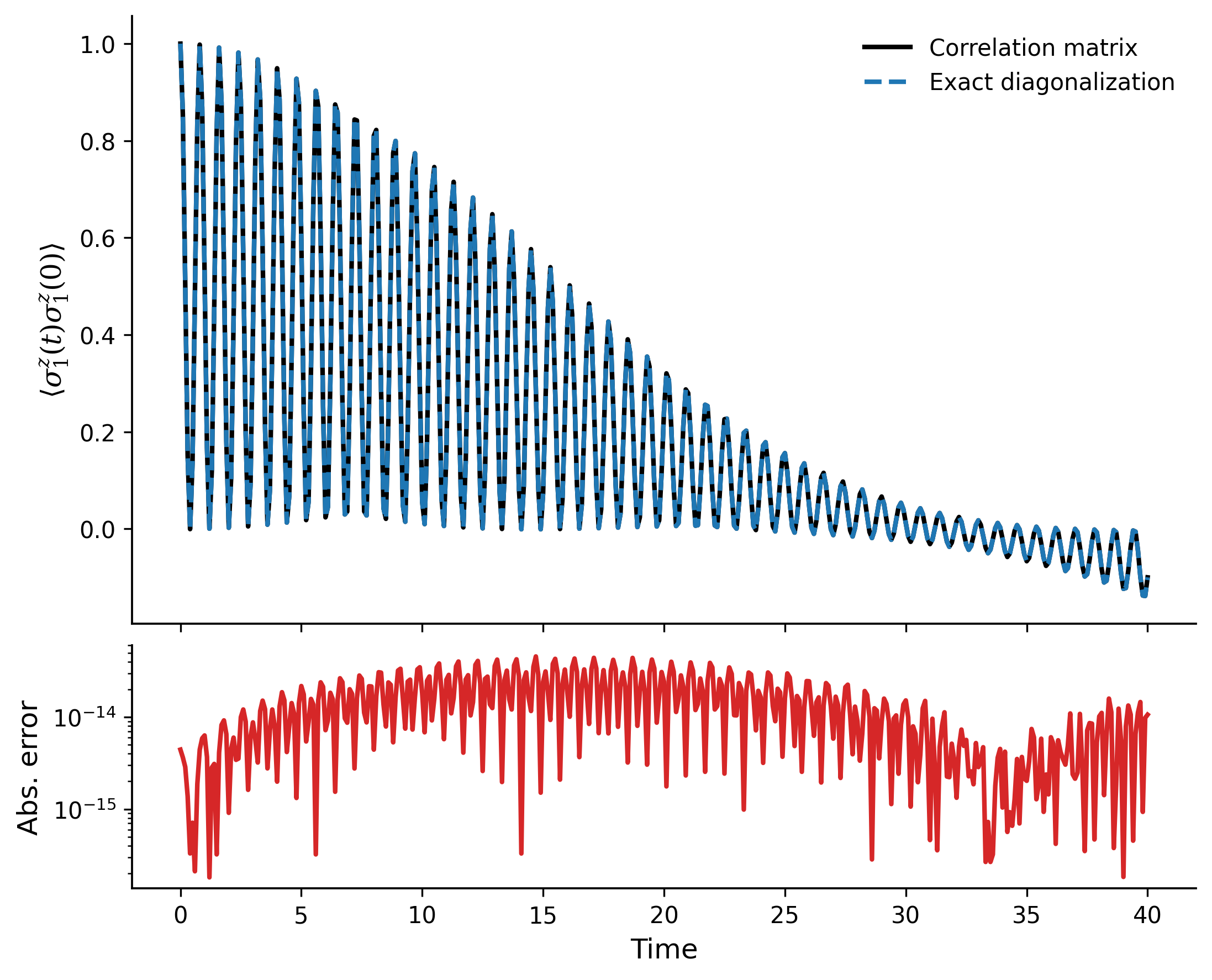}
		\caption{Benchmark of the correlation-matrix formula for the site autocorrelation function against full exact diagonalization in the non-interacting SSH chain. The calculation uses $L = 10$, $\delta_i = -0.95$, $\delta_f = +0.95$, and the initial half-filled ground state. The upper panel compares $\langle \sigma_1^z(t)\sigma_1^z(0)\rangle$ obtained from Eq.~\eqref{eq:autocorr_final} with the corresponding many-body exact-diagonalization result. The lower panel shows the absolute difference between the two calculations. The agreement is at machine precision throughout the plotted interval.}
		\label{fig:autocorr_benchmark}
	\end{figure}

	\subsection{Smaller Dimerization}
	\label{app:delta07}
	This subsection gives a parameter-variation check of the non-interacting SP $\to$ SP quench by reducing the dimerization magnitude from $|\delta| = 0.95$ to
	$|\delta| = 0.70$. We take $\delta_i = -0.70$, $\delta_f = +0.70$, and $J_z^i = J_z^f = 0$ at the same system size $L = 200$ as in the main text.
	The resulting edge-magnetization dynamics are shown in Fig.~\ref{fig:delta07_edge}. The same qualitative relaxation mechanism is present, but the relaxation is visibly faster than for the strongly dimerized case shown in Fig.~\ref*{fig:single_particle}. For this comparison quench, the initial and final SSH one-body hopping amplitudes are (via Eq.~\eqref{eq:J_param})
	\begin{equation}
		J_1^i = 0.60, \qquad J_2^i = 3.40, \qquad J_1^f = 3.40, \qquad J_2^f = 0.60.
		\label{eq:delta07_couplings}
	\end{equation}
	Hence the initial zero-mode ratio is
	\begin{equation}
		r = -\frac{J_1^i}{J_2^i} = -\frac{0.60}{3.40} \approx -0.1765,
		\label{eq:delta07_r}
	\end{equation}
	so the initial boundary mode is still localized but substantially less tightly bound to the edge than for $\delta_i = -0.95$.
	On the final trivial side one has SSH one-body hopping amplitudes $a = 3.40$ and $b = 0.60$. The asymptotic estimate of Sec.~\ref*{sec:theorem1} applies to the boundary-mode return amplitude and to its envelope scale $\tau_\epsilon^{\rm env}$ defined in Eq.~(\ref*{eq:tau_env_def}), not directly to the full edge magnetization. Nevertheless, $|A_L(t)|^2$ is the boundary-mode contribution to the site-1 density, and the full edge magnetizations include this component together with contributions from the other initially occupied modes. Since the SSH one-body hopping $b$ increases from $0.10$ at $\delta_f = 0.95$ to $0.60$ at $\delta_f = 0.70$, Corollary~\ref*{cor:tau_scaling} predicts an envelope time scale smaller by a factor close to $6$ for the boundary-mode return amplitude. The beat period is likewise reduced to
	\begin{equation}
		T_{\rm beat} = \frac{\pi}{b} = \frac{\pi}{0.60} \approx 5.24,
		\label{eq:delta07_tbeat}
	\end{equation}
	which is much shorter than the value $T_{\rm beat} \approx 31.4$ at $\delta_f = 0.95$. These analytical changes apply directly to $A_L(t)$, while the visibly faster relaxation of $S_1^z(t)$ and $S_N^z(t)$ in Fig.~\ref{fig:delta07_edge} occurs on a comparable reduced scale and is consistent with the same boundary-mode relaxation mechanism. No separate asymptotic theorem is claimed for the full magnetization.
	\begin{figure}[htbp]
		\centering
		\includegraphics[width=\linewidth]{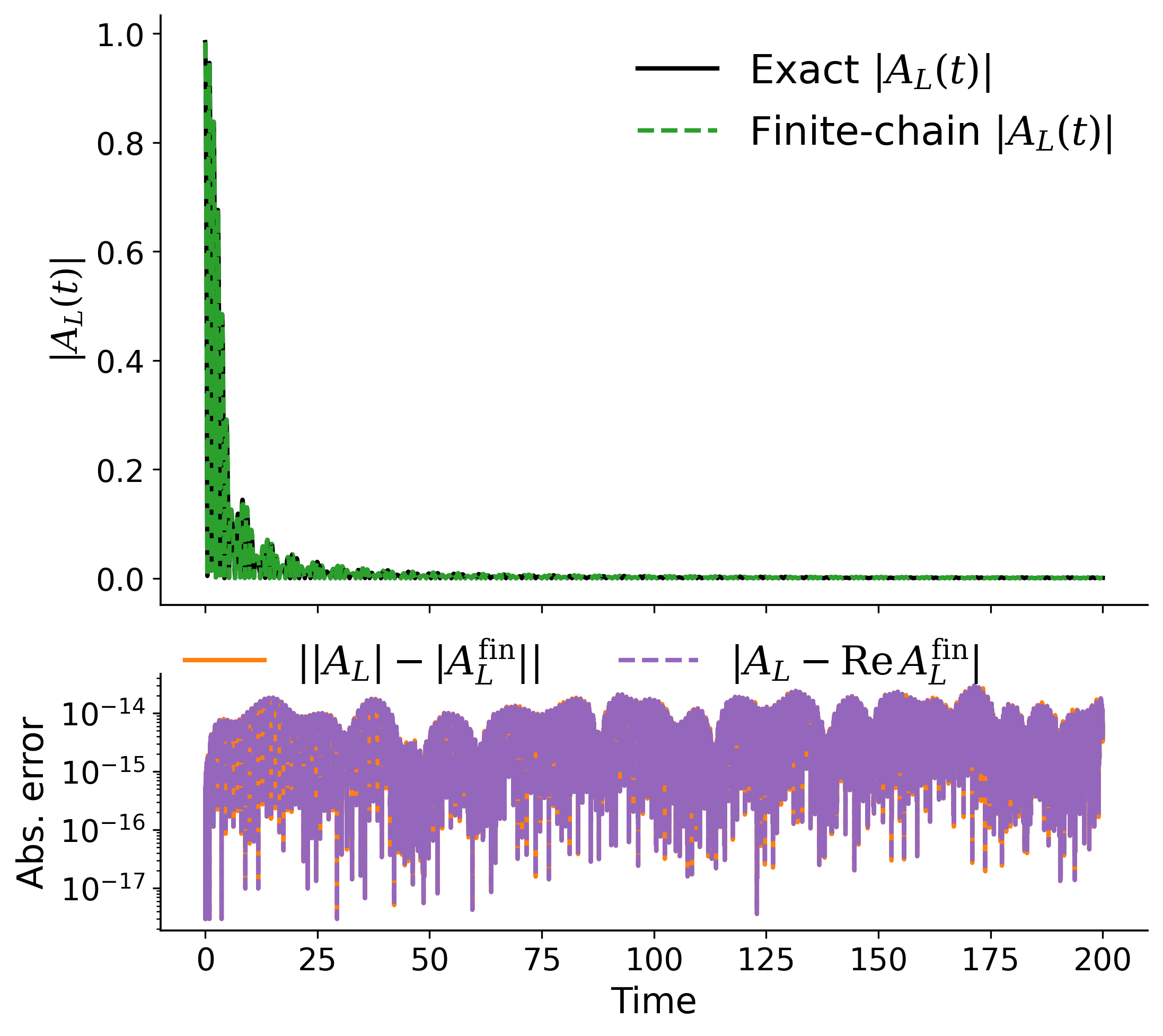}
		\caption{Benchmark of the exact boundary-return amplitude for the smaller-dimerization SP $\to$ SP quench with $L = 200$, $\delta_i = -0.70$, and $\delta_f = +0.70$. The upper panel compares $|A_L(t)|$ obtained from the exact semi-infinite integral representation, Eq.~(\ref*{eq:AL_exact})--(\ref*{eq:Fk}), with direct propagation of the initial left boundary mode, Eq.~(\ref*{eq:thm_zeromode}), on the finite open chain. The lower panel shows the absolute difference between the two results over the full interval $0 \leq t \leq 200$. As in the strongly dimerized benchmark of Fig.~\ref*{fig:AL_benchmark}, the agreement is at numerical precision throughout the plotted window, confirming that the exact boundary-mode return amplitude remains quantitatively reliable for the reduced dimerization.}
		\label{fig:AL_benchmark_d07}
	\end{figure}
	\begin{figure}[htbp]
		\centering
		\includegraphics[width=\linewidth]{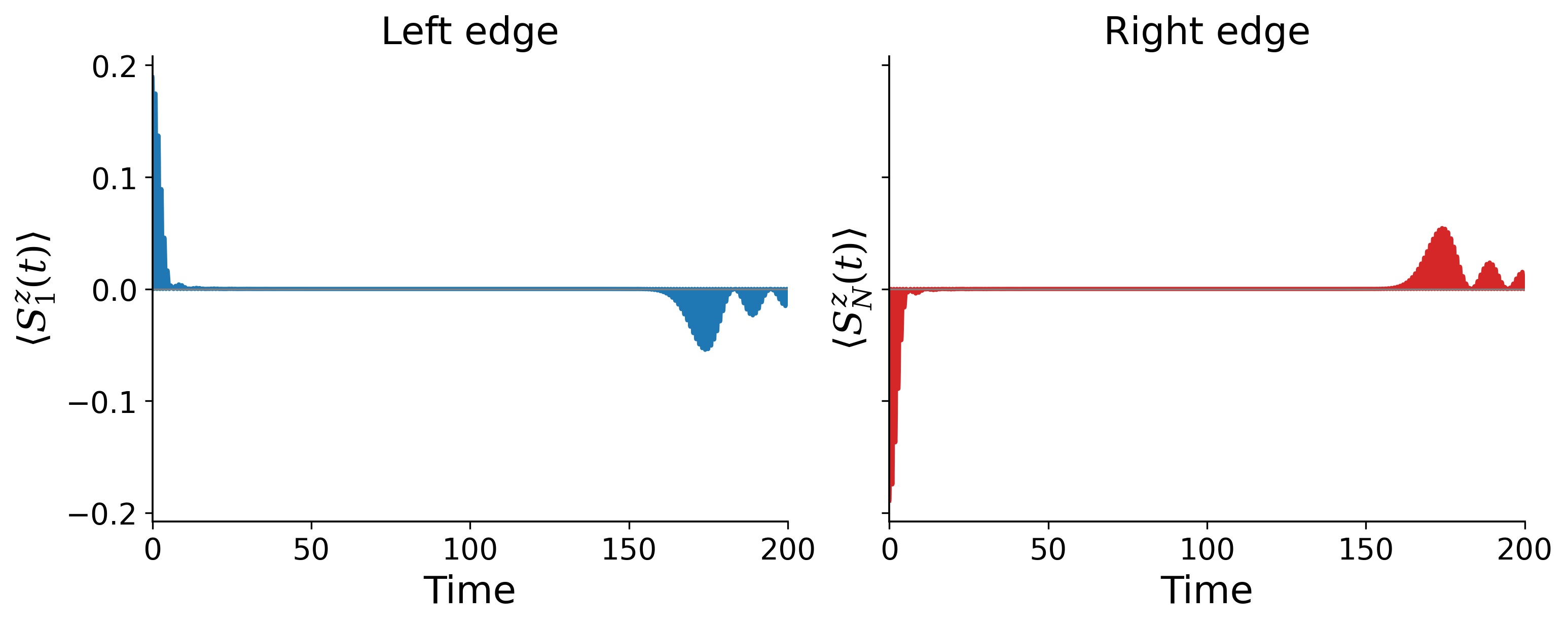}
		\caption{Non-interacting SP $\to$ SP quench for smaller dimerization magnitude, with $L = 200$, $\delta_i = -0.70$, $\delta_f = +0.70$, and $J_z^i = J_z^f = 0$. The left and right panels show $S_1^z(t)$ and $S_N^z(t)$, respectively. The full edge magnetizations are observables containing the boundary-mode density contribution together with contributions from the other initially occupied modes, so the theorem of Sec.~\ref*{sec:theorem1} is not an asymptotic theorem for these curves. Nevertheless, the relaxation of both edge signals occurs on a timescale comparable to the first decay of the boundary-mode return amplitude and is faster than in Fig.~\ref*{fig:single_particle}, consistent with the reduced envelope scale and the shorter beat period $T_{\rm beat} = \pi/[2(1-\delta_f)] \approx 5.24$ for $A_L(t)$.}
		\label{fig:delta07_edge}
	\end{figure}
	
	\subsection{Center-Site Magnetization}
	\label{app:center_spin}
	As a bulk reference, Fig.~\ref{fig:center_magnetization} shows the time evolution of the center-site magnetization $S_{N/2}^z(t)$ for the SP $\to$ SP quench with $L = 200$, $\delta_i = -0.95$, and $\delta_f = +0.95$. For the strongly localized initial edge mode considered here, one expects the center site to carry negligible weight already at $t=0$, because the boundary zero mode of Appendix~\ref{app:zero_mode} decays as $r^{m-1}$ with $|r| \approx 0.02564$, so its amplitude is exponentially small deep in the bulk. The numerical result is even stronger: within floating-point precision, $S_{N/2}^z(t)$ is indistinguishable from zero throughout the plotted interval. The tiny visible deviations in the raw output are at the level of numerical roundoff and have no physical significance. This provides a useful bulk reference against which the large edge response in Fig.~\ref*{fig:single_particle} can be contrasted.
	
	\begin{figure}[htbp]
		\centering
		\includegraphics[width=\linewidth]{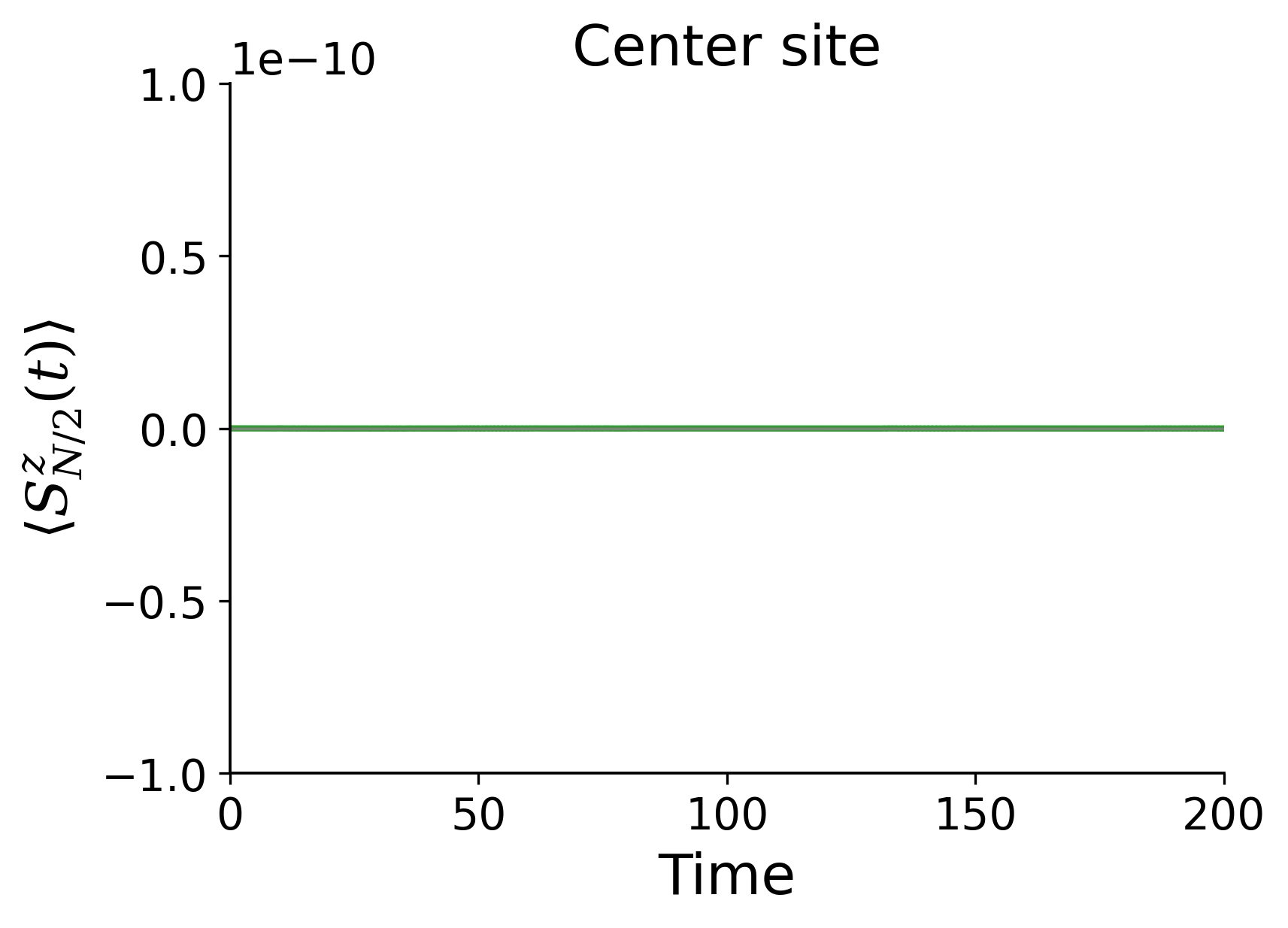}
		\caption{Quench dynamics of the center-site magnetization $S_{N/2}^z(t)$ in the non-interacting ($J_z = 0$) SSH chain for $L = 200$, $\delta_i = -0.95$, and $\delta_f = +0.95$, computed using the correlation-matrix method, Eq.~(\ref*{eq:corrmat_time}). The plotting window is fixed to a tiny symmetric range about zero, so the curve appears as a flat line. Any residual fluctuations in the underlying data are at the level of numerical roundoff, showing that the center site remains at numerical zero throughout the evolution.}
		\label{fig:center_magnetization}
	\end{figure}

	\section{Bound for the edge magnetization}
	\label{app:proof}
	
	This appendix proves Eq.~(\ref*{eq:bound_main}). Throughout, $a = J_1^f$ and $b = J_2^f$ are the final SSH one-body hopping amplitudes defined in Eq.~\eqref{eq:J_param}, with $a > b > 0$ for $\delta_f > 0$. The final Hamiltonian is written as
	\begin{align}
		H_b=H_0+bW,
		\label{eq:app_Hb}
	\end{align}
	where
	\begin{align}
		H_0=\sum_{m=1}^{N/2}h_{2m-1,2m},
		\label{eq:app_H0}
	\end{align}
	with
	\begin{align}
		h_{2m-1,2m}
		=&\,a\Bigl(
		\sigma_{2m-1}^x\sigma_{2m}^x
		+\sigma_{2m-1}^y\sigma_{2m}^y
		\nonumber\\
		&\hspace{1.3cm}
		+J_z^f\sigma_{2m-1}^z\sigma_{2m}^z
		\Bigr),
		\label{eq:app_h}
	\end{align}
	and
	\begin{align}
		W=\sum_{m=1}^{N/2-1}w_{2m,2m+1},
		\label{eq:app_W}
	\end{align}
	with
	\begin{align}
		w_{2m,2m+1}
		=&\,\sigma_{2m}^x\sigma_{2m+1}^x
		+\sigma_{2m}^y\sigma_{2m+1}^y
		\nonumber\\
		&+J_z^f\sigma_{2m}^z\sigma_{2m+1}^z .
		\label{eq:app_w}
	\end{align}
	This decomposition is exact for the bond-scaled parametrization of Eq.~(\ref*{eq:params}). The limit $b=0$ removes both inter-dimer terms and leaves a sum of independent two-site Hamiltonians.
	
	Let
	\begin{align}
		S_{1,b}^z(t)=e^{iH_bt}S_1^ze^{-iH_bt},\qquad
		S_{1,0}^z(t)=e^{iH_0t}S_1^ze^{-iH_0t}.
		\label{eq:app_S_defs}
	\end{align}
	For fixed $t$, introduce
	\begin{align}
		F(s)=e^{iH_bs}S_{1,0}^z(t-s)e^{-iH_bs},
		\qquad 0\le s\le t .
		\label{eq:app_F_def}
	\end{align}
	Then
	\begin{align}
		F(0)=S_{1,0}^z(t),\qquad
		F(t)=S_{1,b}^z(t).
		\label{eq:app_F_endpoints}
	\end{align}
	Differentiating Eq.~\eqref{eq:app_F_def} gives
	\begin{align}
		\frac{dF}{ds}
		=&\,i e^{iH_bs}
		[H_b,S_{1,0}^z(t-s)]
		e^{-iH_bs}
		\nonumber\\
		&+e^{iH_bs}
		\frac{d}{ds}S_{1,0}^z(t-s)
		e^{-iH_bs}.
		\label{eq:app_dF_1}
	\end{align}
	Since
	\begin{align}
		\frac{d}{du}S_{1,0}^z(u)
		=i[H_0,S_{1,0}^z(u)],
		\label{eq:app_H0_eom}
	\end{align}
	one has
	\begin{align}
		\frac{d}{ds}S_{1,0}^z(t-s)
		=-i[H_0,S_{1,0}^z(t-s)].
		\label{eq:app_chain_rule}
	\end{align}
	Substitution into Eq.~\eqref{eq:app_dF_1} yields
	\begin{align}
		\frac{dF}{ds}
		=&\,i e^{iH_bs}
		[H_b-H_0,S_{1,0}^z(t-s)]
		e^{-iH_bs}
		\nonumber\\
		=&\,\frac{ib}{2} e^{iH_bs}
		[W,S_{1,0}^z(t-s)]
		e^{-iH_bs}.
		\label{eq:app_dF_final}
	\end{align}
	Integrating over $s$ gives the exact identity
	\begin{align}
		S_{1,b}^z(t)-S_{1,0}^z(t)
		=
		\frac{ib}{2}\int_0^t ds\,
		e^{iH_bs}
		[W,S_{1,0}^z(t-s)]
		e^{-iH_bs}.
		\label{eq:app_identity}
	\end{align}
	
	Taking the operator norm and using invariance under unitary conjugation,
	\begin{align}
		\bigl\|S_{1,b}^z(t)-S_{1,0}^z(t)\bigr\|
		\le
		\frac{|b|}{2}\int_0^t ds\,
		\bigl\|[W,S_{1,0}^z(t-s)]\bigr\|.
		\label{eq:app_norm_1}
	\end{align}
	Because $H_0$ is a sum of independent dimers and $[h_{mn},S_1^z]=0$ for all dimers $(m,n)$ with $m\ge3$, the Heisenberg evolution of $S_1^z$ under $H_0$ is generated entirely by $h_{12}$, which couples only sites $1$ and $2$. Hence  $S_{1,0}^z(t-s)$ is supported on sites $1$ and $2$ for all $t-s\ge0$, but no further. Hence all terms in $W$ commute with it except $w_{2,3}$:
	\begin{align}
		[W,S_{1,0}^z(t-s)]
		=
		[w_{2,3},S_{1,0}^z(t-s)].
		\label{eq:app_local_comm}
	\end{align}
	Using $\|[A,B]\|\le2\|A\|\|B\|$,
	\begin{align}
		\bigl\|[w_{2,3},S_{1,0}^z(t-s)]\bigr\|
		\le
		2\|w_{2,3}\|\,\|S_{1,0}^z(t-s)\|.
		\label{eq:app_comm_bound}
	\end{align}
	The second factor is
	\begin{align}
		\|S_{1,0}^z(t-s)\|=\|S_1^z\|=\frac{1}{2}.
		\label{eq:app_S_norm}
	\end{align}
	For the first factor,
	\begin{align}
		\|w_{2,3}\|
		&\le
		\|\sigma_2^x\sigma_3^x\|
		+\|\sigma_2^y\sigma_3^y\|
		+|J_z^f|\|\sigma_2^z\sigma_3^z\|
		\nonumber\\
		&=2+|J_z^f|.
		\label{eq:app_w_norm}
	\end{align}
	Equations~\eqref{eq:app_comm_bound}--\eqref{eq:app_w_norm} imply
	\begin{align}
		\bigl\|[W,S_{1,0}^z(t-s)]\bigr\|
		\le 2+|J_z^f|.
		\label{eq:app_final_comm_bound}
	\end{align}
	Substitution into Eq.~\eqref{eq:app_norm_1} gives
	\begin{align}
		\bigl\|S_{1,b}^z(t)-S_{1,0}^z(t)\bigr\|
		\le \frac{|b|}{2}\,(2+|J_z^f|)\,t .
		\label{eq:app_operator_bound}
	\end{align}
	This proves the operator bound. For any normalized state $|\psi\rangle$,
	\begin{align}
		\left|
		\langle\psi|
		S_{1,b}^z(t)-S_{1,0}^z(t)
		|\psi\rangle
		\right|
		\le
		\bigl\|S_{1,b}^z(t)-S_{1,0}^z(t)\bigr\|,
		\label{eq:app_expect_bound}
	\end{align}
	which proves Eq.~(\ref*{eq:expect_main}).
	
	\section{Leakage identity for the first-dimer magnetization}
	\label{app:leakageproof}
	
	This Appendix derives Eq.~(\ref*{eq:M12_current_main}). Define
	\begin{align}
		M_{12}^z=S_1^z+S_2^z
		=\frac{1}{2}(\sigma_1^z+\sigma_2^z).
		\label{eq:app_M12}
	\end{align}
	The Heisenberg equation gives
	\begin{align}
		\frac{d}{dt}M_{12}^z(t)=i[H_b,M_{12}^z](t).
		\label{eq:app_M12_eom}
	\end{align}
	All Hamiltonian terms supported on sites $3,4,\ldots,N$ commute with $M_{12}^z$. The intra-dimer bond $(1,2)$ also commutes with $M_{12}^z$, since its $XY$ part conserves the total $z$ magnetization on sites $1$ and $2$, and its $ZZ$ part is diagonal in the $z$ basis. The $ZZ$ part of the weak bond $(2,3)$ also commutes with $M_{12}^z$. Thus the only nonzero contribution comes from
	\begin{align}
		h_{23}^{xy}
		=
		\frac{b}{2}(\sigma_2^x\sigma_3^x+\sigma_2^y\sigma_3^y).
		\label{eq:app_h23xy}
	\end{align}
	Using
	\begin{align}
		[\sigma^x,\sigma^z]=-2i\sigma^y,\qquad
		[\sigma^y,\sigma^z]=2i\sigma^x,
		\label{eq:app_pauli_comm}
	\end{align}
	we find
	\begin{align}
		[\sigma_2^x\sigma_3^x,M_{12}^z]
		&=
		\frac{1}{2}[\sigma_2^x,\sigma_2^z]\sigma_3^x
		=
		-i\sigma_2^y\sigma_3^x,
		\label{eq:app_comm_x}\\
		[\sigma_2^y\sigma_3^y,M_{12}^z]
		&=
		\frac{1}{2}[\sigma_2^y,\sigma_2^z]\sigma_3^y
		=
		i\sigma_2^x\sigma_3^y .
		\label{eq:app_comm_y}
	\end{align}
	Therefore,
	\begin{align}
		i[h_{23}^{xy},M_{12}^z]
		=
		\frac{b}{2}(\sigma_2^y\sigma_3^x-\sigma_2^x\sigma_3^y),
		\label{eq:app_current_identity}
	\end{align}
	which proves
	\begin{align}
		\frac{d}{dt}M_{12}^z(t)
		=
		\frac{b}{2}(\sigma_2^y\sigma_3^x-\sigma_2^x\sigma_3^y)(t).
		\label{eq:app_M12_final}
	\end{align}
	Since each two-site Pauli product has norm one,
	\begin{align}
		\left\|
		\sigma_2^y\sigma_3^x-\sigma_2^x\sigma_3^y
		\right\|
		\le2 .
		\label{eq:app_current_norm}
	\end{align}
	Consequently, for any normalized state,
	\begin{align}
		\left|
		\frac{d}{dt}\langle M_{12}^z(t)\rangle
		\right|
		\le|b|,
		\label{eq:app_current_expect_bound}
	\end{align}
	and integration gives
	\begin{align}
		\left|
		\langle M_{12}^z(t)\rangle-\langle M_{12}^z(0)\rangle
		\right|
		\le|b|\,t .
		\label{eq:app_M12_integrated}
	\end{align}
	
	\section{First-order transition formula around the dimer limit}
	\label{app:perturbationproof}
	
	This Appendix derives Eq.~(\ref*{eq:transition_prob_main}). Consider
	\begin{align}
		H_b=H_0+\frac{b}{2}\,W,
		\label{eq:app_pert_H}
	\end{align}
	and let
	\begin{align}
		H_0|\alpha\rangle=E_\alpha|\alpha\rangle,\qquad
		H_0|\beta\rangle=E_\beta|\beta\rangle.
		\label{eq:app_H0_eigen}
	\end{align}
	In the interaction picture with respect to $H_0$, the state satisfies
	\begin{align}
		i\frac{d}{dt}|\psi_I(t)\rangle
		=
		\frac{b}{2}\,W_I(t)|\psi_I(t)\rangle,
		\quad
		W_I(t)=e^{iH_0t}We^{-iH_0t}.
		\label{eq:app_interaction_picture}
	\end{align}
	Starting from $|\psi_I(0)\rangle=|\alpha\rangle$, the first-order correction is
	\begin{align}
		|\psi_I^{(1)}(t)\rangle
		=
		-\frac{ib}{2}\int_0^t ds\,W_I(s)|\alpha\rangle.
		\label{eq:app_first_order_state}
	\end{align}
	For $\beta\neq\alpha$, the transition amplitude is
	\begin{align}
		A_{\beta\leftarrow\alpha}^{(1)}(t)
		&=
		\langle\beta|\psi_I^{(1)}(t)\rangle
		\nonumber\\
		&=
		-\frac{ib}{2}\,W_{\beta\alpha}
		\int_0^t ds\,e^{i(E_\beta-E_\alpha)s},
		\label{eq:app_amp_integral}
	\end{align}
	where $W_{\beta\alpha}=\langle\beta|W|\alpha\rangle$. For $\Delta_{\beta\alpha}=E_\beta-E_\alpha\neq0$,
	\begin{align}
		A_{\beta\leftarrow\alpha}^{(1)}(t)
		=
		\frac{b}{2}\,W_{\beta\alpha}
		\frac{1-e^{i\Delta_{\beta\alpha}t}}
		{\Delta_{\beta\alpha}} .
		\label{eq:app_amp_final}
	\end{align}
	Thus
	\begin{align}
		P_{\beta\leftarrow\alpha}^{(1)}(t)
		&=
		\left|A_{\beta\leftarrow\alpha}^{(1)}(t)\right|^2
		\nonumber\\
		&=
		b^2|W_{\beta\alpha}|^2
		\frac{\sin^2(\Delta_{\beta\alpha}t/2)}
		{\Delta_{\beta\alpha}^2}.
		\label{eq:app_prob_final}
	\end{align}
	For $\Delta_{\beta\alpha}=0$, the same result is understood by taking the limit,
	\begin{align}
		P_{\beta\leftarrow\alpha}^{(1)}(t)
		=
		\frac{b^2}{4}|W_{\beta\alpha}|^2t^2 .
		\label{eq:app_prob_resonant}
	\end{align}
	Equation~\eqref{eq:app_prob_final} is the finite-system first-order transition formula. The usual Fermi golden rule follows from its long-time limit only after replacing the discrete final spectrum by a continuum of states.
	
\section{Boundary normal form and asymptotically long dimer memory}
\label{app:boundary_normal_form}

This appendix proves Proposition~\ref*{prop:asymptotic_boundary_memory}. The argument is a finite-order perturbative construction near the decoupled-dimer limit. The full boundary Hamiltonian is
\begin{align}
	H_b = H_0 + \varepsilon W, \qquad \varepsilon = \frac{b}{2},
	\label{eq:app_nf_Hb}
\end{align}
where $H_0$ is the sum of isolated dimer Hamiltonians and $W$ is the sum of inter-dimer bonds; $\varepsilon$ is the small parameter of the construction. The physical boundary charge is
\begin{align}
	Q_0 = M_{12}^z = S_1^z + S_2^z .
	\label{eq:app_nf_Q0}
\end{align}
Because $H_0$ decomposes into isolated dimers, $Q_0$ commutes with $H_0$ exactly:
\begin{align}
	[H_0, Q_0] = 0 .
	\label{eq:app_nf_Q0_commutes}
\end{align}
Once $\varepsilon W$ is switched on, this exact conservation is broken. A direct calculation gives
\begin{align}
	[H_b, Q_0] = \varepsilon [W, Q_0],
	\label{eq:app_nf_Q0_leaks}
\end{align}
so $Q_0$ leaks at rate $O(\varepsilon)$. The strategy is to construct, order by order in $\varepsilon$, a \emph{dressed} operator $Q^{(p)}$ --- an operator close to $Q_0$ but modified by small inter-dimer corrections --- whose commutator with $H_b$ is $O(\varepsilon^{p+1})$ rather than $O(\varepsilon)$. Physically, $Q^{(p)}$ is the approximately conserved charge of $H_b$ that reduces to the exact boundary charge $Q_0$ when $\varepsilon=0$.

\paragraph{Motivation for the ansatz.}
Suppose we try to cancel the $O(\varepsilon)$ leakage in Eq.~\eqref{eq:app_nf_Q0_leaks} by adding a correction $\varepsilon Q_1$, so that $Q^{(1)} = Q_0 + \varepsilon Q_1$. Then
\begin{align}
	[H_b, Q^{(1)}]
	= [H_0, Q_0]
	+ \varepsilon\bigl([H_0, Q_1] + [W, Q_0]\bigr)
	+ \varepsilon^2 [W, Q_1].
	\label{eq:app_nf_first_order_example}
\end{align}
The first term vanishes by Eq.~\eqref{eq:app_nf_Q0_commutes}. For the $O(\varepsilon)$ bracket to vanish we must choose $Q_1$ to satisfy
\begin{align}
	[H_0, Q_1] = -[W, Q_0].
	\label{eq:app_nf_Q1_equation}
\end{align}
If this equation can be solved, the leading leakage is pushed to $O(\varepsilon^2)$. Repeating the same step at every order up to $p$ uniquely determines $Q_1, Q_2, \ldots, Q_p$ and yields the dressed charge
\begin{align}
	Q^{(p)} = Q_0 + \varepsilon Q_1 + \varepsilon^2 Q_2
	+ \cdots + \varepsilon^p Q_p,
	\label{eq:app_nf_Qp}
\end{align}
whose commutator with $H_b$ has no terms of order $\varepsilon, \varepsilon^2, \ldots, \varepsilon^p$. Each $Q_n$ is not a free choice: it is \emph{forced} by the requirement that the $O(\varepsilon^n)$ commutator vanishes, and it is uniquely determined by $Q_0, \ldots, Q_{n-1}$ whenever the solvability condition below holds.

\paragraph{Solvability: the non-resonance condition.}
The equation~\eqref{eq:app_nf_Q1_equation} and its higher-order analogues are all of the form $[H_0, X] = A$ for a known operator $A$. Define the adjoint action of $H_0$ by
\begin{align}
	\mathcal{L}_0(A) = [H_0, A].
	\label{eq:app_nf_L0}
\end{align}
In the eigenbasis $H_0 |\alpha\rangle = E_\alpha |\alpha\rangle$, the equation $\mathcal{L}_0(X) = A$ reads
\begin{align}
	(E_\alpha - E_\beta)\, X_{\alpha\beta} = A_{\alpha\beta},
	\label{eq:app_nf_matrix_equation}
\end{align}
with solution
\begin{align}
	X_{\alpha\beta} = \frac{A_{\alpha\beta}}{E_\alpha - E_\beta}.
	\label{eq:app_nf_inverse_matrix}
\end{align}
This is well defined whenever $E_\alpha \neq E_\beta$. If $E_\alpha = E_\beta$ and $A_{\alpha\beta} \neq 0$, no solution exists: the right-hand side of Eq.~\eqref{eq:app_nf_matrix_equation} is nonzero while the left-hand side is zero. We call such a pair $(\alpha, \beta)$ a \emph{local resonance} for $A$.

Equivalently, define the time-average projector
\begin{align}
	\mathcal{P}_0(A)
	= \lim_{T \to \infty} \frac{1}{T}
	\int_0^T ds\; e^{iH_0 s} A\, e^{-iH_0 s},
	\label{eq:app_nf_P0}
\end{align}
which retains only the part of $A$ that does not oscillate under $H_0$-evolution. To see why this matters, note that in the eigenbasis of $H_0$ the matrix element $\bigl(e^{iH_0 s}Ae^{-iH_0 s}\bigr)_{\alpha\beta} = e^{i(E_\alpha-E_\beta)s}A_{\alpha\beta}$. Time-averaging kills every oscillating entry ($E_\alpha\neq E_\beta$) and retains exactly the degenerate entries ($E_\alpha=E_\beta$): $\mathcal{P}_0(A)_{\alpha\beta} = A_{\alpha\beta}\, \delta_{E_\alpha,E_\beta}$. These are precisely the matrix elements for which the denominator in Eq.~\eqref{eq:app_nf_inverse_matrix} vanishes, so $\mathcal{P}_0(A)=0$ is the exact condition that no division by zero occurs. The equation $\mathcal{L}_0(X)=A$ therefore has a solution if and only if $\mathcal{P}_0(A)=0$, i.e., $A$ has no zero-frequency (``DC'') component in the $H_0$ eigenbasis. When this condition holds, the unique solution with zero diagonal is $X = \mathcal{I}_0(A)$, where $\mathcal{I}_0 = \mathcal{L}_0^{-1}$ denotes the inverse on the off-resonant subspace.

\paragraph{Non-resonance assumption.}
For the chosen order $p$, we assume that the recursively generated operators
\begin{align}
	R_n = [W, Q_{n-1}], \qquad n = 1, \ldots, p,
	\label{eq:app_nf_Rn}
\end{align}
satisfy $\mathcal{P}_0(R_n) = 0$ on the finite cluster of the first $n+1$ dimers, and that the inverse $\mathcal{I}_0$ is bounded on that cluster:
\begin{align}
	\bigl\|\mathcal{I}_0(R_n)\bigr\| \le \Gamma_p \|R_n\|,
	\qquad n = 1, \ldots, p,
	\label{eq:app_nf_inverse_bound}
\end{align}
for a finite constant $\Gamma_p$ that depends only on the minimum nonzero energy gap of $H_0$ on that cluster. Because this is a finite-cluster condition at each fixed $p$, it fails only at a finite discrete set of values of $J_z^f$. The isotropic point $J_z^f = 1$ is one such excluded value; see Sec.~\ref*{sec:isotropic_resonance} and Appendix~\ref{app:isotropic_channel_derivation}.

\paragraph{Construction of the dressed charge.}
We proceed by induction. The base case is $Q_0$, supported on the first dimer. Suppose $Q_{n-1}$ has been constructed and is supported on the first $n$ dimers. Because $W$ is a sum of inter-dimer bonds and each bond connects adjacent dimers, only the bonds touching the support of $Q_{n-1}$ can contribute a nonzero commutator. Hence $R_n = [W, Q_{n-1}]$ is supported on the first $n+1$ dimers. By the non-resonance assumption, we may invert $\mathcal{L}_0$ on $R_n$ and define
\begin{align}
	Q_n = -\mathcal{I}_0(R_n)
	= -\mathcal{I}_0\bigl([W, Q_{n-1}]\bigr),
	\qquad n = 1, \ldots, p.
	\label{eq:app_nf_recursion}
\end{align}
By construction, $\mathcal{L}_0(Q_n) = [H_0, Q_n] = -R_n = -[W, Q_{n-1}]$, which is precisely the cancellation condition
\begin{align}
	[H_0, Q_n] + [W, Q_{n-1}] = 0, \qquad n = 1, \ldots, p.
	\label{eq:app_nf_cancel}
\end{align}
Equation~\eqref{eq:app_nf_cancel} is the central identity of the construction. The first-order example~\eqref{eq:app_nf_first_order_example} illustrates how it works: the first bracket on the right is zero by Eq.~\eqref{eq:app_nf_Q0_commutes} and the second bracket is zero by Eq.~\eqref{eq:app_nf_cancel} with $n=1$, leaving only $\varepsilon^2 [W, Q_1]$ as the residual. The same cancellation repeats at every order.

\paragraph{Norm bounds.}
Because the support of $Q_n$ grows by at most one dimer at each step and the Hilbert space on any fixed finite cluster is finite-dimensional, the operator norm of $Q_n$ is bounded by a constant that depends only on $p$ and the dimer Hamiltonian, not on the total chain length. This chain-length independence is essential: it means the dressed charge and all subsequent bounds are bulk-independent properties of the boundary, and remain valid as $L\to\infty$. More precisely, using Eq.~\eqref{eq:app_nf_inverse_bound} and the fact that $\|R_n\|$ is bounded by finitely many commutators of bounded local operators, there exists $A_p < \infty$ such that
\begin{align}
	\|Q_n\| \le A_p, \qquad n = 1, \ldots, p.
	\label{eq:app_nf_Qn_bound}
\end{align}

\paragraph{Almost conservation of $Q^{(p)}$.}
We now compute $[H_b, Q^{(p)}]$ directly. Expanding $H_b = H_0 + \varepsilon W$ and collecting powers of $\varepsilon$,
\begin{align}
	[H_b, Q^{(p)}]
	&= \sum_{n=0}^{p} \varepsilon^n [H_0, Q_n]
	+ \sum_{n=0}^{p} \varepsilon^{n+1} [W, Q_n]
	\nonumber \\
	&= \varepsilon^{p+1} [W, Q_p].
	\label{eq:app_nf_commutator}
\end{align}
The second equality is an exact algebraic identity: for $n = 0$ the term $[H_0, Q_0] = 0$ by Eq.~\eqref{eq:app_nf_Q0_commutes}; for $n = 1, \ldots, p$ the combination $\varepsilon^n [H_0, Q_n] + \varepsilon^n [W, Q_{n-1}] = 0$ by Eq.~\eqref{eq:app_nf_cancel}; and the surviving term at order $\varepsilon^{p+1}$ is $[W, Q_p]$. Since $Q_p$ is supported on the first $p+1$ dimers, only finitely many terms of $W$ contribute, so for some constant $B_p < \infty$ independent of chain length,
\begin{align}
	\|[W, Q_p]\| \le B_p.
	\label{eq:app_nf_residual_bound}
\end{align}
Therefore
\begin{align}
	\|[H_b, Q^{(p)}]\| \le B_p |\varepsilon|^{p+1}.
	\label{eq:app_nf_comm_bound_final}
\end{align}

\paragraph{Proximity to the physical charge.}
The dressed charge $Q^{(p)}$ differs from $Q_0$ by correction terms starting at order $\varepsilon$. For $|\varepsilon| \le 1$, the bound~\eqref{eq:app_nf_Qn_bound} gives
\begin{align}
	\|Q^{(p)} - Q_0\|
	\le \sum_{n=1}^{p} |\varepsilon|^n \|Q_n\|
	\le p\, A_p\, |\varepsilon|
	\eqqcolon C_p |\varepsilon|,
	\label{eq:app_nf_close_final}
\end{align}
where we set $C_p = p A_p$. Equations~\eqref{eq:app_nf_comm_bound_final} and~(\ref*{eq:app_nf_close_final}) prove Eqs.~(\ref*{eq:dressed_charge_comm_main}) and~(\ref*{eq:dressed_charge_close_main}).

\paragraph{From almost conservation to a physical bound.}
It remains to translate the almost conservation of $Q^{(p)}$ into a statement about the observable $Q_0$. For any normalized state $|\psi\rangle$, let $\langle \cdot \rangle_\psi$ denote the expectation value in the state $e^{-iH_b t}|\psi\rangle$. The Heisenberg equation gives $\frac{d}{dt}\langle Q^{(p)}(t)\rangle_\psi = \langle i[H_b, Q^{(p)}]\rangle_\psi$, so integrating and applying the operator-norm bound,
\begin{equation}
	\begin{aligned}
		\bigl|
		\langle Q^{(p)}(t)\rangle_\psi
		- \langle Q^{(p)}(0)\rangle_\psi
		\bigr|
		&\le \int_0^{|t|} ds\, \bigl\|[H_b, Q^{(p)}]\bigr\| \\
		&\le B_p |\varepsilon|^{p+1} |t|.
		\label{eq:app_nf_dressed_drift}
	\end{aligned}
\end{equation}
To pass from $Q^{(p)}$ back to $Q_0$, note that for any time $t$,
\begin{align}
	\bigl|\langle Q^{(p)}(t)\rangle_\psi
	- \langle Q_0(t)\rangle_\psi\bigr|
	\le \|Q^{(p)} - Q_0\|
	\le C_p |\varepsilon|,
\end{align}
where the first inequality uses $|\langle\psi|A|\psi\rangle|\le\|A\|$ for any normalized state, and the second uses Eq.~\eqref{eq:app_nf_close_final}. The same bound holds at $t=0$. The factor of $2$ in Eq.~\eqref{eq:app_nf_Q0_bound} below arises from applying this estimate at both endpoints and then using the triangle inequality
\begin{equation}
\begin{aligned}
	|\langle Q_0(t)\rangle_\psi - \langle Q_0(0)\rangle_\psi|
	\le&
	|\langle Q_0(t)\rangle_\psi - \langle Q^{(p)}(t)\rangle_\psi| \\
	&+ |\langle Q^{(p)}(t)\rangle_\psi - \langle Q^{(p)}(0)\rangle_\psi| \\
	&+ |\langle Q^{(p)}(0)\rangle_\psi - \langle Q_0(0)\rangle_\psi|.
\end{aligned}
\end{equation}
Bounding the first and third terms by $C_p|\varepsilon|$ each, and the middle term by Eq.~\eqref{eq:app_nf_dressed_drift}, gives
\begin{align}
	\bigl|
	\langle Q_0(t)\rangle_\psi - \langle Q_0(0)\rangle_\psi
	\bigr|
	\le 2 C_p |\varepsilon|
	+ B_p |\varepsilon|^{p+1} |t|.
	\label{eq:app_nf_Q0_bound}
\end{align}
This is Eq.~(\ref*{eq:asymptotic_memory_bound_main}). The two terms have distinct physical origins. The first, $2C_p|\varepsilon|$, is a static dressing error: it is the price of replacing $Q_0$ by the dressed charge $Q^{(p)}$, and it is nonzero even at $t=0$ because $Q^{(p)}\neq Q_0$. It vanishes as $\varepsilon\to0$, so it is small whenever the inter-dimer coupling is weak. The second term, $B_p|\varepsilon|^{p+1}|t|$, is the accumulated dynamical drift of the dressed charge under $H_b$; it grows linearly in time but with a coefficient suppressed by $|\varepsilon|^{p+1}$, which can be made arbitrarily small by increasing $p$.

\paragraph{Time-window statement.}
The bound~\eqref{eq:app_nf_Q0_bound} controls the drift for all times, but its content is sharpest on time windows that grow as a power of the inverse coupling $b^{-1}$. Such windows are physically relevant because they can be made arbitrarily long as $b\to0$, capturing the experimentally observable slow decay of the edge magnetization. Concretely, since $\varepsilon = b/2$, substituting into Eq.~\eqref{eq:app_nf_Q0_bound} and evaluating on the time window $0 \le t \le b^{-\alpha}$, the dynamical drift contributes at most $B_p (b/2)^{p+1} b^{-\alpha} = O(b^{p+1-\alpha})$. Hence for every $\alpha < p+1$, up to harmless numerical prefactors,
\begin{align}
	\sup_{0 \le t \le b^{-\alpha}}
	\bigl|
	\langle Q_0(t)\rangle_\psi - \langle Q_0(0)\rangle_\psi
	\bigr|
	\le C_p b + C_p b^{p+1-\alpha}.
	\label{eq:app_nf_limit_bound}
\end{align}
The condition $\alpha < p+1$ is precisely what is needed for $b^{p+1-\alpha} \to 0$ as $b \to 0$. Both terms on the right-hand side therefore vanish in the weak-coupling limit at fixed $\alpha$ and $p$, proving Eq.~(\ref*{eq:asymptotic_memory_limit_main}). Since $p$ can be chosen as large as desired (subject to the non-resonance condition), the time window $b^{-\alpha}$ over which the initial charge is retained can be made to grow as an arbitrarily large power of the inverse coupling.

Finally, if the chosen topological edge sector satisfies $|\langle Q_0(0)\rangle_\psi| = q_* > 0$, the reverse triangle inequality applied to Eq.~\eqref{eq:app_nf_Q0_bound} gives the lower bound
\begin{align}
	|\langle Q_0(t)\rangle_\psi|
	\ge
	q_* - 2C_p|\varepsilon| - B_p|\varepsilon|^{p+1}|t|,
	\label{eq:app_nf_lower_bound}
\end{align}
which is Eq.~(\ref*{eq:asymptotic_memory_lower_bound_main}). For $|\varepsilon|$ small enough and $t$ within the window $b^{-\alpha}$, the right-hand side is strictly positive, confirming that the physical boundary charge retains a nonzero expectation value throughout that window.

	\section{Proof of the Initial-State Comparison Bound}
	\label{app:initialstatecomparison}
	
	This appendix proves Proposition~\ref*{prop:mbtomb_initial_state_comparison}.
	
	\begin{proof}
		Define the Heisenberg-picture operators
		\begin{align}
			S_{1,b}^z(t)  &= e^{iH_b t}\,S_1^z\,e^{-iH_b t}, \\
			S_{1,0}^z(t)  &= e^{iH_0 t}\,S_1^z\,e^{-iH_0 t}.
		\end{align}
		For any initial density matrix $\eta_i$, write
		\begin{align}
			\langle S_1^z(t)\rangle_{\eta_i,b}
			&= \Tr\!\bigl[\eta_i\,S_{1,b}^z(t)\bigr], &
			\langle S_1^z(t)\rangle_{\eta_i,0}
			&= \Tr\!\bigl[\eta_i\,S_{1,0}^z(t)\bigr].
		\end{align}
		By the triangle inequality,
		\begin{equation}
			\begin{aligned}
				\bigl|
				\langle S_1^z(t)\rangle_{\rho_i,b}
				-
				\langle S_1^z(t)\rangle_{\sigma_i,b}
				\bigr|
				\le{}&
				\bigl|
				\langle S_1^z(t)\rangle_{\rho_i,b}
				-
				\langle S_1^z(t)\rangle_{\rho_i,0}
				\bigr| \\
				&+
				\bigl|
				\langle S_1^z(t)\rangle_{\rho_i,0}
				-
				\langle S_1^z(t)\rangle_{\sigma_i,0}
				\bigr| \\
				&+
				\bigl|
				\langle S_1^z(t)\rangle_{\sigma_i,0}
				-
				\langle S_1^z(t)\rangle_{\sigma_i,b}
				\bigr|.
			\end{aligned}
			\label{eq:app_is_triangle}
		\end{equation}
		
		\paragraph{First and third terms.}
		The operator bound in Proposition~\ref*{prop:finite_time_stability_sptomb} and the estimate $|\Tr(\eta A)|\le\|A\|$ for any density matrix $\eta$ and bounded operator $A$ immediately give, for $\eta_i\in\{\rho_i,\sigma_i\}$,
		\begin{align}
			\bigl|
			\langle S_1^z(t)\rangle_{\eta_i,b}
			-
			\langle S_1^z(t)\rangle_{\eta_i,0}
			\bigr|
			\le
			\frac{|b|}{2}(2+|J_z^f|)\,t.
			\label{eq:app_is_opbound}
		\end{align}
		This bound is uniform in time and holds for both $\rho_i$ and $\sigma_i$.
		
		\paragraph{Middle term.}
		At $b=0$ the final Hamiltonian factorizes into independent dimers $H_0 = \sum_m h_{2m-1,2m}$, so $S_{1,0}^z(t)$ is supported only on sites $1$ and $2$. Using Eq.~(\ref*{eq:mbtomb_b0_reduction}), one may write
		\begin{align}
			\langle S_1^z(t)\rangle_{\rho_i,0}
			-
			\langle S_1^z(t)\rangle_{\sigma_i,0}
			=
			\Tr_{12}\!\Bigl[
			\bigl(\rho_{12}^{(i)}-\sigma_{12}^{(i)}\bigr)\,
			e^{ih_{12}t}\,S_1^z\,e^{-ih_{12}t}
			\Bigr].
			\label{eq:app_is_reduced}
		\end{align}
		The trace inequality $|\Tr(AB)|\le\|A\|_1\,\|B\|$ then gives
		\begin{equation}
			\begin{aligned}
				\Bigl|
				\Tr_{12}\!\Bigl[
				\bigl(\rho_{12}^{(i)}-\sigma_{12}^{(i)}\bigr)
				&\,e^{ih_{12}t}\,S_1^z\,e^{-ih_{12}t}
				\Bigr]
				\Bigr| \\
				&\le
				\bigl\|\rho_{12}^{(i)}-\sigma_{12}^{(i)}\bigr\|_1
				\,
				\bigl\|e^{ih_{12}t}\,S_1^z\,e^{-ih_{12}t}\bigr\|.
			\end{aligned}
		\end{equation}
		Unitary invariance of the operator norm and $\|S_1^z\|=\tfrac{1}{2}$ yield
		\begin{align}
			\bigl|
			\langle S_1^z(t)\rangle_{\rho_i,0}
			-
			\langle S_1^z(t)\rangle_{\sigma_i,0}
			\bigr|
			\le
			\frac{1}{2}
			\bigl\|\rho_{12}^{(i)}-\sigma_{12}^{(i)}\bigr\|_1.
			\label{eq:app_is_tracebound}
		\end{align}
		
		\paragraph{Conclusion.}
		Substituting Eqs.~\eqref{eq:app_is_opbound} and \eqref{eq:app_is_tracebound} into Eq.~\eqref{eq:app_is_triangle} gives
		\begin{equation}
			\begin{aligned}
				\bigl|
				\langle S_1^z(t)\rangle_{\rho_i,b}
				-\,&
				\langle S_1^z(t)\rangle_{\sigma_i,b}
				\bigr| \\
				&\le
				\frac{1}{2}
				\bigl\|\rho_{12}^{(i)}-\sigma_{12}^{(i)}\bigr\|_1
				+
				|b|(2+|J_z^f|)\,t,
			\end{aligned}
		\end{equation}
		which is Eq.~(\ref*{eq:mbtomb_initial_state_bound}). The bound makes precise that, near the dimer limit, changing the interacting preparation affects the edge dynamics only through the first-dimer reduced state, up to a controlled $O(bt)$ correction generated by weak inter-dimer leakage.
	\end{proof}
	
	\section{Dimer spectrum and the isotropic leakage channel}
	\label{app:isotropic_channel_derivation}
	This appendix gives the explicit spin-basis derivation of the dimer spectrum and of the resonant leakage channel used in Sec.~\ref*{sec:isotropic_resonance}. We use the main-text convention $a=J_1^f=2J_1^{xy,f}$, so the strong final intracell dimer Hamiltonian on sites $1,2$ is
	\begin{align}
		h_{12}
		=
		\frac{a}{2}
		\left(
		\sigma_1^x\sigma_2^x+\sigma_1^y\sigma_2^y
		+J_z^f\sigma_1^z\sigma_2^z
		\right).
		\label{eq:app_iso_h12}
	\end{align}
	The fully polarized states are
	\begin{align}
		|t_+\rangle=|\uparrow\uparrow\rangle,\qquad
		|t_-\rangle=|\downarrow\downarrow\rangle .
		\label{eq:app_iso_tpm}
	\end{align}
	The $XY$ term annihilates these states, while the $ZZ$ term gives eigenvalue $+1$. Hence
	\begin{align}
		E_{t_+}=E_{t_-}=\frac{a}{2}J_z^f .
		\label{eq:app_iso_Etpm}
	\end{align}
	In the zero-magnetization dimer subspace spanned by $|\uparrow\downarrow\rangle$ and $|\downarrow\uparrow\rangle$, the $ZZ$ term gives eigenvalue $-1$ on both basis states, while
	\begin{align}
		\left(\sigma_1^x\sigma_2^x+\sigma_1^y\sigma_2^y\right)
		|\uparrow\downarrow\rangle
		=
		2|\downarrow\uparrow\rangle ,
		\nonumber\\
		\left(\sigma_1^x\sigma_2^x+\sigma_1^y\sigma_2^y\right)
		|\downarrow\uparrow\rangle
		=
		2|\uparrow\downarrow\rangle .
		\label{eq:app_iso_xy_action}
	\end{align}
	Therefore
	\begin{align}
		h_{12}\big|_{S^z_{\rm tot}=0}
		=
		\begin{pmatrix}
			-\frac{a}{2}J_z^f & a\\
			a & -\frac{a}{2}J_z^f
		\end{pmatrix},
		\label{eq:app_iso_matrix}
	\end{align}
	whose eigenstates are
	\begin{align}
		|t_0\rangle
		=
		\frac{|\uparrow\downarrow\rangle+|\downarrow\uparrow\rangle}{\sqrt{2}},
		\qquad
		E_{t_0}
		=
		a-\frac{a}{2}J_z^f ,
		\label{eq:app_iso_Et0}\\
		|s\rangle
		=
		\frac{|\uparrow\downarrow\rangle-|\downarrow\uparrow\rangle}{\sqrt{2}},
		\qquad
		E_s
		=
		-a-\frac{a}{2}J_z^f .
		\label{eq:app_iso_Es}
	\end{align}
	Equations~\eqref{eq:app_iso_Etpm} and~\eqref{eq:app_iso_Et0} show that
	\begin{align}
		E_{t_+}=E_{t_0}=E_{t_-}
		\quad
		\Longleftrightarrow
		\quad
		J_z^f=1 .
		\label{eq:app_iso_triplet_degeneracy}
	\end{align}
	Thus $J_z^f=1$ is the isotropic point, where the dimer Hamiltonian is proportional to $\sigma_1^x\sigma_2^x+\sigma_1^y\sigma_2^y+\sigma_1^z\sigma_2^z$ and the triplet is exactly degenerate. Now consider two neighboring dimers, $(1,2)$ and $(3,4)$, coupled by the weak inter-dimer bond between sites $2$ and $3$. Its $XY$ part is
	\begin{align}
		V_{23}^{xy}
		=
		\frac{b}{2}
		\left(
		\sigma_2^x\sigma_3^x+\sigma_2^y\sigma_3^y
		\right),
		\qquad
		b=J_2^f .
		\label{eq:app_iso_Vxy}
	\end{align}
	This operator flips opposite spins on sites $2,3$. In particular,
	\begin{align}
		V_{23}^{xy}
		|t_+\rangle_{12}|t_-\rangle_{34}
		=
		b|\uparrow_1\downarrow_2\uparrow_3\downarrow_4\rangle .
		\label{eq:app_iso_action_channel}
	\end{align}
	The state on the right has nonzero overlap with
	\begin{align}
		|t_0\rangle_{12}|t_0\rangle_{34}
		=
		\frac{1}{2}
		\left(
		|\uparrow\downarrow\uparrow\downarrow\rangle
		+|\uparrow\downarrow\downarrow\uparrow\rangle
		+|\downarrow\uparrow\uparrow\downarrow\rangle
		+|\downarrow\uparrow\downarrow\uparrow\rangle
		\right),
		\label{eq:app_iso_t0t0_expand}
	\end{align}
	and therefore the inter-dimer $XY$ term connects
	\begin{align}
		|t_+\rangle_{12}|t_-\rangle_{34}
		\longleftrightarrow
		|t_0\rangle_{12}|t_0\rangle_{34}
		\label{eq:app_iso_channel}
	\end{align}
	with a nonzero matrix element. The detuning of this channel is
	\begin{align}
		\Delta_{\rm iso}
		&=
		2E_{t_0}-E_{t_+}-E_{t_-}
		\nonumber\\
		&=
		2\left(a-\frac{a}{2}J_z^f\right)
		-\frac{a}{2}J_z^f
		-\frac{a}{2}J_z^f
		\nonumber\\
		&=
		2a(1-J_z^f).
		\label{eq:app_iso_detuning}
	\end{align}
	Thus the channel is exactly resonant at $J_z^f=1$. Combining Eq.~\eqref{eq:app_iso_detuning} with the resonant limit of first-order perturbation theory in Eq.~\eqref{eq:app_prob_resonant} gives
	\begin{align}
		P_{\beta\leftarrow\alpha}^{(1)}(t)
		=
		\frac{b^2}{4}|W_{\beta\alpha}|^2t^2 ,
		\label{eq:app_iso_resonant_growth}
	\end{align}
	rather than the bounded oscillatory form obtained at fixed nonzero detuning. Thus the isotropic point opens a leakage channel whose transition amplitude adds coherently in time, giving parametrically stronger early-time leakage from the boundary dimer. This is the microscopic origin of the isotropic leakage resonance discussed in the main text.
	
	\section{Free evolution from an interacting initial state}
	\label{app:mbtosp_free_reduction}
	
	This Appendix proves the exact reduction used in Sec.~\ref*{sec:mbtosp}. The final Hamiltonian is quadratic,
	\begin{align}
		H_f=\sum_{m,n=1}^{N}c_m^\dagger h_{mn}^f c_n,
		\label{eq:app_mbtosp_Hf}
	\end{align}
	where $h_f$ is the final single-particle SSH matrix. Define
	\begin{align}
		U_f(t)=e^{-ih_f t}.
		\label{eq:app_mbtosp_U}
	\end{align}
	The Heisenberg equation gives the linear operator evolution
	\begin{align}
		c_j(t)
		=
		e^{iH_f t}c_j e^{-iH_f t}
		=
		\sum_{m=1}^{N}[U_f(t)]_{jm}c_m .
		\label{eq:app_mbtosp_c_evol}
	\end{align}
	Therefore
	\begin{align}
		n_j(t)
		=
		c_j^\dagger(t)c_j(t)
		=
		\sum_{m,n=1}^{N}
		[U_f(t)]_{jm}^*
		[U_f(t)]_{jn}
		c_m^\dagger c_n .
		\label{eq:app_mbtosp_density}
	\end{align}
	Taking the expectation value in an arbitrary initial state $|\Psi_i\rangle$ gives
	\begin{align}
		\langle n_j(t)\rangle
		=
		\sum_{m,n=1}^{N}
		[U_f(t)]_{jm}^*
		(C_i)_{mn}
		[U_f(t)]_{jn},
		\label{eq:app_mbtosp_density_expect}
	\end{align}
	where
	\begin{align}
		(C_i)_{mn}
		=
		\langle\Psi_i|c_m^\dagger c_n|\Psi_i\rangle .
		\label{eq:app_mbtosp_Ci}
	\end{align}
	Here $C_i$ denotes the site-basis one-body density matrix. Later in this Appendix, $\mathcal{C}$ denotes the same one-body density matrix represented in the final eigenmode basis.
	In matrix form, with this convention for $C_i$,
	\begin{align}
		\langle n_j(t)\rangle
		=
		\left[
		U_f^*(t)C_iU_f^T(t)
		\right]_{jj}.
		\label{eq:app_mbtosp_C_evol}
	\end{align}
	No Wick decomposition has been used, so Eq.~\eqref{eq:app_mbtosp_C_evol} is valid even when $|\Psi_i\rangle$ is an interacting many-body ground state. Since $S_j^z=n_j-1/2$, this proves Eq.~(\ref*{eq:mbtosp_exact_C}).
	
	We now derive the final-eigenbasis expression. Let $\{|\mu\rangle\}_{\mu=1}^{N}$ be an orthonormal eigenbasis of $h_f$,
	\begin{align}
		h_f|\mu\rangle=\varepsilon_\mu|\mu\rangle,
		\qquad
		\varphi_\mu(j)=\langle j|\mu\rangle .
		\label{eq:app_mbtosp_eigenbasis}
	\end{align}
	Define the corresponding normal-mode operators by
	\begin{align}
		d_\mu^\dagger=\sum_{j=1}^{N}\varphi_\mu(j)c_j^\dagger,
		\qquad
		d_\mu=\sum_{j=1}^{N}\varphi_\mu^*(j)c_j .
		\label{eq:app_mbtosp_d_def}
	\end{align}
	Then
	\begin{align}
		c_j^\dagger=\sum_{\mu=1}^{N}\varphi_\mu^*(j)d_\mu^\dagger,
		\qquad
		c_j=\sum_{\mu=1}^{N}\varphi_\mu(j)d_\mu,
		\label{eq:app_mbtosp_inverse}
	\end{align}
	and
	\begin{align}
		d_\mu(t)=e^{-i\varepsilon_\mu t}d_\mu .
		\label{eq:app_mbtosp_d_evol}
	\end{align}
	Substituting Eq.~\eqref{eq:app_mbtosp_inverse} into $n_j(t)$ gives
	\begin{align}
		\langle n_j(t)\rangle
		=
		\sum_{\mu,\nu=1}^{N}
		\varphi_\mu^*(j)\varphi_\nu(j)
		e^{i(\varepsilon_\mu-\varepsilon_\nu)t}
		\langle d_\mu^\dagger d_\nu\rangle_i .
		\label{eq:app_mbtosp_spectral}
	\end{align}
	For the interacting initial state, the final-mode representation of the one-body density matrix is
	\begin{align}
		\mathcal{C}_{\mu\nu}^{\rm MB}(J_z^i)
		&=
		\langle\Psi_i(J_z^i)|d_\mu^\dagger d_\nu|\Psi_i(J_z^i)\rangle
		\nonumber\\
		&=
		\sum_{m,n=1}^{N}
		\varphi_\mu(m)
		[C_i^{\rm MB}(J_z^i)]_{mn}
		\varphi_\nu^*(n).
		\label{eq:app_mbtosp_mathcalC}
	\end{align}
	Equation~\eqref{eq:app_mbtosp_spectral} then proves Eq.~(\ref*{eq:mbtosp_spectral_formula}).
	
	The infinite-time average follows by integrating Eq.~\eqref{eq:app_mbtosp_spectral} over $[0,T]$ and taking $T\to\infty$:
	\begin{align}
		\lim_{T\to\infty}
		\frac{1}{T}\int_0^T dt\,
		e^{i(\varepsilon_\mu-\varepsilon_\nu)t}
		=
		\begin{cases}
			1,& \varepsilon_\mu=\varepsilon_\nu,\\
			0,& \varepsilon_\mu\neq\varepsilon_\nu.
		\end{cases}
		\label{eq:app_mbtosp_time_average_phase}
	\end{align}
	This gives Eq.~(\ref*{eq:mbtosp_diag_ensemble}). If the final single-particle spectrum is nondegenerate, only terms with $\mu=\nu$ remain, giving Eq.~(\ref*{eq:mbtosp_diag_non_degenerate}).
	
	Finally, let $C_i^{\rm MB}$ and $C_i^{\rm SP}$ be the site-basis initial one-body density matrices for the MB and SP preparations at the same $\delta_i$. Their difference in the evolved edge magnetization is
	\begin{align}
		\delta S_1^z(t)
		=
		\left[
		U_f^*(t)(C_i^{\rm MB}-C_i^{\rm SP})U_f^T(t)
		\right]_{11}.
		\label{eq:app_mbtosp_deltaS}
	\end{align}
	Define the normalized vector $\eta_1(t)$ by
	\begin{align}
		[\eta_1(t)]_m=[U_f(t)]_{1m},
		\qquad
		\|\eta_1(t)\|=1.
		\label{eq:app_mbtosp_eta}
	\end{align}
	Then Eq.~\eqref{eq:app_mbtosp_deltaS} is equivalently
	\begin{align}
		\delta S_1^z(t)
		=
		\eta_1^\dagger(t)
		(C_i^{\rm MB}-C_i^{\rm SP})
		\eta_1(t).
		\label{eq:app_mbtosp_deltaS_vector}
	\end{align}
	By the definition of the operator norm,
	\begin{align}
		|\delta S_1^z(t)|
		\le
		\|C_i^{\rm MB}-C_i^{\rm SP}\|.
		\label{eq:app_mbtosp_delta_bound}
	\end{align}
	This proves the uniform-in-time bound in Eq.~(\ref*{eq:mbtosp_difference_bound}).

	\section{Numerically exact validation of interacting analytical results}
	\label{app:interacting_benchmarks}
	This appendix benchmarks the interacting analytical statements used in Secs.~\ref*{sec:sptomb}, \ref*{sec:mbtosp}, and \ref*{sec:mbtomb}. For the non-interacting SSH results, numerical validation is already given in Appendix~\ref{app:sp_supplementary}. Here we focus only on the interacting statements that enter the later physical interpretation.
	
	\subsection{Benchmark of the exact free reduction for the MB to SP quench}
	\label{app:mbtosp_benchmark}
	In this subsection we benchmark Proposition~\ref*{prop:mbtosp_exact_reduction}, Corollaries~\ref*{cor:mbtosp_difference} and \ref*{cor:mbtosp_diagonal_ensemble} for quenches from an interacting topological initial state to a non-interacting final Hamiltonian whose ground state lies on the trivial side of the equilibrium phase diagram.
	All calculations are performed in the fixed total zero-magnetization sector, $S_{\rm tot}^z=0$, which is equivalent after the Jordan--Wigner transformation to half-filling. The initial and final dimerizations are fixed to the main-text values $\delta_i=-0.95$ and $\delta_f=+0.95$. For the exact real-time benchmark against full many-body evolution we use a finite chain of length $N=22$, for which the half-filled Hilbert space remains small enough for exact diagonalization while still resolving the edge-localized initial state.
	
	The first test is the exact reduction itself. Figure~\ref{fig:mbtosp_benchmark_real_time}(a) compares the edge magnetization obtained from direct many-body evolution under the free final Hamiltonian with the reduced one-body formula in Eq.~(\ref*{eq:mbtosp_exact_C}). The two curves are visually indistinguishable on the scale of the plot, and the lower panel shows that the absolute difference remains at numerical roundoff level throughout the full time window. This directly verifies that, once $J_z^f=0$, the MB to SP quench is completely determined by the interacting initial one-body density matrix and requires no Gaussian assumption. Figure~\ref{fig:mbtosp_benchmark_real_time}(b) benchmarks the exact comparison identity in Corollary~\ref*{cor:mbtosp_difference}. The direct difference $S_{1,{\rm MB}\to{\rm SP}}^z(t;J_z^i)-S_{1,{\rm SP}\to{\rm SP}}^z(t)$ is compared with the propagated site-1 matrix element $\left[U_f^*(t)\Delta C_i(J_z^i)U_f^T(t)\right]_{11}$, where the subscript $11$ denotes the diagonal matrix element at the edge site $j=1$. Their agreement verifies Eq.~(\ref*{eq:mbtosp_difference_identity}) over the full time interval, while the dotted horizontal lines at $\pm\|\Delta C_i(J_z^i)\|$ confirm the uniform bound in Eq.~(\ref*{eq:mbtosp_difference_bound}). This isolates the origin of the deviation from the SP to SP reference and shows that it is entirely encoded in the interacting preparation through $\Delta C_i(J_z^i)$.
	
	The remaining two benchmarks test the final-Hamiltonian control of the spectral and stationary components. Figure~\ref{fig:mbtosp_benchmark_spectral}(a) shows the Fourier spectra of $S_1^z(t)$ for several values of $J_z^i$ as functions of the Bohr frequency $\omega=\varepsilon_\mu-\varepsilon_\nu$ in units with $\hbar=1$. As the initial interaction is varied, the spectral weights are redistributed, but the peak locations remain fixed by the same final SSH Hamiltonian. The dotted vertical lines do not mark all formal energy differences of the finite system. Instead, they mark only the dominant edge-active frequencies selected from the grouped complex coefficients in Eq.~(\ref*{eq:mbtosp_spectral_formula}), namely those frequencies for which the summed coefficient $\sum_{\mu,\nu:\,\varepsilon_\mu-\varepsilon_\nu=\omega}\varphi_\mu^*(1)\varphi_\nu(1)\mathcal{C}^{\rm MB}_{\mu\nu}(J_z^i)$ remains appreciable after interference at fixed $\omega$ is taken into account. This is the precise frequency-domain content of Eq.~(\ref*{eq:mbtosp_spectral_formula}). Figure~\ref{fig:mbtosp_benchmark_spectral}(b) shows the corresponding running time average together with the free diagonal-ensemble prediction from Corollary~\ref*{cor:mbtosp_diagonal_ensemble}. The running averages approach the predicted stationary values for every interacting initial state considered, confirming that the long-time average is fixed by the free diagonal ensemble of the final Hamiltonian, with the dependence on $J_z^i$ entering only through the initial one-body matrix elements in the final eigenbasis. Taken together, these benchmarks validate the four numerical diagnostics identified in Sec.~\ref*{sec:mbtosp}: the exact real-time reduction to one-body data, the exact and uniformly bounded comparison with the SP to SP reference, the fact that the observable oscillation frequencies are fixed by the final non-interacting SSH Hamiltonian and selected by edge weight together with initial coherence, and the fact that the stationary value is fixed by the corresponding free diagonal ensemble.
	
	\begin{figure*}[t]
		\centering
		\begin{subfigure}[t]{0.48\textwidth}
			\includegraphics[width=\linewidth]{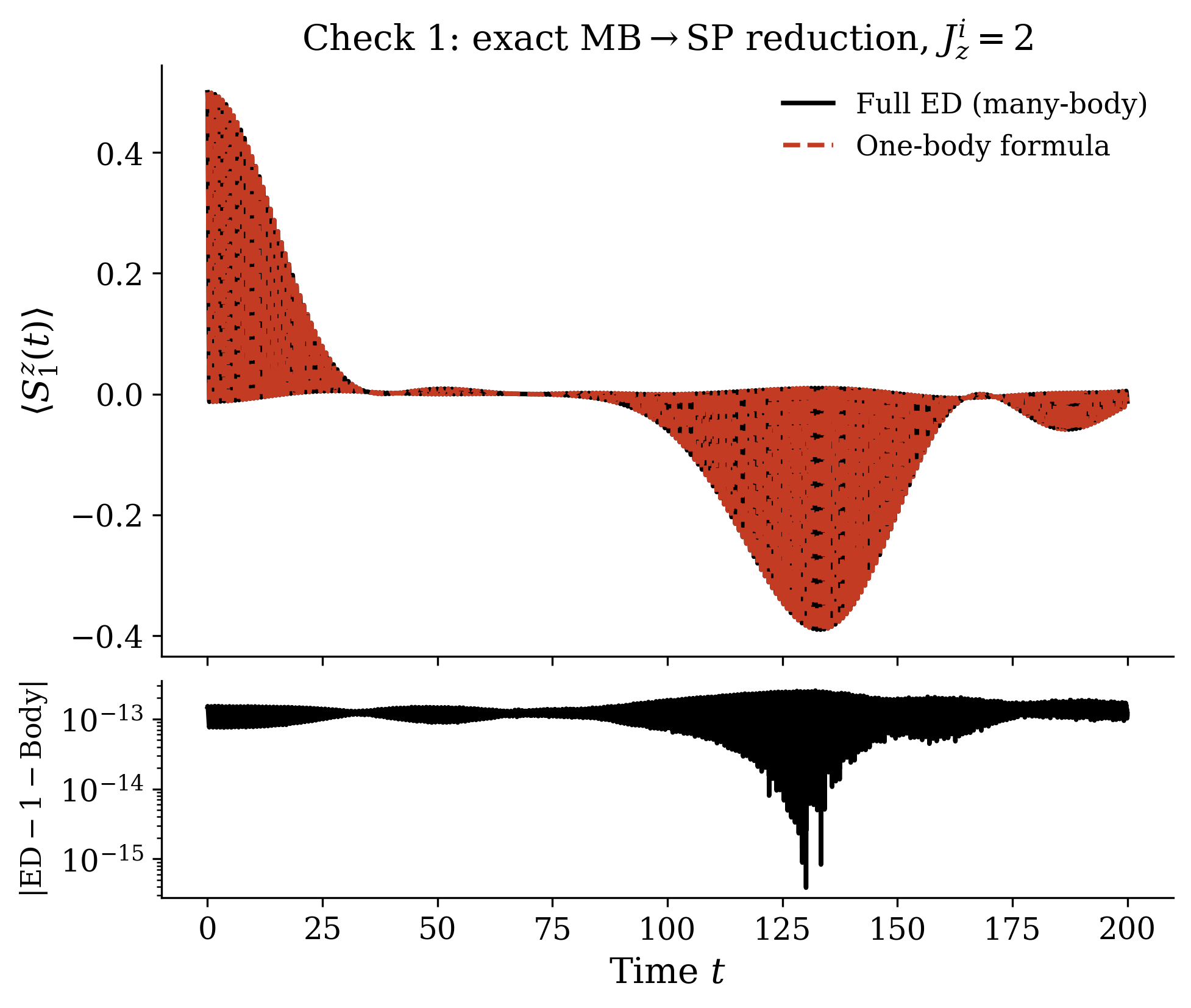}
			\caption{Direct many-body evolution versus the exact one-body reduction in Eq.~(\ref*{eq:mbtosp_exact_C}). The lower panel shows the pointwise absolute error.}
		\end{subfigure}\hfill
		\begin{subfigure}[t]{0.48\textwidth}
			\includegraphics[width=\linewidth]{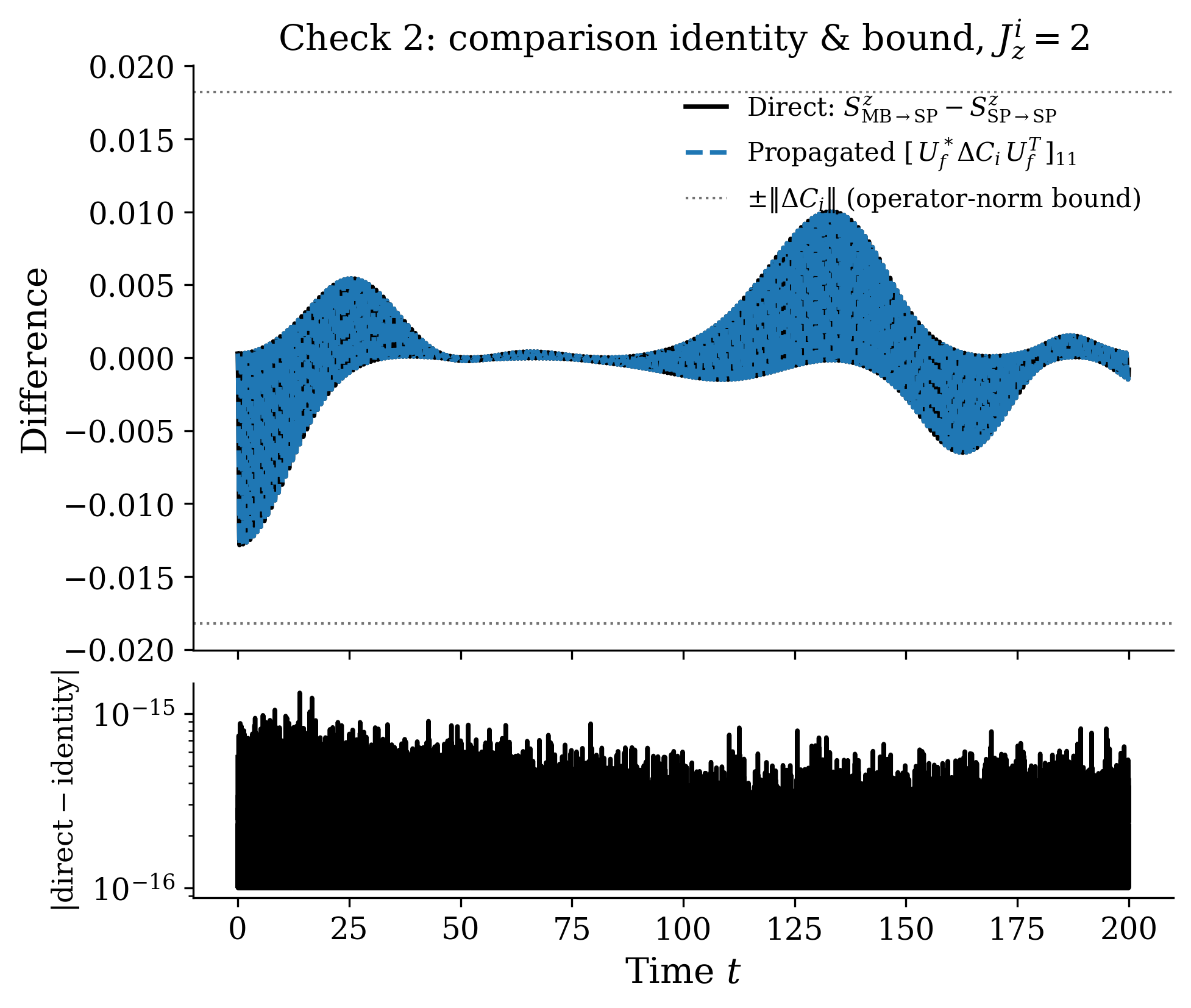}
			\caption{Benchmark of Eq.~(\ref*{eq:mbtosp_difference_identity}) and the uniform bound in Eq.~(\ref*{eq:mbtosp_difference_bound}). The propagated quantity is the site-1 matrix element $\left[U_f^*(t)\Delta C_i(J_z^i)U_f^T(t)\right]_{11}$.}
		\end{subfigure}
		\caption{Real-time validation of the MB to SP reduction for $\delta_i=-0.95$, $\delta_f=+0.95$, and fixed $S_{\rm tot}^z=0$ for $N=22$. Panel (a) compares the exact many-body dynamics with the reduced one-body prediction and shows roundoff-level disagreement only. Panel (b) compares the direct difference from the SP to SP reference with the propagated site-1 matrix element of $\Delta C_i(J_z^i)$ and shows the corresponding operator-norm bound $\pm\|\Delta C_i(J_z^i)\|$.}
		\label{fig:mbtosp_benchmark_real_time}
	\end{figure*}
	
	\begin{figure*}[t]
		\centering
		\begin{subfigure}[t]{0.48\textwidth}
			\includegraphics[width=\linewidth]{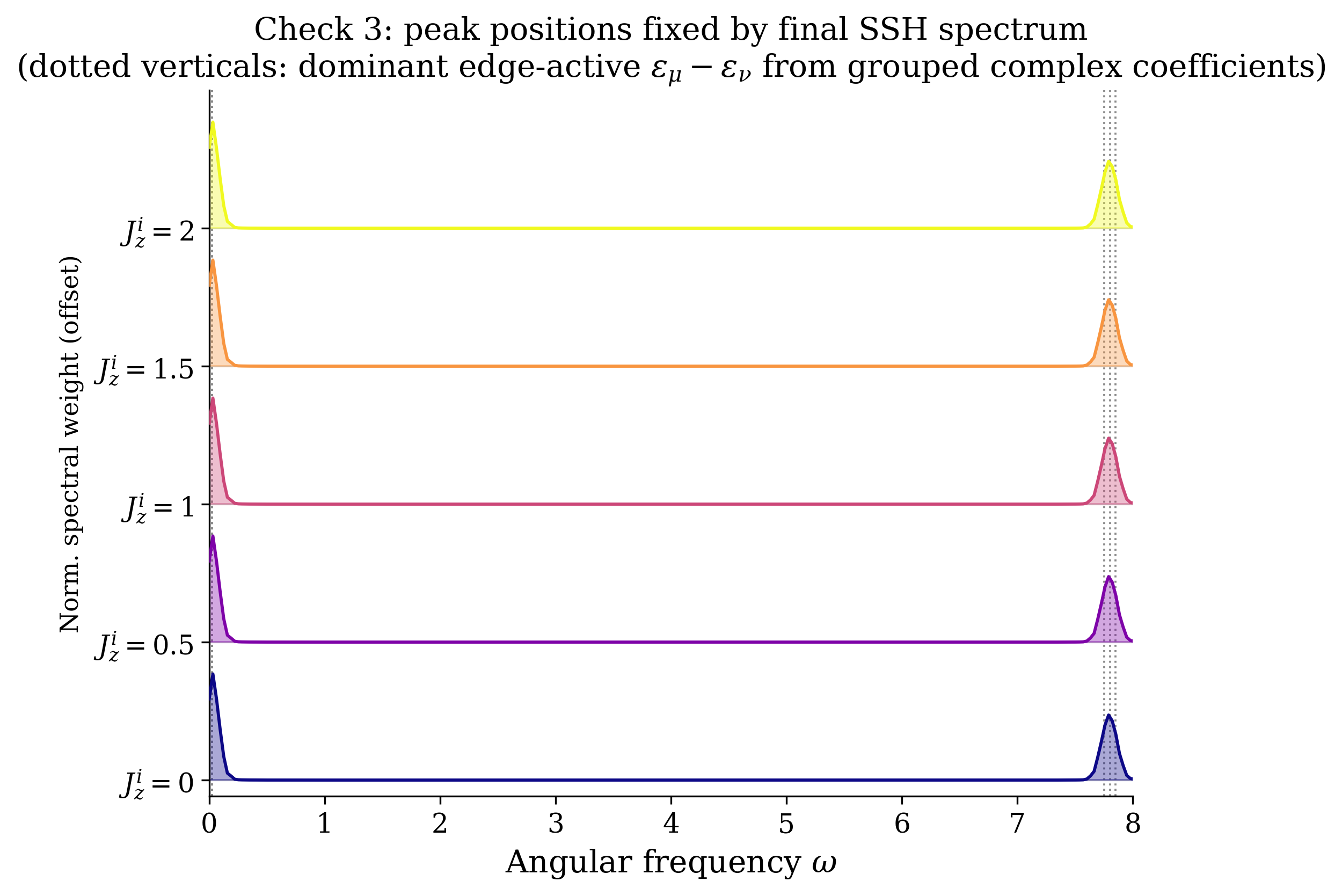}
			\caption{Fourier spectra for several interacting initial states as functions of the frequency $\omega=\varepsilon_\mu-\varepsilon_\nu$ with $\hbar=1$. The dotted vertical lines mark only the dominant edge-active frequencies selected from the grouped complex coefficients in Eq.~(\ref*{eq:mbtosp_spectral_formula}).}
		\end{subfigure}\hfill
		\begin{subfigure}[t]{0.48\textwidth}
			\includegraphics[width=\linewidth]{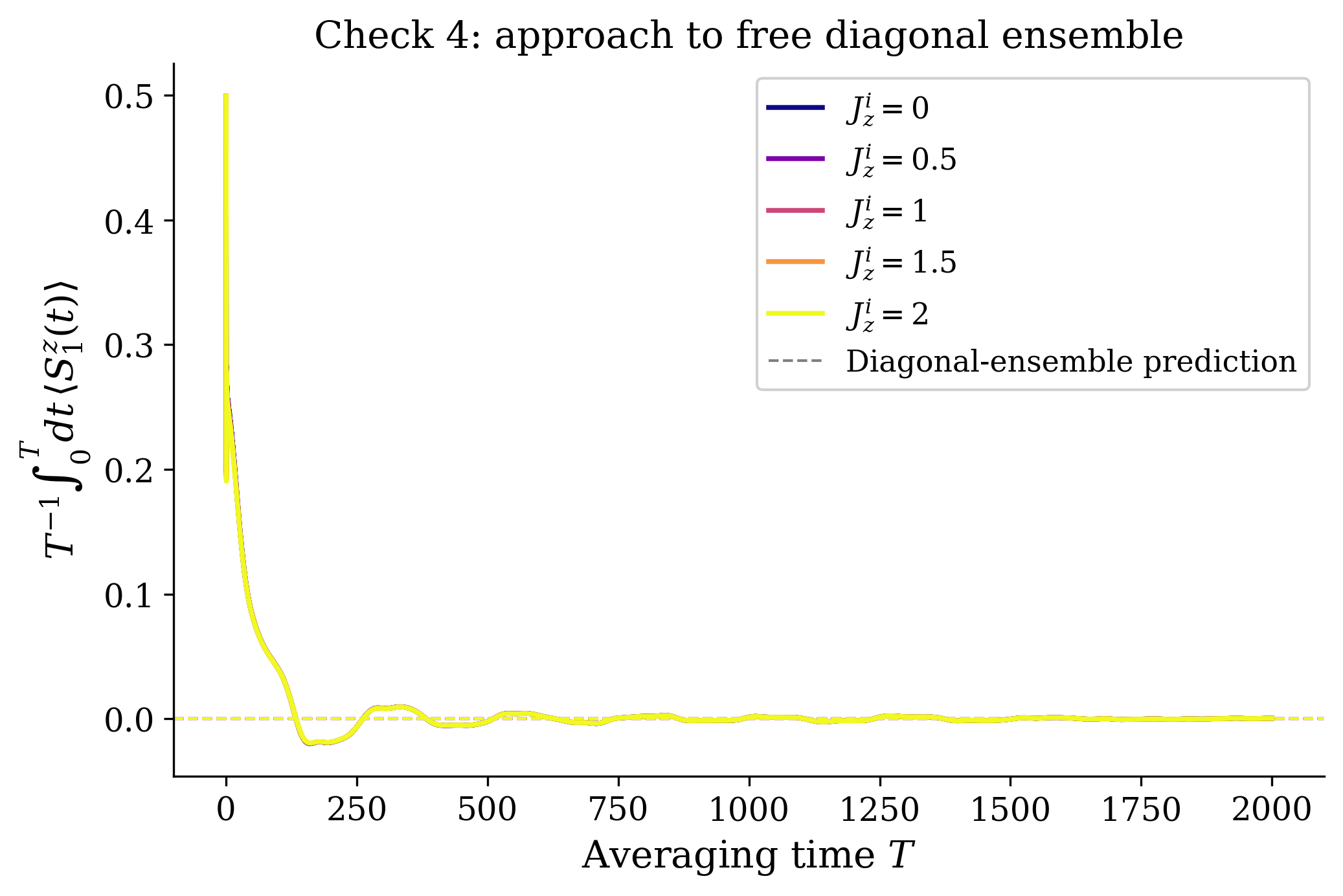}
			\caption{Running averages and free diagonal-ensemble predictions from Eq.~(\ref*{eq:mbtosp_diag_non_degenerate}).}
		\end{subfigure}
		\caption{Spectral and long-time diagnostics for the MB to SP quench for $\delta_i=-0.95$, $\delta_f=+0.95$, and fixed $S_{\rm tot}^z=0$ for $N=22$. Panel (a) shows that varying $J_z^i$ changes spectral weights but not the observable frequency locations fixed by the final SSH Hamiltonian. The guide lines are restricted to the dominant frequencies that remain after edge weighting and interference at fixed $\omega$ are taken into account. Panel (b) shows the approach of the running average to the free diagonal-ensemble value predicted by Corollary~\ref*{cor:mbtosp_diagonal_ensemble}.}
		\label{fig:mbtosp_benchmark_spectral}
	\end{figure*}

	\subsection{Exact numerical checks near the decoupled-dimer limit}
	\label{app:dimer_benchmarks}
	This subsection collects the exact numerical checks used to validate the six analytical statements that underpin the controlled perturbative analysis near the strongly dimerized limit in Secs.~\ref*{sec:sptomb} and~\ref*{sec:mbtomb}.
	All calculations are performed in the fixed total zero-magnetization sector, $S_{\rm tot}^z = 0$, which is equivalent after the Jordan--Wigner transformation to half-filling; the propositions themselves are operator-norm identities that hold for any normalized state, but the sector is fixed by the conservation law of both $H_i$ and $H_f$ and by the physical quench protocol.
	The initial and final dimerizations are fixed to the main-text values $\delta_i = -0.95$ and $\delta_f = +0.95$, giving, via Eq.~\eqref{eq:J_param}, the final SSH one-body hopping amplitudes $a = J_1^f = 2(1+\delta_f) = 3.90$ (intracell, strong bond) and $b = J_2^f = 2(1-\delta_f) = 0.10$ (intercell, weak bond), so that the perturbation parameter is $b/a \approx 0.026$. These are the same $a$ and $b$ that appear in Theorem~\ref*{thm:free_decay}, Corollary~\ref*{cor:tau_scaling}, and Appendix~\ref{app:proof}; they denote SSH one-body hopping amplitudes and satisfy $a > b > 0$ for $\delta_f > 0$.
	The final interaction strength is set to $J_z^f = 1$, the isotropic point discussed in Sec.~\ref*{sec:isotropic_resonance}.
	All five checks are performed by exact diagonalization on a chain of length $L$ in the half-filled sector, for which the Hilbert space has dimension $\binom{L}{L/2}$, and all observables are evaluated to numerical precision.
	\paragraph*{Initial-state preparation and edge-sector selection.}
	The topological ground state in the $S_{\rm tot}^z = 0$ sector of a finite open chain is two-fold degenerate at leading order in the thermodynamic limit: the two nearly zero-energy edge modes can be occupied in two distinct ways, giving rise to two degenerate (or exponentially near-degenerate) many-body ground states with opposite signs of $\langle S_1^z \rangle$ and $\langle S_N^z \rangle$. To select a definite member of this doublet and thereby obtain an initial state with $\langle S_1^z(0) \rangle > 0$ (left edge occupied), a small staggered boundary field
	\begin{equation}
		V_{\rm pin} = -\epsilon\bigl(2S_1^z - 2S_N^z\bigr), \qquad \epsilon = 10^{-6},
		\label{eq:edgepin}
	\end{equation}
	is added to the initial Hamiltonian $H_i$ before diagonalization. The field $V_{\rm pin}$ lifts the edge degeneracy by an amount $2\epsilon \ll J_2^i$, leaving the bulk spectrum and all other properties of the ground state unchanged to within numerical precision. The edge magnetization of the selected initial state satisfies $|\langle S_1^z(0) \rangle| > 0.25$ for all parameter values used here, which is verified as a runtime check; if this condition fails, the initialization is rejected. The boundary pin is applied \emph{only} to the initial Hamiltonians $H_i^{\rm SP}$ (with $J_z^i = 0$) and $H_i^{\rm MB}$ (with $J_z^i \neq 0$) used for state preparation. The evolution Hamiltonians $H_b = H_0 + bW$ and $H_0$, which enter the analytical bounds and are the objects whose dynamics the checks verify, are strictly unpinned ($\epsilon = 0$). All analytical claims and the numerical bounds reported below therefore pertain to the unpinned evolution Hamiltonians, and the pin affects only the sector of the initial state. Figure~\ref{fig:dimer_benchmarks} summarizes the results.
	
	\paragraph*{Check~1: finite-time stability bound.}
	Proposition~\ref*{prop:finite_time_stability_sptomb} states that for all $t \geq 0$,
	\begin{align}
		\bigl|\langle S_{1,b}^z(t) \rangle_\psi
		- \langle S_{1,0}^z(t) \rangle_\psi\bigr|
		\leq \frac{b}{2}\bigl(2 + |J_z^f|\bigr)t,
		\label{eq:check1_bound}
	\end{align}
	where $S_{1,b}^z(t)$ and $S_{1,0}^z(t)$ are the edge magnetizations evolved under $H_b$ and $H_0$, respectively, starting from the SP topological initial state.
	Panel~(a) of Fig.~\ref{fig:dimer_benchmarks} shows the numerically computed difference $|\langle S_{1,b}^z(t)\rangle - \langle S_{1,0}^z(t)\rangle|$ alongside the analytical bound $\frac{b}{2}(2 + J_z^f)t$ over the window $0 \leq t \leq 5$. The difference lies strictly below the bound throughout the plotted interval, confirming Eq.~\eqref{eq:check1_bound} at the chosen parameters. The observed departure remains comfortably below the bound, consistent with the weakness of the inter-dimer coupling in the strongly dimerized regime at $b = J_2^f = 0.10$.
	
	\paragraph*{Check~2: boundary-dimer leakage bound for SP$\to$MB.}
	Proposition~\ref*{prop:boundary_dimer_continuity} gives
	\begin{align}
		\bigl|\langle M_{12}^z(t)\rangle - \langle M_{12}^z(0)\rangle\bigr|
		\leq |b|\,t,
		\label{eq:check2_leakage}
	\end{align}
	where $M_{12}^z = S_1^z + S_2^z$ and the evolution is under $H_b$ starting from the SP topological initial state.
	Panel~(d) of Fig.~\ref{fig:dimer_benchmarks} shows the numerically computed leakage $|\langle M_{12}^z(t)\rangle - \langle M_{12}^z(0)\rangle|$ alongside the bound $bt$.
	The data lie strictly below the bound for all $t$ in the window, confirming Eq.~\eqref{eq:check2_leakage}.
	The $z$-magnetization flows out of the boundary dimer only through the weak inter-dimer XY bond connecting sites $2$ and $3$, and the bound captures precisely this channel.
	
	\paragraph*{Check~3: first-order transition probability.}
	Equation~(\ref*{eq:transition_prob_main}) gives the leading-order leakage probability
	\begin{align}
		\mathcal{P}_{\alpha\to\beta}(t)
		= \frac{b^2|W_{\beta\alpha}|^2}{\Delta E^2}
		\sin^2\!\Bigl(\frac{\Delta E\,t}{2}\Bigr)
		+ \mathcal{O}(b^4),
		\label{eq:check3_tdpt}
	\end{align}
	valid for non-resonant pairs with $\Delta E = E_\beta - E_\alpha \neq 0$ and in the regime $b|W_{\beta\alpha}|t \ll 1$. Panel~(b) of Fig.~\ref{fig:dimer_benchmarks} compares the first-order prediction~\eqref{eq:check3_tdpt} with the exact transition probability obtained from the full time-evolved state $e^{-iH_b t}|\alpha\rangle$, restricted to the validity window $t \leq t_{\rm safe}$ defined by $b|W_{\beta\alpha}|t_{\rm safe} \ll 1$. Here $|\alpha\rangle$ is the ground state of $H_0$ in the $S_{\rm tot}^z = 0$ sector. To specify a representative leakage channel, we diagonalize $H_0$, evaluate the perturbative matrix elements $|W_{n\alpha}|$ from $|\alpha\rangle$ to all other eigenstates $|n\rangle$, order the connected channels by decreasing coupling strength, and choose the first state $|\beta\rangle$ satisfying the non-resonance filter $|E_\beta - E_\alpha| > 0.3$. This numerical cutoff is not part of the analytical claim itself; it is used only to exclude near-resonant channels for which a one-channel first-order comparison would be less transparent. This is important at $J_z^f=1$, where the isotropic dimer spectrum contains exact resonances of the type discussed in Sec.~\ref*{sec:isotropic_resonance}. With this choice, the perturbative formula reproduces the exact oscillation with a small relative error throughout the plotted validity window, confirming that higher-order corrections in $b$ remain negligible for the selected channel at the chosen parameters.

	\paragraph*{Checks~4--5: exact dimer-magnetization conservation and MB$\to$MB leakage.}
	Corollary~\ref*{cor:mbtomb_M12_b0} states that at $b = 0$ exactly, $\langle M_{12}^z(t)\rangle_{b=0} = \langle M_{12}^z(0)\rangle$ for all $t \geq 0$, since $[h_{12}, M_{12}^z] = 0$.
	For $0 < b \ll 1$, Eq.~(\ref*{eq:mbtomb_M12_leakage_bound}) gives the bound in Eq.~\eqref{eq:check2_leakage} for the MB$\to$MB quench, now with the interacting initial state $|\Psi_i(J_z^i)\rangle$ prepared in the $S_{\rm tot}^z = 0$ sector. Panel~(c) of Fig.~\ref{fig:dimer_benchmarks} shows both results simultaneously: the $b = 0$ curve is flat at numerical zero up to machine precision, while the $0 < b \ll 1$ curve exhibits a small but nonzero leakage that remains strictly below the bound $bt$ throughout the plotted interval. This joint presentation verifies the exact corollary at $b = 0$ and the controlled finite-$b$ bound in a single panel, and confirms that Eq.~(\ref*{eq:mbtomb_M12_leakage_bound}) applies to the MB$\to$MB quench as well.
	
	\paragraph*{Check~6: initial-state comparison bound.}
	Proposition~\ref*{prop:mbtomb_initial_state_comparison} bounds the difference between two edge-magnetization signals that are evolved under the same final Hamiltonian $H_b$ but start from distinct initial states. In the present benchmark we instantiate the proposition with $\rho_i=\rho_i(J_z^i=1)$ and $\sigma_i=\rho_i(J_z^i=2)$, namely the topological ground states of the initial Hamiltonian at $J_z^i=1$ and $J_z^i=2$, both prepared with the same pinning procedure described above. Both states are evolved under the same final Hamiltonian $H_b$ with $\delta_f=0.95$, $J_z^f=1$, and $b=J_2^f=2(1-\delta_f)=0.10$, with $N=22$. We compute their first-dimer reduced density matrices $\rho_{12}^{(i)}(1)$ and $\rho_{12}^{(i)}(2)$ by exact diagonalization and evaluate the trace distance $\frac12\|\rho_{12}^{(i)}(1)-\rho_{12}^{(i)}(2)\|_1$. Figure~\ref{fig:initial_state_comparison_bound} then compares the time-dependent difference $|\langle S_1^z(t)\rangle_{1}-\langle S_1^z(t)\rangle_{2}|$ with the analytical bound $\frac12\|\rho_{12}^{(i)}(1)-\rho_{12}^{(i)}(2)\|_1+|b|(2+|J_z^f|)t$. The numerical difference remains strictly below the bound throughout the plotted window, confirming Proposition~\ref*{prop:mbtomb_initial_state_comparison}.

	\begin{figure*}[t]
		\centering
		\includegraphics[width=\linewidth]{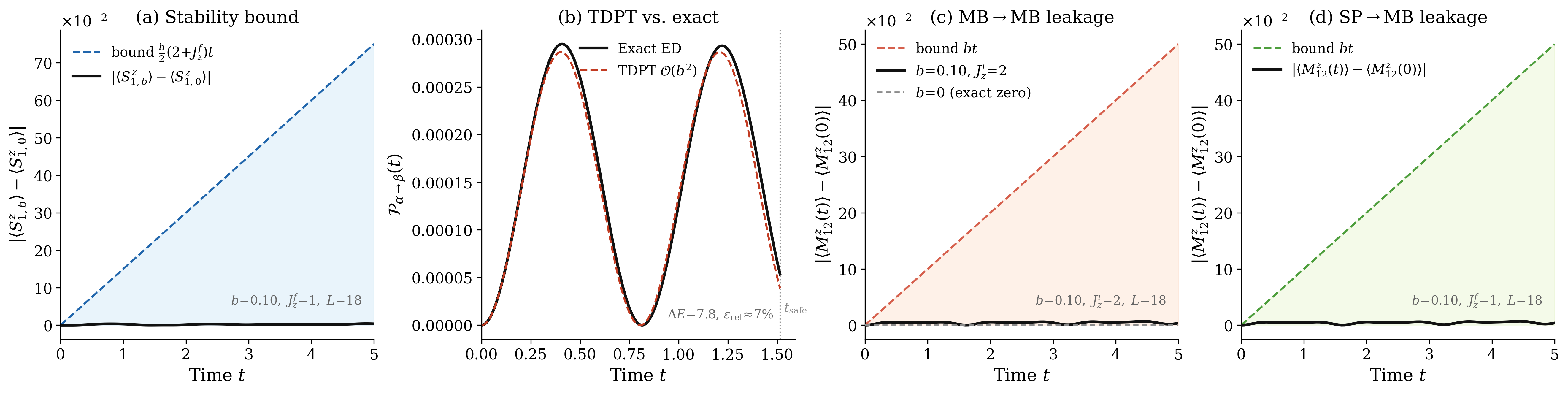}
		\caption{Numerical verification of five of the six analytical statements entering the controlled perturbative analysis near the decoupled-dimer limit, for $\delta_i = -0.95$, $\delta_f = +0.95$, $J_z^f = 1$, and $b = J_2^f = 0.10$, obtained by exact diagonalization in the $S_{\rm tot}^z = 0$ sector on a chain of length $N=18$. The value $J_z^f=1$ is the isotropic point discussed in Sec.~\ref*{sec:isotropic_resonance}. Panel~(a): finite-time stability bound~\eqref{eq:check1_bound} for the SP$\to$MB quench---the exact difference $|\langle S_{1,b}^z\rangle - \langle S_{1,0}^z\rangle|$ (solid) lies strictly below the analytical bound $b(2 + J_z^f)t$ (dashed, shaded) throughout the plotted interval. Panel~(b): first-order perturbative transition probability Eq.~\eqref{eq:check3_tdpt} (dashed) versus exact ED (solid) over the validity window $t \leq t_{\rm safe}$ for a representative nonresonant connected channel selected by the numerical protocol; the agreement remains quantitatively accurate throughout the plotted window. Panel~(c): dimer-magnetization leakage for the MB$\to$MB quench---the $b = 0$ curve is flat at machine precision, verifying exact conservation from Corollary~\ref*{cor:mbtomb_M12_b0}, while the finite-$b$ curve (solid) lies strictly below the bound $bt$ (dashed, shaded), confirming Eq.~(\ref*{eq:mbtomb_M12_leakage_bound}). Panel~(d): leakage bound~\eqref{eq:check2_leakage} for the SP$\to$MB quench---the exact leakage (solid) remains strictly below the bound $bt$ (dashed, shaded) throughout the plotted interval.}
		\label{fig:dimer_benchmarks}
	\end{figure*}
	
	\begin{figure}[htbp]
		\centering
		\includegraphics[width=\linewidth]{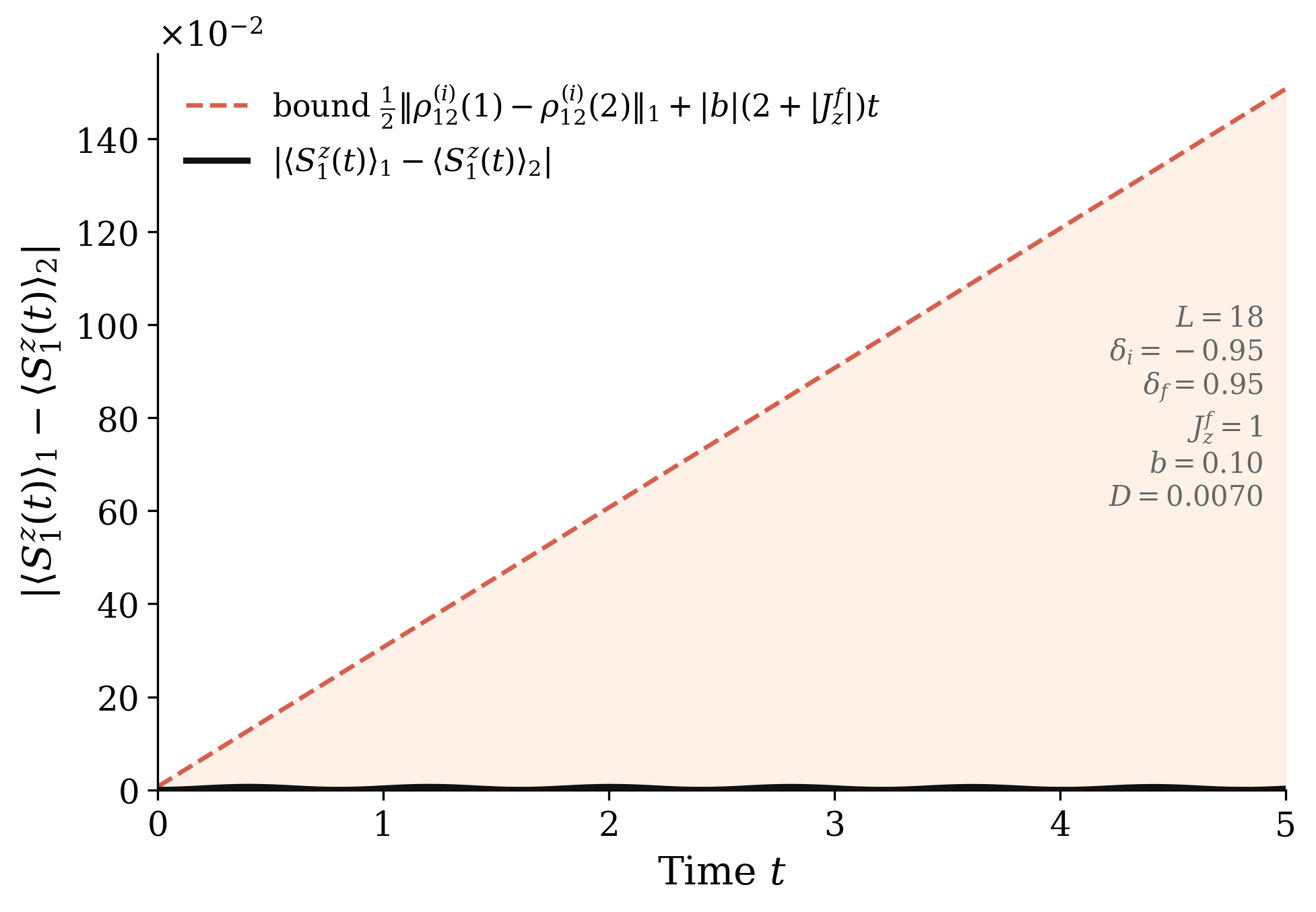}
		\caption{Verification of Proposition~\ref*{prop:mbtomb_initial_state_comparison}. The solid curve shows the absolute difference of the left-edge magnetization for two quenches that start from the topological initial ground states with $J_z^i=1$ and $J_z^i=2$, both evolved under the same final Hamiltonian $H_b$ with $N=22$. The parameters are $\delta_i=-0.95$, $\delta_f=0.95$, $J_z^f=1$, and $b=J_2^f=2(1-\delta_f)=0.10$, so the final Hamiltonian is at the isotropic point discussed in Sec.~\ref*{sec:isotropic_resonance}. The dashed curve shows the analytical bound $D+|b|(2+|J_z^f|)t$, where $D=\frac12\|\rho_{12}^{(i)}(1)-\rho_{12}^{(i)}(2)\|_1$. The numerical difference lies everywhere below the bound.}
		\label{fig:initial_state_comparison_bound}
	\end{figure}

\end{document}